\documentclass[onecolumn,secnumarabic,nobibnotes,nofootinbib,superscriptaddress,aps]{revtex4-2}
\usepackage{color, xcolor, colortbl}
\usepackage{graphicx}
\usepackage{geometry}
\usepackage{amsthm,amsmath,amssymb,amsfonts}
\usepackage{dcolumn}
\usepackage{url}
\usepackage{epstopdf}
\usepackage{enumitem}
\usepackage{algorithm}
\usepackage{algpseudocode}
\usepackage{bm}
\usepackage[caption=false]{subfig}
\usepackage{appendix}
\usepackage{multirow}
\usepackage{braket}
\usepackage[english]{babel}
\usepackage{hyperref}
\usepackage[capitalize]{cleveref}
\usepackage{xpatch}
\usepackage{tikz}
\usepackage{adjustbox}
\usepackage{xspace}
\usetikzlibrary{quantikz}

\newcommand{\polylog}{\operatorname{poly}\log}

\DeclareMathOperator*{\argmin}{arg\,min}

\newcommand{\wt}[1]{\widetilde{#1}}

\newcommand{\innerp}[2]{\left\langle #1 \vert #2 \right\rangle}
\newcommand{\abs}[1]{\left\lvert#1\right\rvert}
\newcommand{\norm}[1]{\left\lVert#1\right\rVert}

\newcommand{\Or}{\mathcal{O}}

\newtheorem{thm}{\protect\theoremname}
\newtheorem{lem}[thm]{\protect\lemmaname}

\newtheorem{prop}[thm]{\protect\propositionname}
\newtheorem{cor}[thm]{\protect\corollaryname}

\providecommand{\definitionname}{Definition}
\providecommand{\assumptionname}{Assumption}
\providecommand{\corollaryname}{Corollary}
\providecommand{\lemmaname}{Lemma}
\providecommand{\propositionname}{Proposition}
\providecommand{\remarkname}{Remark}
\providecommand{\theoremname}{Theorem}
\usepackage{bbm}

\makeatletter
\newenvironment{breakablealgorithm}
  {
   \begin{center}
     \refstepcounter{algorithm}
     \hrule height.8pt depth0pt \kern2pt
     \renewcommand{\caption}[2][\relax]{
       {\raggedright\textbf{\fname@algorithm~\thealgorithm} ##2\par}%
       \ifx\relax##1\relax 
         \addcontentsline{loa}{algorithm}{\protect\numberline{\thealgorithm}##2}%
       \else 
         \addcontentsline{loa}{algorithm}{\protect\numberline{\thealgorithm}##1}%
       \fi
       \kern2pt\hrule\kern2pt
     }
  }{
     \kern2pt\hrule\relax
   \end{center}
  }
\makeatother

%

\usetikzlibrary{fit}
\tikzset{%
  highlight/.style={rectangle,rounded corners,fill=blue!15,draw,fill opacity=0.3,thick,inner sep=0pt}
}

%


\newcommand{\DeptMath}{Department of Mathematics, University of California, Berkeley, CA 94720, USA}
\newcommand{\LBLMath}{Applied Mathematics and Computational Research Division, Lawrence Berkeley National Laboratory, Berkeley, CA 94720, USA}

\newcommand{\CIQC}{Challenge Institute of Quantum Computation, University
of California, Berkeley, CA 94720, USA}
\newcommand{\REV}[1]{\textcolor{black}{ #1}}
\begin{document}

\title{Even shorter quantum circuit for phase estimation on early fault-tolerant quantum computers with applications to ground-state energy estimation}
\author{Zhiyan Ding}
\email{zding.m@berkeley.edu}
\affiliation{\DeptMath}
\author{Lin Lin}
\email{linlin@math.berkeley.edu}
\affiliation{\DeptMath}
\affiliation{\LBLMath}
\affiliation{\CIQC}
\date{\today}

\begin{abstract}
We develop a phase estimation method with a distinct feature: its maximal runtime (which determines the circuit depth) is $\delta/\epsilon$, where $\epsilon$ is the target precision, and the preconstant $\delta$ can be arbitrarily close to $0$ as the initial state approaches the target eigenstate. The total cost of the algorithm  satisfies the Heisenberg-limited scaling $\widetilde{\mathcal{O}}(\epsilon^{-1})$.
As a result, our algorithm may significantly reduce the circuit depth for performing phase estimation tasks on early fault-tolerant quantum computers. The key technique is a simple subroutine called quantum complex exponential least squares (QCELS). Our algorithm can be readily applied to reduce the circuit depth for estimating the ground-state energy of a quantum Hamiltonian, when the overlap between the initial state and the ground state is large. If this initial overlap is small, we can combine our method with the Fourier filtering method developed in [Lin, Tong, PRX Quantum 3, 010318, 2022], and the resulting algorithm provably reduces the circuit depth in the presence of a large relative overlap compared to $\epsilon$. The relative overlap condition is similar to a spectral gap assumption, but it is aware of the information in the initial state and is therefore applicable to certain Hamiltonians with small spectral gaps. We observe that the circuit depth can be reduced by around two orders of magnitude in numerical experiments under various settings.
\end{abstract}

\maketitle

\tableofcontents

\vspace{2em}
\section{Introduction}\label{sec:intro}

Phase estimation is one of the most important quantum primitives.
The problem of phase estimation can be equivalently stated as estimating the eigenvalue of a quantum Hamiltonian $H$, under the assumption that we can query $H$ via the Hamiltonian evolution operator $U=e^{-i \tau H}$ for some real number $\tau$. There are two important performance metrics for the phase estimation:  the maximal runtime denoted by $T_{\max}$, 
and the total runtime $T_{\mathrm{total}}$, which is the sum of the runtime multiplied by the number of repetitions from each circuit in the algorithm.
$T_{\max}$ and $T_{\mathrm{total}}$ approximately measures the circuit depth and the total cost of the algorithm, respectively, in a way that is independent of the details in implementing $U$.
If we are also given an eigenvector $\ket{\psi}$ associated with an eigenvalue $e^{-i \tau\lambda}$, the Hadamard test is arguably the simplest algorithm for estimating the phase $\lambda\in[-\pi/\tau,\pi/\tau)$. It uses only one ancilla qubit and a single query to $U$ controlled by the ancilla qubit, i.e., $T_{\max}=\tau$. This makes the Hadamard test ideally suited for early fault-tolerant quantum devices, which is expected to have a very limited number of logical qubits and may have difficulty in handling circuit beyond a certain maximal depth. 
The Hadamard test has many drawbacks too: it requires $\ket{\psi}$ to be an exact eigenstate, which is a stringent condition that cannot be satisfied in most scenarios. It also requires $N_s=\Or(\epsilon^{-2})$ repetitions to estimate $\lambda$ to precision $\epsilon$, and hence the total runtime is $\Or(\epsilon^{-2})$. 

Both problems can be addressed by the quantum phase estimation (QPE) and its many variants~\cite{KitaevShenVyalyi2002,PhysRevA.75.012328,NagajWocjanZhang2009,PhysRevLett.103.220502,BerryHiggins2009,HigginsBerryEtAl2007}.  Generically, estimating the phase to $\epsilon$ accuracy with high success probability requires $T_{\max}$ to be at least $\pi/\epsilon$ for QPE~\cite[Section 5.2.1]{NielsenChuang2000}
\footnote{\REV{The analysis in~\cite[Section 5.2.1]{NielsenChuang2000} assumes the phase $\lambda\in[0,1)$. For a general $H$ whose eignevalues are contained in $[-\|H\|_2,\|H\|_2)$, we first define a scaled Hamiltonian $\widetilde{H}=\frac{1}{2}(H/\norm{H}_2+I)$, and apply QPE
with a forward Hamiltonian evolution (i.e., applying $e^{-i t \wt{H}}$ with $t>0$). In order to estimate $\lambda$ to precision $\epsilon$, we need to estimate $\wt{\lambda}=\frac12(\lambda/\norm{H}_2+1)$ to precision $\wt{\epsilon}=\epsilon/(2\norm{H}_2)$ with high success probability, and the scaled maximal evolution time satisfies $\wt{T}_{\max}\geq 2\pi/\wt{\epsilon}$. Therefore the unscaled maximal evolution time (with respect to $U=e^{-itH}$) is $T_{\max}=\wt{T}_{\max}/(2\norm{H}_2)=2\pi/\epsilon$. This analysis also implies that if we use QPE with both forward and backward Hamiltonian evolution (i.e., applying $U=e^{-i t H}$ with $t\in\mathbb{R}$), then the maximal evolution time can be halved and $T_{\max}\geq \pi/\epsilon$.}}.
Additionally, the total runtime of QPE is $\Or(\epsilon^{-1})$ and achieves the Heisenberg-limited scaling~\cite{Gio2011,Zwi2010,Zwi2012}, which is the optimal scaling permitted by quantum mechanics. The standard version of QPE
(see e.g.,~\cite[Chapter 5]{NielsenChuang2000}) uses at least $\log_2
(\pi(\tau\epsilon)^{-1})$ ancilla qubits and is not suitable for early
fault-tolerant devices, but the semi-classical version of
QPE~\cite{GriffithsNiu1996,BerryHiggins2009,HigginsBerryEtAl2007} can achieve the same task and uses only one ancilla qubit.

To our knowledge, in all existing works on QPE satisfying Heisenberg-limited scaling, the maximal runtime  $\pi/\epsilon$ is non-negotiable, in the sense that the preconstant in front of $\epsilon^{-1}$ cannot be significantly reduced in general. This is because QPE type of methods construct, directly or indirectly, a filtering function that transitions from $1$ to $0$ on an interval of width $\epsilon$. This can be a severe limitation in practice, since estimating $\lambda$ to precision $0.001$ means that $U$ needs to be coherently queried for approximately $3000$ times in the quantum circuit (assuming $\tau=1$). It is therefore desirable to have a phase estimation method that satisfies the following properties:
\begin{itemize}
    \item[(1)] Allow $\ket{\psi}$ to be an inexact eigenstate with one ancilla qubit.
    \item[(2)] Maintain the Heisenberg-limited scaling:  To estimate $\lambda$ to precision $\epsilon$ with probability $1-\eta$, the total cost is $\Or(\epsilon^{-1}\polylog(\epsilon^{-1}\eta^{-1}))$;
    \item[(3)] Reduce the circuit depth: the maximal runtime can be (much) lower than $\pi/\epsilon$, especially when $\ket{\psi}$ is close to be an exact eigenstate of $U$.
\end{itemize}

When $\ket{\psi}$ is an exact eigenstate, the maximal runtime in QPE type methods may be reduced by means of a tradeoff between the circuit depth and the number of repetitions. However, if the initial state is not an exact eigenstate, this strategy is no longer directly applicable. In this paper, we introduce algorithms that can satisfy the properties (2) and (3), without assuming that $\ket{\psi}$ is an exact eigenstate.

\begin{figure}
\centering
\includegraphics[height=0.25\textheight]{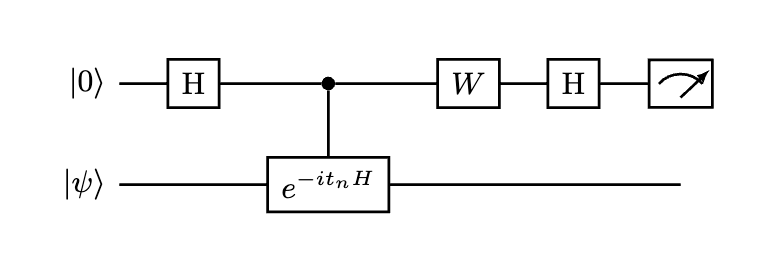}
\caption{Quantum circuit used for collecting the input data. $\mathrm{H}$ is the Hadamard gate, $t_n=n\tau$. Choosing $W=I$ or $W=S^\dagger$ ($S$ is the phase gate) allows us to estimate the real or the imaginary part of $\braket{\psi|\exp(-it_nH)|\psi}$.}
\label{fig:qc}
\end{figure}

\subsection{Main idea}\label{sec:main_idea}

To achieve this, our quantum circuit (\cref{fig:qc}) is the same as the circuit used in the Hadamard test but replaces $U=e^{-i\tau H}$ by $U^n=e^{-i n \tau H}$ for a sequence of integers $n$. This is a simple circuit, uses only one ancilla qubit, and is suitable for early fault-tolerant quantum computers. The most challenging component may be the implementation of the controlled time evolution. Under additional assumptions, the controlled time evolution for certain unitaries may be replaced by uncontrolled time evolution (see e.g.~\cite{Huggins2020,LinTong2022,dong2022ground,LuBanulsCirac2020algorithms,Brien_2021}).

Let $t_n=n\tau$ for $n=0,\ldots,N-1$, and for simplicity we refer to $T_{\max}:=N\tau$ as the maximal running time (the actual maximal running time is $(N-1)\tau$). The circuit provides an estimate of the value of $\braket{\psi|e^{-i t_n H}|\psi}$ by measuring the success probability of the first qubit. Repeated measurements at different $n$ provides a (complex) time series \begin{equation}
\left\{\left(t_n,Z_n\right)\right\}^{N-1}_{n=0}\,,
\label{eqn:input_data}
\end{equation}
where $Z_n$ is a complex-valued random variable such that $\mathbb{E} (Z_n)=\braket{\psi|\exp(-it_nH)|\psi}$. In the intuitive analysis, we may assume $Z_n\approx \braket{\psi|\exp(-it_nH)|\psi}$. We give the detailed construction of $Z_n$ in Section \ref{sec:mqc}.

Without loss of generality, let us denote the target eigenstate by $\ket{\psi_0}$, the target eigenvalue by $\lambda_0$ (in this context, $\lambda_0$ does not need to be the smallest eigenvalue of $H$), and the overlap between the initial state $\ket{\psi}$ and the target eigenstate by $p_0=\abs{\braket{\psi|\psi_0}}^2$. We also assume $\lambda_0\in [-\pi,\pi)$.
If $p_0=1$ (i.e., $\ket{\psi}$ is the target eigenstate) and the number of samples for each $n$ is sufficiently large, we have $Z_n\approx e^{-i t_n \lambda_0}$, which is an exponential function.  If $p_0<1$, we may still fit the input data provided by \cref{eqn:input_data} using a complex exponential $r\exp(-it_n\theta)$, where $r\in\mathbb{C},\theta\in\mathbb{R}$. 

A main step of our method is a subroutine that solves the following nonlinear least squares problem:
\begin{equation}\label{eq:loss_intro}
    (r^*,\theta^*)=\mathrm{argmin}_{r\in\mathbb{C},\theta\in\mathbb{R}} L(r,\theta), \quad
    L(r,\theta)=\frac{1}{N}\sum^{N-1}_{n=0}\left|Z_n-r\exp(-it_n\theta )\right|^2,
\end{equation}
and $\theta^*$ gives the approximation to the phase $\lambda_0$.
Note that once we obtain the data set from the quantum circuit in \cref{fig:qc},
minimizing $L(r,\theta)$ only requires classical computation. This subroutine
is dubbed \textit{quantum complex exponential least squares} (QCELS). The minimization problem can be efficiently solved on classical computers (see \cref{sec:mse}).
An illustrative example of QCELS using the spectrum from TFIM model (see \cref{sec:Ising} for detail) is shown in Fig
\ref{fig:fitting}. In the graph, the initial overlap $p_0=0.8$, the scatter points are the data points $Z_n$
and the curve represents the fitting function $r\exp(-it\theta)$.

\begin{figure}
     \subfloat[\label{fig:fitting}]{
         \centering
         \includegraphics[height=0.25\textheight]{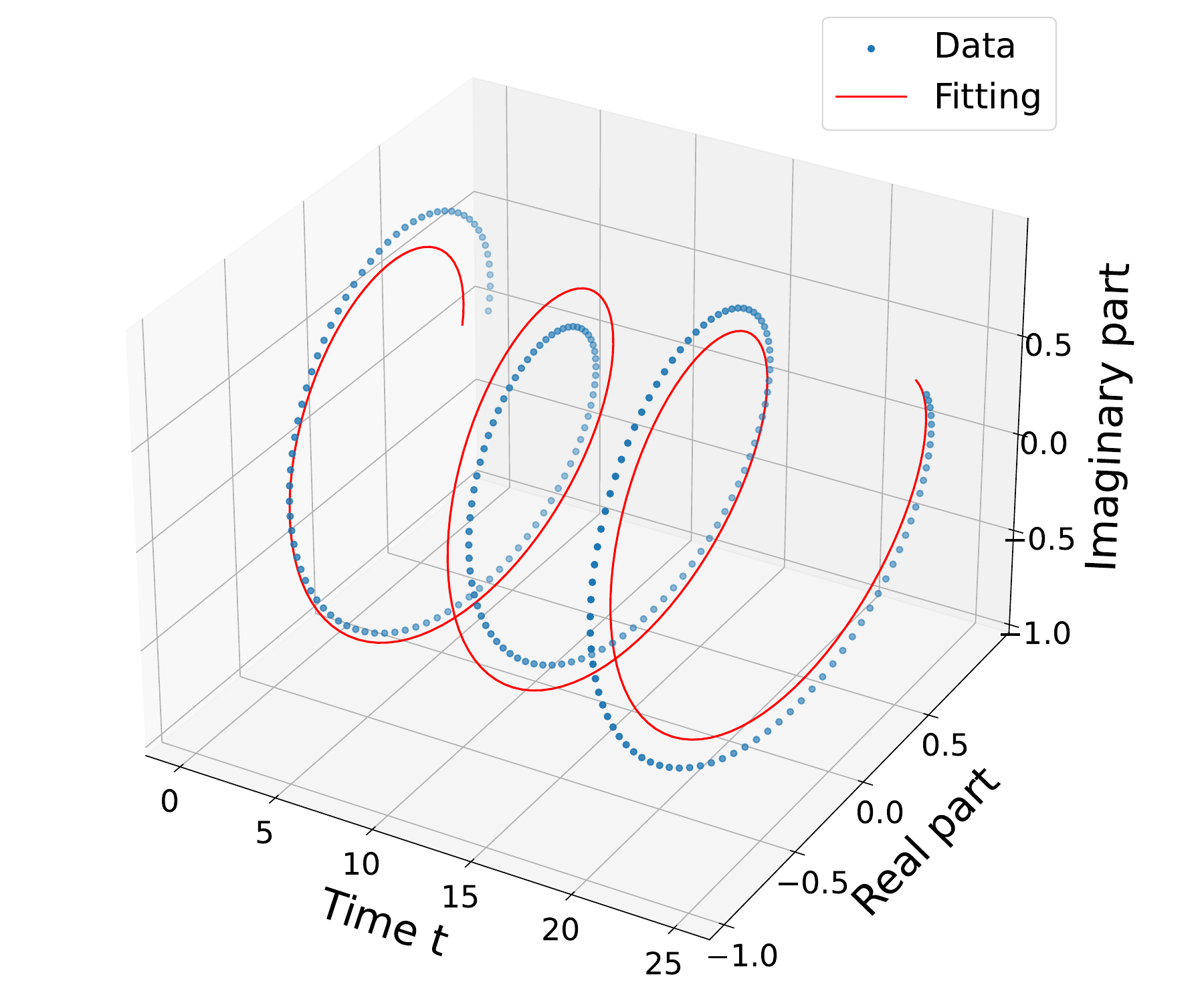}
     }
     \hfill
     \subfloat[\label{fig:theory}]{
         \centering
         \includegraphics[height=0.25\textheight]{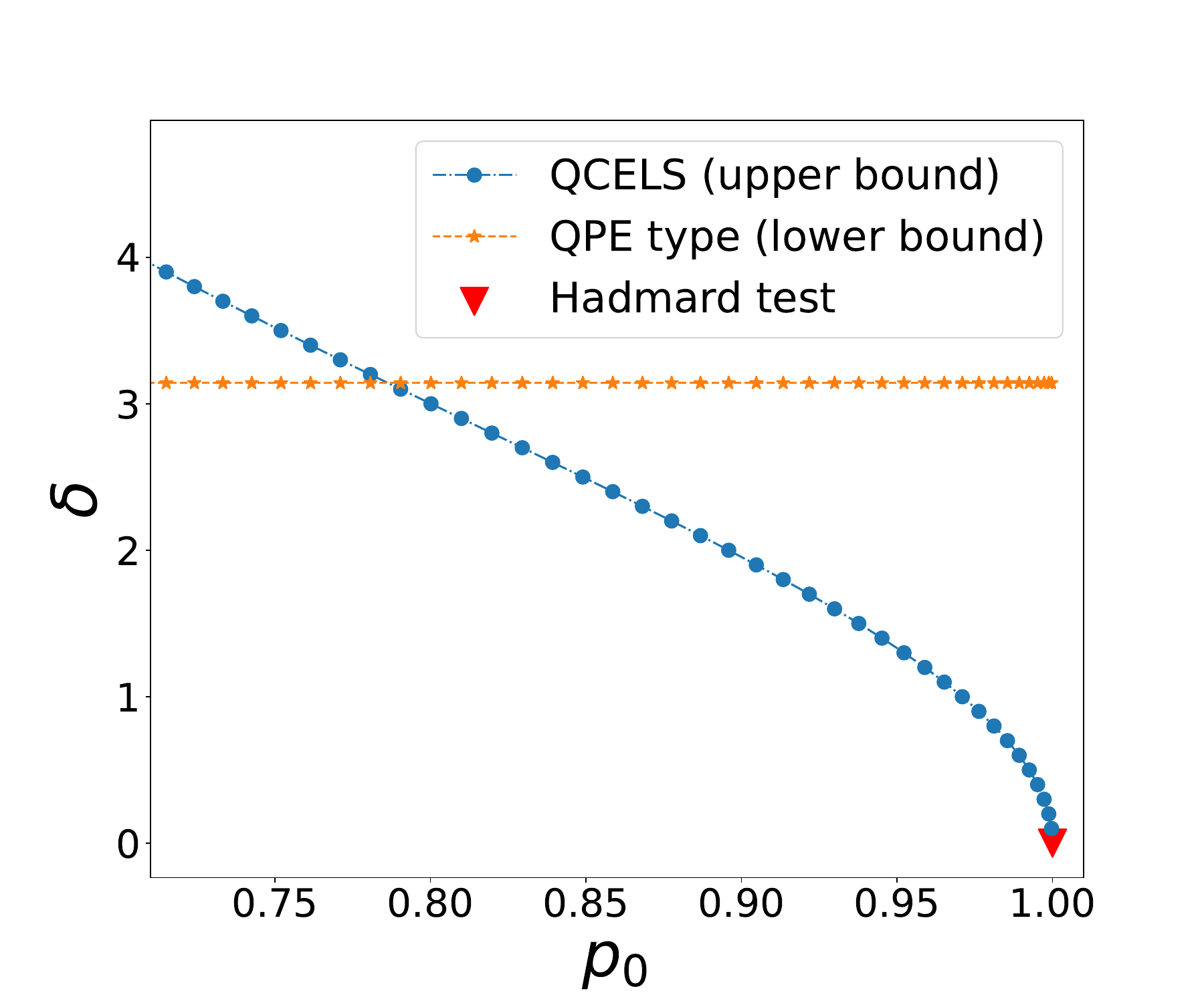}
     }
     \caption{(a) Fitting the noisy input data with $p_0=0.8$ using a complex exponential function. The mismatch between the data and the best fit reflects that the input data is more complex than a single complex exponential function. Despite this mismatch even in the absence of any Monte Carlo sampling error, QCELS is able to accurately estimate the phase under proper conditions. (b) Comparison of the theoretical upper bound of $\delta=T_{\max} \epsilon$  for QCELS ($T_{\max}$ is the maximal runtime) with the lower bound of $\delta$ for QPE type methods when $p_0\ge 0.71$. The Hadamard test is only applicable when $p_0=1$ and in this case $\delta$ can be chosen to be arbitrarily small.   }\label{fig:2}
\end{figure}

If $p_0=1$ and $N=2$, the behavior of QCELS is very similar to the Hadamard test, and
we can estimate $\theta\approx \lambda_0 \bmod [-\pi/\tau,\pi/\tau)$ to any
precision and the circuit depth is independent of $\epsilon$. However, there are some immediate issues with this approach:

\begin{itemize}

\item[(a)] It is not so clear whether \cref{eq:loss_intro} can estimate $\lambda_0$ accurately with a short-depth circuit, even if $p_0$ is close but not equal to $1$. Moreover, this method clearly fails if there exists some eigenstate $\ket{\psi_i}$ such that $|\braket{\psi|\psi_i}|^2>p_0$.
So $p_0$ should be larger than some minimal threshold.

\item[(b)] For each $n$, the number of measurements is at least $1$. If $N=\Theta(\epsilon^{-1})$, the total runtime is at least $N(N-1)/2=\Theta(\epsilon^{-2})$, and the method does not satisfy the Heisenberg-limited scaling.

\end{itemize}

Our main body of work is to address these issues, and to develop an efficient algorithm for post-processing the input time series generated by quantum computers. 

The answer to Question (a) is given by \cref{thm:qcels_simple}. Roughly speaking, when $p_0>0.71$, we may choose a proper $\delta>0$ so that the maximal runtime is $T_{\max}=N\tau=\delta/\epsilon$, and the global minima to \cref{eq:loss_intro} can estimate $\lambda_0$ to precision $\delta/T_{\max}=\epsilon$ ($\bmod\ [-\pi/\tau,\pi/\tau)$). Moreover, when $Z_n$ is sufficiently concentrated around
its expectation and as $p_0\to 1$, $\delta$ can be chosen to be arbitrarily
small. Therefore the maximal runtime (and the circuit depth) can be continuously reduced as the input state approaches an exact eigenstate, and QCELS maintains the desirable behavior of the Hadamard test when $p_0<1$.

To address Question (b), we can start from a small value of $\tau$ which allows us to estimate $\lambda_0$ to precision $\delta/(N\tau)$. If $\delta,N$ are fixed, then this estimate can only reach limited precision. Similar to the binary search strategy for refining the estimate of the eigenvalues~\cite{LinTong2020a,LinTong2022,dong2022ground}, we can refine this estimate by increasing the maximal runtime. Specifically,
we can multiply $\tau$ by constant and repeat the process with fixed $\delta,
N$. We only need to repeat the
process for $J=\log_2(\delta/(N\epsilon))$ times. At the last step, we have $\tau_J=\delta/(N\epsilon)$ and the maximal circuit depth is $T_{\max}=N\tau_J=\delta/\epsilon$.   This procedure is called the \textit{multi-level QCELS} and is described by
\cref{alg:main}. According to \cref{thm:main}, when $p_0\approx 1$, we may choose $\delta=\Theta(\sqrt{1-p_0})\ll 1$ and estimate $\lambda_0$ to precision $\epsilon$. The maximal runtime is $T_{\max}=N\tau=\delta/\epsilon$, and the total cost is
$\widetilde{\Or}(\delta^{-1}\epsilon^{-1})$. 
Both \cref{thm:main} and numerical results verify that \cref{alg:main}
satisfies the desired properties (1)(2)(3) at the beginning of the paper.
In particular, the circuit depth can be continuously adjusted by the parameter
$\delta$ (see a comparison of the theoretical circuit depth of different methods
in Fig \ref{fig:theory}.), and the algorithm satisfies the Heisenberg-limited
scaling for all choices of
$\delta$ within the allowed range determined by $p_0$ and the noise level due to measurements. 

\subsection{Ground-state energy estimation}

As an application, we consider the problem of estimating the ground-state
energy (the algebraically smallest eigenvalue) of an $n$-qubit quantum
Hamiltonian $H$. Here, we assume ground-state energy
$\lambda_0\in [-\pi,\pi)$, and $\ket{\psi_0}$ is the associated eigenvector. In
the absence of additional assumptions, the task can be QMA-hard
~\cite{KitaevShenVyalyi2002,KempaKitaev2005,Aha2009}. Hence we assume that an
initial quantum state $\ket{\psi}=U_I\ket{0^n}$ can be prepared via a unitary
$U_I$, and the overlap $p_0=\abs{\braket{\psi|\psi_0}}^2>0$.  If $p_0\ge 0.71$,
we can readily apply \cref{thm:main} to estimate $\lambda_0$ using a
short-depth circuit. 

If $p_0$ is small, we propose an algorithm combining the multi-level QCELS
algorithm with the Fourier filtering technique developed in
Ref.~\cite{LinTong2022} to estimate $\lambda_0$.  To demonstrate the
efficiency of the algorithm, we assume that there is an interval $I$ containing $\lambda_0$, a slightly larger interval $I'\supset I$ with a positive distance $D$ separating $I$ and $(I')^c$ (see \cref{eqn:def_D}). We introduce a concept called the \emph{relative overlap} of the initial vector $\ket{\psi}$ with the ground state with respect to the intervals $I,I'$: 
\begin{equation}\label{eqn:relative_overlap}
p_{r}(I,I')=\frac{\abs{\braket{\psi|\psi_0}}^2\mathbf {1}_{I}(\lambda_0)}{\sum_{\lambda_k\in I'}\abs{\braket{\psi|\psi_k}}^2}.
\end{equation}
Here the denominator is assumed to be non-vanishing, and $\mathbf {1}_{I}(\cdot)$ is the indicator function on $I$ such that $\mathbf {1}_{I}(\lambda_0)=1$ if $\lambda_0\in I$ and $\mathbf {1}_{I}(\lambda_0)=0$ if $\lambda_0\notin I$.
The most straightforward scenario is that the system has a spectral gap $\Delta=\lambda_1-\lambda_0$. We can then choose $I=[-\pi,\lambda_{\mathrm{prior}}+\Delta/4]$, $I'=[-\pi,\lambda_{\mathrm{prior}}+3\Delta/4]$, and $D=\Delta/2$, where $\lambda_{\mathrm{prior}}$ is a rough estimation of $\lambda_0$ such that $|\lambda_{\mathrm{prior}}-\lambda_0|\leq \Delta/4$. The relative overlap in this case will be $1$. It should be noted that the preceding discussion considers a worst-case scenario. In a real application, even if the spectral gap is very small, it may be feasible to choose suitable values for $I$ and $I'$ that result in a distance $D$ significantly larger than the spectral gap, while still achieving a large relative overlap $p_r(I,I')$. \cref{thm:qcels_small_p0} states that as long as the relative overlap  is larger than $0.71$, we can estimate the ground-state energy to precision $\epsilon$, where the maximal runtime is $T_{\max}=\wt{\Theta}(D^{-1})+\delta/\epsilon$, and the total runtime $T_{\mathrm{total}}$ is approximately $\widetilde{\mathcal{O}}(p^{-2}_0\delta^{-2}(D^{-1}+\delta/\epsilon))$. Hence this algorithm is particularly useful when $D\gg \epsilon$. As the relative overlap approaches $1$, $\delta$ can be chosen to be arbitrarily small. 

\subsection{Related works}\label{sec:related_works}

Based on the generalized uncertainty relation~\cite{Braunstein1996GeneralizedUR}, there exists a uniform complexity lower bound for the problem of phase estimation~\cite{PhysRevLett.96.010401}, i.e., the square of the error is always $\Omega(N_s^{-1}N^{-2})$ in the expectation sense, where $N$ is the query depth (with $\tau=1$) and $N_s$ is the number of repetitions. In our case, to estimate the ground-state energy with precision $\epsilon$, we have $N_s T_{\max}^{2}=\Omega(\epsilon^{-2})$, where $T_{\max}$ is the maximal runtime.
From this perspective, the Hadamard test (with $N_s=\Or(\epsilon^{-2})$ and $T_{\max}=\Or(1)$) and QPE (with $N_s=\Or(1)$ and $T_{\max}=\Or(\epsilon^{-1})$) are at the two ends of the spectrum. It is possible to achieve a measurement-depth trade-off by setting $N_s=\Or(\epsilon^{-2(1-\alpha)}),T_{\max}=\Or(\epsilon^{-\alpha})$
for some $0<\alpha<1$~\cite{Wang2019}. However, the total cost will be at least $N_sT_{\max}=\Or(\epsilon^{\alpha-2})$. Hence when $\alpha<1$, this strategy does not satisfy the Heisenberg-limited scaling. 

Our work is related to the robust phase estimation (RPE) in the context of quantum metrology for single qubit systems, which was first proposed in Ref.~\cite{Higgins_2009}. RPE satisfies the Heisenberg-limited scaling and allows the input state to be an inexact eigenstate, as long as the overlap with the desired eigenstate is larger than a certain constant~\cite{PhysRevA.104.069901}. Due to these advantages, RPE has been applied  in quantum metrology, as well as other systems that can be viewed as effective single qubit systems~\cite{PhysRevLett.118.190502,PhysRevApplied.10.044033,Russo2021EvaluatingED,PhysRevA.104.069901,PhysRevA.102.042613}. Empirical observations also suggest that the maximal runtime of RPE may be  smaller than $\pi/\epsilon$. However, we are not aware of the theoretical analysis on this aspect of the algorithm. 
It is possible to generalize the RPE to perform phase estimation of general $n$-qubit systems.
That being said, QCELS and multi-level QCELS can also be applied for parameter estimation in quantum metrology, with provably short circuit depth.

There are a few other phase estimation algorithms that also use a single ancilla qubit. The efficiency of the algorithms are so far demonstrated numerically. Ref.~\cite{Brien2019} develops a post-processing technique to extract eigenvalues from phase estimation data based on a classical time-series (or frequency) analysis.  \cite{PhysRevA.65.042323} proposes a method that estimates $\left\langle\psi\right|\exp(-itH)\ket{\psi}$ first and then
performs a classical Fourier transform to estimate the eigenvalues. A very different type of algorithms for ground state energy estimation is the variational quantum eigensolver (VQE) \cite{McCleanRomeroBabbushEtAl2016, PeruzzoMcCleanShadboltEtAl2014,Wang2019}, which constructs a variational ansatz $\ket{\psi(\theta)}$ to approximate the lowest eigenvector $\ket{\psi_0}$ and the parameter $\theta$ of the ansatz is adjusted to minimize the energy $\left\langle\psi(\theta)\right|H\ket{\psi(\theta)}$. The advantage of the VQE is that the quantum circuit is very simple because short depth circuits (often without using ancilla qubits) are enough to  estimate  $\left\langle\psi(\theta)\right|H\ket{\psi(\theta)}$. However, the efficiency and accuracy of VQE largely depend on the representation power of
the variational ansatz $\psi(\theta)$, and the solver of the
non-convex optimization problem. Similar to VQE, there are also other algorithms that try to perform phase estimation  using the quantum states generated in the time evolution, such as quantum
imaginary time evolution (QITE) algorithm~\cite{MottaSunTanEtAl2020} and some methods based on classical Krylov
subspace method such as the quantum subspace diagonalization~\cite{Huggins2020,St2020}. However, these methods also lack provable complexity upper bound, and existing theoretical analysis on quantum subspace diagonalization methods~\cite{doi:10.1137/21M145954X} has not been able to reveal the advantage of such methods compared to classical QPE methods.

For ground-state energy estimation, a number of quantum algorithms~\cite{AbramsLloyd1999,GeTuraCirac2019,LinTong2022,dong2022ground,WanBertaCampbell2022,Wang_2022} have been developed for ground-state energy estimation using the Hamiltonian evolution input model. However, the maximal runtime of all existing works satisfying the Heisenberg-limited scaling is at least $C/\epsilon$ for some constant $C=\Omega(1)$ that is independent from the overlap $p_0$. Take the method in Ref.~\cite{LinTong2022} for instance, which uses the same quantum circuit as in~\cref{fig:qc} to generate the input data, and can estimate $\lambda_0$ with Heisenberg-limited scaling for any $p_0>0$. The method uses a Fourier filter to approximate the shifted sign function. To resolve the ground-state energy to precision $\epsilon$, the shifted sign function defined on $[-\pi,\pi)$ should make a transition from $1$ to $0$ within a small interval of size $\epsilon/2$. The maximal runtime is $\wt{\Or}(\epsilon^{-1})$ and the preconstant is larger than $\pi$ (see \cite[Appendix A]{LinTong2022}). A similar mechanism of constructing  filtering functions is used in the near-optimal ground state preparation and ground-state energy estimation algorithm based on the block encoding input model~\cite{LinTong2020a}, the quantum eigenvalue transformation of unitary matrices (QETU) using the Hamiltonian evolution input model~\cite{dong2022ground}, and the statistical approach with a randomized implementation of Hamiltonian evolution~\cite{WanBertaCampbell2022}. 

More recently, Ref.~\cite{Wang_2022} introduces a method that uses the Fourier filtering techniques from \cite{LinTong2022} to generate a rough estimation $\widetilde{\lambda}_0$ for $\lambda_0$ in the first step. Then, it uses a derivative Gaussian filter around $\widetilde{\lambda}_0$ to refine the estimation of $\lambda_0$. The main result~\cite[Corollary 1.3]{Wang_2022} is that if the system has a spectral gap $\Delta$, for any $\alpha\in [0,1]$, the maximal runtime can be chosen to be $\wt{\Or}(\epsilon^{-\alpha}\Delta^{-1+\alpha})$, and the total cost is $\wt{\Or}(\Delta^{1-\alpha}\epsilon^{-2+\alpha})$. When $\alpha=1$, this reduces to the previous result in \cite{LinTong2022}, i.e., both the maximal runtime and the total cost scale as $\wt{\Or}(\epsilon^{-1})$. When $\alpha=0$, the maximal runtime becomes $\wt{\Or}(\Delta^{-1})$ which can be much smaller than $\wt{\Or}(\epsilon^{-1})$ when $\Delta\gg \epsilon$, and this is compensated by an increase in the total cost to $\wt{\Or}(\epsilon^{-2}\Delta)$.
We show in \cref{cor:small_p0} that under the same assumption, we may choose $\delta=\epsilon/\Delta$ in \cref{thm:qcels_small_p0}, and the maximal runtime our method is also $\wt{\Or}(\Delta^{-1})$ and the total runtime is $\wt{\Or}(\epsilon^{-2}\Delta)$. While the maximal runtime allowed by~\cite[Corollary 1.3]{Wang_2022} should be at least $\wt{\Or}(\Delta^{-1})$, our Theorem \ref{thm:qcels_small_p0} allows an even shorter maximal runtime of $\Or(1)$ under proper conditions. For example, in many quantum systems, although the spectral gap $\Delta$ is very small, the relative contribution of the ground state to the initial state is significant in some large interval $\left[\lambda_0,\lambda_0+D\right)$, where $D= \Omega(1)\gg \Delta$. Applying 
the result of Theorem \ref{thm:qcels_small_p0} to this system, the maximal runtime is $\mathcal{O}\left(D^{-1}\right)=\mathcal{O}(1)$.
It may still be difficult to estimate this relative overlap in practice. But unlike the spectral gap, the relative overlap condition is aware of the information in the initial state, and this relaxed condition may significantly increase the applicability range of our algorithm in practice especially for certain Hamiltonians with a small spectral gap (see  numerical examples in Section \ref{sec:Hubbard}).

\subsection{Organization}
The rest of the paper is organized as follows. In Section \ref{sec:mm}, we introduce QCELS and multi-level QCELS by assuming $p_0$ is larger than a certain constant threshold. We also provide an intuitive analysis why QCELS can reduce the maximal runtime when $p_0$ is close to $1$.
In Section \ref{sec:cost}, we analyse the maximal runtime and the total cost of our methods. We then extend the method to any $p_0>0$ in Section \ref{sec:small}. The numerical simulation of our method is provided in Section \ref{sec:ns}, where we mainly compare our method with QPE, followed by discussions and future directions in Section \ref{sec:discuss}.

\section{Main method}\label{sec:mm}
\subsection{Generating input data from quantum circuit}\label{sec:mqc}

In \cref{fig:qc}, we may
\begin{itemize}
    \item[(1)] Set $W=I$, measure the ancilla qubit and define a random variable $X_n$ such that $X_n=1$ if the outcome is $0$ and $X_n=-1$ if the outcome is $1$. Then 
\begin{equation}\label{eqn:X}
\mathbb{E}(X_n)=\mathrm{Re}\left(\left\langle\psi\right|\exp(-in\tau H)\ket{\psi}\right)\,.
\end{equation}

\item[(2)] Set $W=S^\dagger$, measure the ancilla qubit and define a random variable $Y_n$ such that $Y_n=1$ if the outcome is $0$ and $Y_n=-1$ if the outcome is $1$. Then 
\begin{equation}\label{eqn:Y}
\mathbb{E}(Y_n)=\mathrm{Im}\left(\left\langle\psi\right|\exp(-in\tau H)\ket{\psi}\right)\,.
\end{equation}
\end{itemize}

Given two preset parameters $N,N_s>0$ and time step $\tau>0$,  we use the quantum circuit in \cref{fig:qc} to prepare the following data set:
\begin{equation}\label{eqn:dataset}
    \mathcal{D}_{H}=\left\{\left(n\tau,Z_n\right)\right\}^{N-1}_{n=0}\,.
\end{equation}
where $Z_n$ is calculated by running the quantum circuit (\cref{fig:qc}) $N_s$ times. More specifically,
\begin{equation}\label{eqn:Z_n}
Z_n=\frac{1}{N_s}\sum^{N_s}_{k=1}\left(X_{k,n}+iY_{k,n}\right).
\end{equation}
Here $X_{k,n},Y_{k,n}$ are independently generated by the quantum circuit (\cref{fig:qc}) with different $W$ and satisfy \eqref{eqn:X}, \eqref{eqn:Y} respectively. Hence in the limit $N_s\to \infty$, we have
\begin{equation}
Z_n=\left\langle\psi\right|\exp(-in\tau H)\ket{\psi}.
\label{eqn:Zn_expect}
\end{equation}

To prepare the data set in \cref{eqn:dataset}, the maximal simulation time is $T_{\max}=(N-1)\tau$ and the total simulation time is $N(N-1)N_s\tau/2=NN_s T_{\max}/2$. To reduce the complexity of our algorithm, it suffices to find an efficient way to post-processing the data set \eqref{eqn:dataset} so a good approximation to $\lambda_0$ can be constructed with proper choice of  $N,N_s$ and $\tau$.

\subsection{QCELS and its intuitive analysis}\label{sec:mse}

Using the data set \eqref{eqn:dataset}, we define the mean square error (MSE):
\begin{equation}\label{eq:loss}
    L(r,\theta)=\frac{1}{N}\sum^{N-1}_{n=0}\left|Z_n-r\exp(-i\theta n\tau)\right|^2\,,
\end{equation}
where $r\in\mathbb{C}$ and $\theta\in\mathbb{R}$. The approximation to $\lambda_0$ is constructed by minimizing the loss function $L(r,\theta)$: Let 
\begin{equation}\label{eq:loss2}
    (r^*,\theta^*)=\argmin_{r\in\mathbb{C},\theta\in\mathbb{R}} L(r,\theta)\,,
\end{equation}
then $\theta^*$ is an approximation to $\lambda_0$, and this defines the QCELS algorithm. Note that once we obtain the data set from the quantum circuit, minimizing $L(r,\theta)$ only requires classical computation. 

The mean square error $L(r,\theta)$ is a quadratic function with respect to $r$. For a fixed value of $\theta$, minimization with respect to $r$ gives
\begin{equation}
r(\theta)=\frac{1}{N}\sum_{n=0}^{N-1} e^{i \theta n\tau} Z_n,
\end{equation}
and
\begin{equation}
\min_{r\in\mathbb{C}}L(r,\theta)=\frac{1}{N}\sum^{N-1}_{n=0}\abs{Z_n}^2-\frac{1}{N}
\abs{\sum_{n=0}^{N-1}Z_n e^{i\theta n \tau}}^2.
\label{eqn:min_r}
\end{equation}
Therefore minimizing $L(r,\theta)$ is equivalent to maximizing 
\begin{equation}
f(\theta)=
\abs{\sum_{n=0}^{N-1}Z_n e^{i\theta n \tau}}^2,
\label{eqn:max_loss}
\end{equation} 
which is a nonlinear function with respect to $\theta$. According to the definition of $Z_n$ in \eqref{eqn:Z_n} (see Appendix \ref{sec:bound_En} for the rigorous statement)
\[
\left|Z_n-\left\langle\psi\right|\exp(-in\tau H)\ket{\psi}\right|=\Or(N^{-1/2}_s)\,.
\]
Thus, when $N_s\gg 1$ and $p_0\approx1$, intuitively we have
\begin{equation}
Z_n\approx \left\langle\psi\right|\exp(-in\tau H)\ket{\psi}=\exp(i \lambda_0 n\tau )\,,
\end{equation}
which implies
\begin{equation}
f(\theta)\approx
\abs{\frac{\exp(i(\lambda_0-\theta) N\tau)-1}{\exp(i(\lambda_0-\theta) \tau)-1}}^2
= \left|\frac{\sin((\lambda_0-\theta) N\tau/2)}{\sin((\lambda_0-\theta) \tau/2)}\right|^2\,.
\end{equation}
Recall $T_{\max}=N\tau$. When $N$ is large enough, the maximum of $\left|\frac{\sin((\lambda_0-\theta) N\tau/2)}{\sin((\lambda_0-\theta) \tau/2)}\right|$ occurs at $\theta=\lambda_0$ and the closest local maximal $\theta^*$  satisfies $|\theta^*-\lambda_0|\geq \frac{\pi}{T_{\max}}$. Therefore to find the maximal value of $f(\theta)$ on the interval $[-\pi,\pi)$, we may choose a uniform grid of size up to $\lceil T_{\max}\rceil$, and perform gradient ascent from each grid point. By maximizing over the values from all the local maxima, we can robustly find the global maxima of $f$ (and hence the global minima of the loss function $L$). As an illustration, in Figure \ref{fig:compare}, we give an example of the landscape of the loss function and compare the optimization results with different initial guess.

\begin{figure}
     \centering
     \subfloat{
         \centering
         \includegraphics[height=0.3\textheight]{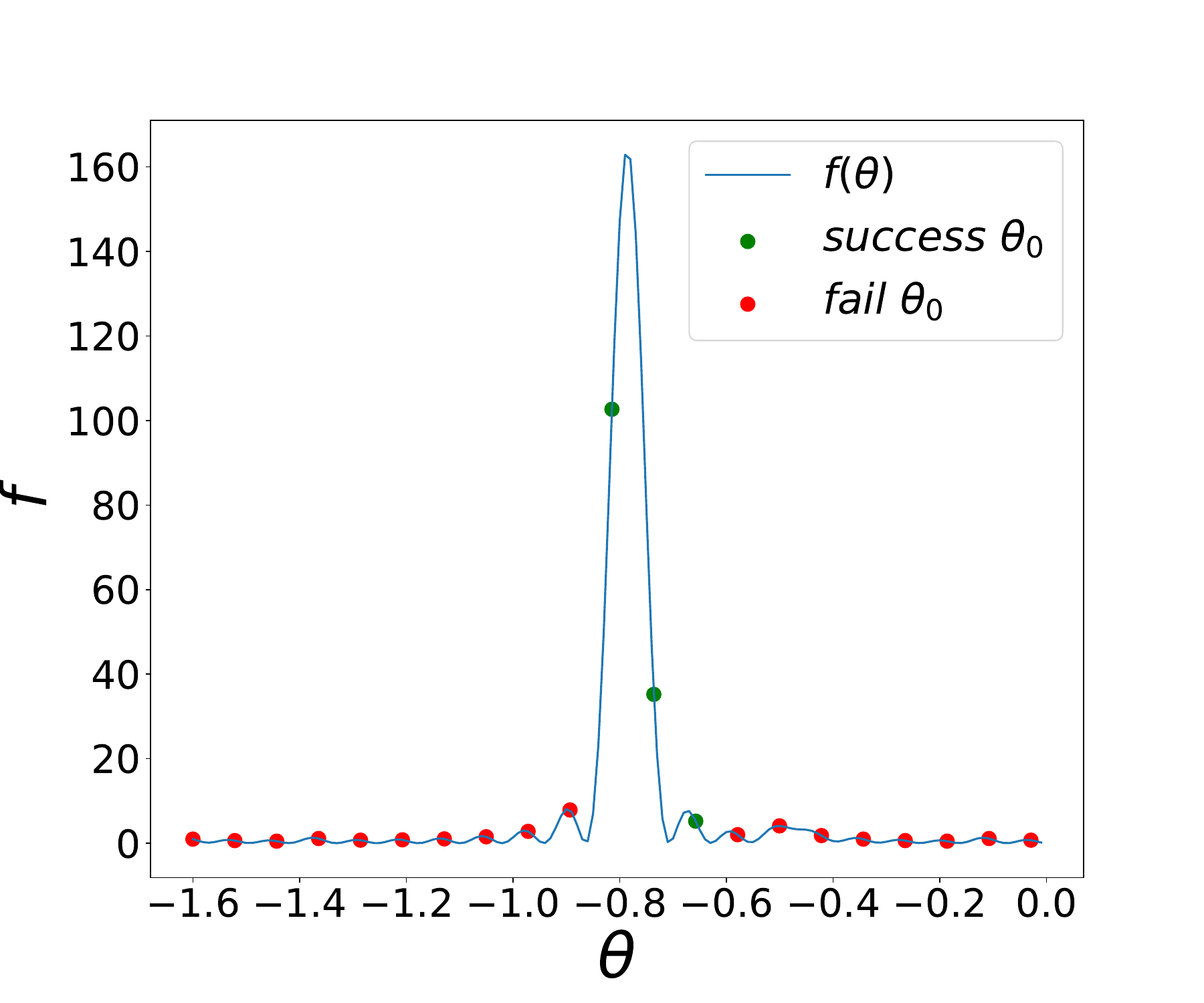}
     }
     \caption{\label{fig:compare} Landscape of the objective function $f(\theta)$ in \cref{eqn:max_loss}, and a number of possible choices of the initial guess $\theta_0$ with $T_{\max}=80$ and eight sites TFIM model (see detail in \cref{sec:Ising}). Here $p_0=0.8$, and the landscape for other values of $p_0$ is similar. The BFGS algorithm is used to maximize the objective function $f(\theta)$ for different initial guess $\theta_0$. The error threshold is set as $0.01$, meaning the optimization problem is successfully solved if $|\theta^*-\mathrm{argmax}_\theta f(\theta)|<0.01$.}
\end{figure}

When $N,N_s\gg 1$ and $p_0$ is sufficiently large, we can show $\theta^*$ is a good approximation to $\lambda_0$ with relatively small depth. The rigorous theoretical results are presented in Section \ref{sec:cost}. Here, we briefly introduce the intuition and leave more details to the next section.

Let $\{(\lambda_m,\ket{\psi_m})\}^{M-1}_{m=0}$ be the set of eigenpairs of the Hamiltonian $H$, and the distance of $\theta^*$ to $\lambda_m$ is 
\begin{equation}
R_m=\left|(\lambda_{m}-\theta^*)\tau\ \mathrm{mod}\ [-\pi,\pi)\right|.
\end{equation}
Our goal is to prove that $R_0$ is small. When $N_s\gg 1$, intuitively we have
\begin{equation}
Z_n\approx \left\langle\psi\right|\exp(-in\tau H)\ket{\psi}=p_0 \exp(-i \lambda_0 n\tau )+\sum^{M-1}_{m=1}p_m\exp(-i \lambda_m n\tau ),
\end{equation}
where $p_m=\left|\innerp{\psi_m}{\psi}\right|^2$ is the overlap between the initial quantum state and the $m$-th eigenvector. Hence in the limit $N_s\to \infty$, 
\begin{equation}
L(r,\theta)= \frac{1}{N}\sum^{N-1}_{n=0}\left|p_0 \exp(-i \lambda_0 n\tau )+\sum^{M-1}_{m=1}p_m\exp(-i \lambda_m n\tau )-r\exp(-i\theta n\tau )\right|^2.
\end{equation}
Similar to the computation above, we find that minimizing the right hand side is equivalent to maximizing the following function:
\begin{equation}
\begin{aligned}
f(\theta)=&\left|\sum^{N-1}_{n=0}\left[p_0 \exp(i (\lambda_0-\theta) n\tau)+\sum^{M-1}_{m=1}p_m\exp(i (\lambda_m-\theta) n\tau)\right]\right|^2\\
=&\left|p_0 \frac{\exp(i (\lambda_0-\theta) N\tau)-1}{\exp(i (\lambda_0-\theta)\tau)-1}+\sum^{M-1}_{m=1}p_m \frac{\exp(i (\lambda_m-\theta) N\tau)-1}{\exp(i (\lambda_m-\theta)\tau)-1}\right|^2\,.
\end{aligned}
\label{eqn:ftheta_limit}
\end{equation}
Therefore
\begin{equation}
f(\lambda_0)=\left|p_0N+\sum^{M-1}_{m=1}p_m \frac{\exp(i (\lambda_m-\lambda_0) N\tau)-1}{\exp(i (\lambda_m-\lambda_0)\tau)-1}\right|^2.
\end{equation}
When the overlap between the initial state and $\ket{\psi_0}$ dominates, i.e., $p_0>\sum^{M-1}_{m=1}p_m$, we can use the first equality in \cref{eqn:ftheta_limit} to obtain
\begin{equation}
\sqrt{f(\lambda_0)}\ge\sum_{n=0}^{N-1} \left(p_0-\sum_{m=1}^{M-1} p_m\right)=N \left(p_0-\sum_{m=1}^{M-1} p_m\right).
\end{equation}
Notice 
\begin{equation}
\left|\frac{\exp(i (\lambda-\theta) N\tau)-1}{\exp(i (\lambda-\theta)\tau)-1}\right|=\left|\frac{\sin((\lambda-\theta)\tau N/2)}{\sin((\lambda-\theta)\tau/2)}\right|\leq \frac{\pi}{|(\lambda-\theta)\tau\ \mathrm{mod}\ [-\pi,\pi)|}\,.
\label{eqn:expinequal_1}
\end{equation}
The second equality of \cref{eqn:ftheta_limit} together with \cref{eqn:expinequal_1} gives
\begin{equation}
\sqrt{f(\theta^*)}\leq\frac{\pi \sum_{m=0}^{M-1} p_m}{\min_{m=0}^{M-1} |(\lambda_m-\theta^*)\tau\ \mathrm{mod}\ [-\pi,\pi)|}= \frac{\pi}{\min_mR_m}.
\end{equation}
As a result, 
\begin{equation}
\left(p_0-\sum^{M-1}_{m=1}p_m\right)N \leq \sqrt{f(\lambda_0)}\leq \sqrt{f(\theta^*)}\leq\frac{\pi}{\min_mR_m}\,.
\label{eqn:expinequal_2}
\end{equation}
\cref{eqn:expinequal_2} implies that there must exist some $m^*$ such that $\left(p_0-\sum^{M-1}_{m=1}p_m\right)R_{m^*}\le\pi/N$ and $\theta^*$ must be close to one of the eigenvalues. Since $p_0>\sum^{M-1}_{m=1}p_m$, it is reasonable to expect this eigenvalue should be $\lambda_0$  ($m^*=0$) and we first have \begin{equation}\label{eqn:priorbound}|(\lambda_{0}-\theta^*)\bmod[-\pi/\tau,\pi/\tau)|\leq \frac{\pi}{T_{\max}\left(p_0-\sum^{M-1}_{m=1}p_m\right)}\,.\end{equation}

When $p_0$ is very close to $1$, we can further improve the bound \eqref{eqn:priorbound}. Since $\theta^*$ is the maximal point,
\begin{equation}
\left(p_0-\sum^{M-1}_{m=1}p_m\right)N \leq \sqrt{f(\lambda_0)}\leq \sqrt{f(\theta^*)}\leq \left|\frac{\sin(NR_0 /2)}{\sin(R_0/2)}\right|+(1-p_0)N\,.
\end{equation} 
where the last inequality comes from the first equality of \eqref{eqn:expinequal_1}. This implies that
\begin{equation}\label{eqn:R0small}
\left|\frac{\sin(NR_0 /2)}{\sin(R_0/2)}\right|\ge (3p_0-2)N.
\end{equation}
Define $\delta=R_0N$, we have $\frac{\sin(N (\delta/2N))}{\sin(\delta/2N)}\ge (3p_0-2)N$. When $\delta<\pi$, using the Taylor expansion, we have
\[
\frac{\sin(N (\delta/2N))}{\sin(\delta/2N)}\approx N\left(1-\frac{\delta^2}{24}\right)\ge (3p_0-2)N\,.
\]
Combining this with \eqref{eqn:R0small}, we have
\begin{equation}
\delta^2\approx 72(1-p_0).
\end{equation}
Therefore as $p_0\to 1$, $\delta=\Theta(\sqrt{1-p_0})\to 0$. Note that $\frac{\sin(N x)}{\sin(x)}$ is monotonically decreasing on $[0,\pi/(2N)]$, if we can prove a loose bound $R_0\in[0,\pi/N]$, then it can be refined to
\begin{equation}
R_0\le \frac{\delta}{N},
\end{equation}
or
\begin{equation}
|(\lambda_{0}-\theta^*)\bmod[-\pi/\tau,\pi/\tau)|\leq \frac{\delta}{T_{\max}}=\epsilon.
\end{equation}
In other words, it suffices to choose the maximal runtime $T_{\max}=\delta/\epsilon$.

The intuitive analysis above summarizes the reason why QCELS can estimate $\lambda_0$ with a short-depth circuit.  
The precise statement is given in \cref{thm:qcels_simple}, and the behavior of the preconstant $\delta$ is demonstrated in \cref{fig:theory}.

\subsection{Multi-level QCELS}\label{sec:malg}

Even though QCELS can reduce the maximal runtime, it does not satisfy the Heisenberg-limited scaling. To see this, note that $T_{\max}=(N-1)\tau=\mathcal{O}(\epsilon^{-1})$. If $\tau$ is a constant, then the total simulation time $T_{\mathrm{total}}$ is $\Omega(\epsilon^{-2})$. We may also attempt to choose $N$ to be a constant and let $\tau=\Or(\epsilon^{-1})$. However, the loss function is a periodic function in $\theta$ with period $2\pi/\tau$.  So we can only obtain $|(\lambda_{0}-\theta^*)\bmod[-\pi/\tau,\pi/\tau)|=\mathcal{O}(\epsilon)$, which is a meaningless estimate since $\pi/\tau=\Or(\epsilon)$. 

In this section, we provide a multi-level QCELS algorithm, which maintains the reduced maximal runtime, and satisfies the Heisenberg-limited scaling. Roughly speaking, we will construct a sequence of data sets $\{\mathcal{D}_{H,j}\}$ using an increasing sequence of $\{\tau_j\}$. The maximal simulation time of the algorithm $T_{\max}=N\tau_J$ and the total simulation time of the algorithm $T_{\mathrm{total}}=\sum^J_{j=1}N(N-1)N_s\tau_j$. The parameters in the algorithm should be chosen properly. The increasing speed of $\tau_j$ should also be chosen properly. If $\tau_j$ increases too slowly, we need more iteration steps which increases the total cost. If $\tau_j$ increases too rapidly, there might exist more than one candidates of the estimation interval for $\lambda_0$ in each iteration. We propose the choice of $\tau_{j+1}=2\tau_j$ (see \cref{eqn:tauj_choice} for the precise choice of $\{\tau_j\}$), and this procedure is similar to Kitaev's algorithm~\cite{KitaevShenVyalyi2002}. Each solution of the optimization problem based on $\{\mathcal{D}_{H,j}\}$ helps us shrink the estimation interval and finally we obtain a small estimation interval for $\lambda_0$. The pseudocode of the multi-level QCELS algorithm is given  in Algorithm \ref{alg:main}.
\begin{breakablealgorithm}
      \caption{Multi-level quantum complex exponential least squares}
  \label{alg:main}
  \begin{algorithmic}[1]
  \State \textbf{Preparation:} Number of data pairs: $N$; number of samples: $N_s$; number of iterations: $J$; sequence of time steps: $\{\tau_j\}^{J}_{j=1}$; Quantum oracle: $\left\{\exp(-i\tau_j H)\right\}^J_{j=1
  }$;
  \State \textbf{Running:}
  \State $\lambda_{\min}\gets-\pi$; $\lambda_{\max}\gets\pi$; \Comment{$[\lambda_{\min},\lambda_{\max}]$ contains $\lambda_0$}
  \State $j\gets 1$;
  \For{$j=1,\ldots,J$}
  \State Generate the data set in \cref{eqn:dataset,eqn:Z_n} using the circuit in \cref{fig:qc} with $t_n=n\tau_j$.
  
  \State Define loss function $L(r,\theta)$ according to \eqref{eq:loss}.
  \State Minimizing the loss function (by maximizing \cref{eqn:max_loss}).
  \[
      (r^*_j,\theta^*_j)\gets\argmin_{r\in\mathbb{C},\theta\in[-\lambda_{\min},\lambda_{\max}]} L(r,\theta)\,,
  \] 
  \State $\lambda_{\min}\gets\theta^*_j-\frac{\pi}{2\tau_j}$;  $\lambda_{\max}\gets\theta^*_j+\frac{\pi}{2\tau_j}$ \Comment{Shrink the search interval by $1/2$}
  \EndFor
  \State \textbf{Output:} $\theta^*$
  \end{algorithmic}
\end{breakablealgorithm}

\section{Complexity analysis with a large overlap}\label{sec:cost}
To analyze the complexity of the  multi-level QCELS method in Algorithm \ref{alg:main}, we need to find an upper bound of the maximal/total simulation time for finding an $\epsilon$-approximation to $\lambda_0$. In this section we assume the initial overlap is large, i.e., $p_0>0.71$. The extension to the small $p_0$ regime will be discussed in next section.

For each $0\leq n\leq N-1$, we define
\begin{equation}\label{eqn:En}
E_n=Z_n-\left\langle\psi\right|\exp(-in\tau H)\ket{\psi}=Z_n-\left(p_0 \exp(-i \lambda_0 n\tau )+\sum^{M-1}_{m=1}p_m\exp(-i \lambda_m n\tau)\right)\,,
\end{equation}
which corresponds to the error that occurs in the expectation estimation \eqref{eqn:Z_n}.  Note that $\{E_n\}$ are independent complex random variables with zero expectation and bounded magnitude. Using classical probability theory, we can give a sharp tail bound for $E_n$ with respect to $N,N_s$, which is important for us to derive a choice of $N,N_s$ in our algorithm. The detailed discussion and tail bounds for $E_n$ can be found in Appendix \ref{sec:bound_En}.

Using a proper tail bound for $E_n$, we can analyze the performance of QCELS in \cref{thm:qcels_simple}. This also corresponds to an iteration in Algorithm \ref{alg:main}.

\begin{thm}[Complexity of QCELS, informal]
\label{thm:qcels_simple}
Let $\theta^*$ be the solution of QCELS in \cref{eq:loss2}. 
Given $p_0>0.71$, $0<\eta<1/2$, $0<\epsilon<1/2$, we can choose 
\begin{equation}\label{eqn:delta_1}
\delta=\Theta(\sqrt{1-p_0}),
\end{equation}
and 
\begin{equation}\label{eqn:NNs_1}
NN_s= \widetilde{\Omega}\left(\delta^{-(2+o(1))}\right),\quad \min\{N,N_s\}=\Omega(1),\quad T=N\tau=\frac{\delta}{\epsilon},
\end{equation}
so that 
\begin{equation}\label{eqn:distance}
\mathbb{P}\left(\left|(\theta^*-\lambda_0)\ \mathrm{mod}\ [-\pi/\tau,\pi/\tau) \right|<\epsilon\right)\geq 1-\eta\,.
\end{equation}
\end{thm}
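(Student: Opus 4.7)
My plan is to make rigorous the informal calculation at the end of \cref{sec:mse} by carefully controlling the Monte Carlo errors $E_n$ from \cref{eqn:En}. The starting point is a concentration bound: since each $E_n$ is an empirical average of $N_s$ independent bounded random variables of zero mean, Hoeffding's inequality applied separately to the real and imaginary parts, combined with a union bound over $n = 0, \ldots, N-1$, gives
\begin{equation*}
\mathbb{P}\!\left(\max_{0 \leq n \leq N-1} |E_n| \leq \epsilon_E\right) \geq 1 - \eta, \qquad \epsilon_E := C\sqrt{\log(N/\eta)/N_s},
\end{equation*}
for a universal constant $C$. I would condition on this high-probability event for the remainder of the argument; every subsequent step is then deterministic.

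The main body of the proof is a chain of inequalities that forces $\theta^*$ to be close to $\lambda_0$. Writing $Z_n = \sum_m p_m e^{-i\lambda_m n\tau} + E_n$ and expanding \cref{eqn:max_loss} via the triangle inequality, I would establish
\begin{align*}
\sqrt{f(\lambda_0)} &\geq (2p_0 - 1)N - N\epsilon_E, \\
\sqrt{f(\theta^*)} &\leq \left|\frac{\sin(R_0 N/2)}{\sin(R_0/2)}\right| + (1-p_0)N + N\epsilon_E,
\end{align*}
where $R_0 := |(\lambda_0 - \theta^*)\tau \bmod [-\pi,\pi)|$, and the upper bound uses the fact that the Dirichlet kernel associated with any eigenvalue is bounded by $N$ in magnitude. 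Since $\theta^*$ is the global maximizer, $\sqrt{f(\theta^*)} \geq \sqrt{f(\lambda_0)}$, which gives the key inequality
\begin{equation*}
(3p_0 - 2)N - 2N\epsilon_E \leq \left|\frac{\sin(R_0 N/2)}{\sin(R_0/2)}\right|.
\end{equation*}
A local Taylor expansion yields $|\sin(R_0 N/2)/\sin(R_0/2)| \leq N\bigl[1 - \tfrac{1}{24}(R_0 N)^2 + \mathcal{O}((R_0 N)^4)\bigr]$ for sufficiently small $R_0 N$, and substituting this converts the key inequality into a quadratic constraint $R_0 N = \mathcal{O}(\sqrt{1 - p_0 + \epsilon_E})$. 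Choosing $\delta = \Theta(\sqrt{1-p_0})$, taking $NN_s = \widetilde{\Omega}(\delta^{-(2 + o(1))})$ so that $\epsilon_E \lesssim 1 - p_0$, and setting $T = N\tau = \delta/\epsilon$, we deduce $R_0/\tau \leq \delta/T = \epsilon$, which is the desired bound.

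I anticipate two main obstacles. The first is proving the sharp quadratic-in-$R_0 N$ form of the trigonometric upper bound: the crude estimate $|\sin(R_0 N/2)/\sin(R_0/2)| \leq \pi/R_0$ used in \cref{eqn:priorbound} only yields $R_0 = \mathcal{O}(1/N)$, which is too weak to capture the $\sqrt{1 - p_0}$ scaling that drives the reduction in circuit depth. Because the Taylor bound is local near $R_0 = 0$, one must first justify that $\theta^*$ does lie in such a neighborhood rather than near some other eigenvalue $\lambda_m$, $m \geq 1$; this is where the stronger hypothesis $p_0 > 0.71$ (as opposed to the minimal threshold $p_0 > 2/3$ suggested by the naive argument) enters, providing the extra margin needed for this bootstrapping against spurious maximizers of mass at most $1 - p_0$. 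The second obstacle is threshold bookkeeping: one must carefully track constants so that $\epsilon_E \lesssim 1 - p_0$ together with the Hoeffding tail yields exactly the $\widetilde{\Omega}(\delta^{-(2 + o(1))})$ bound on $NN_s$ stated in \cref{eqn:NNs_1}, where the $o(1)$ absorbs the higher-order trigonometric corrections that become negligible as $\delta \to 0$.
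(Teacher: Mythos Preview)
Your proposal captures the paper's ``rough estimate'' correctly but misses the key refinement needed for the stated $NN_s=\wt\Omega(\delta^{-(2+o(1))})$ scaling, and the final arithmetic is off. With your uniform bound $\max_n|E_n|\le\epsilon_E\sim\sqrt{\log(N/\eta)/N_s}$, the requirement $\epsilon_E\lesssim 1-p_0=\Theta(\delta^2)$ forces $N_s=\wt\Omega(\delta^{-4})$, hence $NN_s=\wt\Omega(N\delta^{-4})$, not $\delta^{-2}$. Even if you replace the uniform bound by concentration of the \emph{average} $\overline E_\theta=\tfrac1N\sum_nE_ne^{i\theta n\tau}$ (which scales like $1/\sqrt{NN_s}$ for fixed $\theta$), you still need $1/\sqrt{NN_s}\lesssim\delta^2$, i.e.\ $NN_s\gtrsim\delta^{-4}$. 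This is exactly the content of the paper's \cref{prop:qcels_simple_full}, which is only an intermediate step.

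The missing idea is a \emph{Lipschitz} concentration bound for the error difference (\cref{lem:bound_of_En_new}, \cref{eqn:Enbound_new2}): once a preliminary estimate $R_0N\le\rho$ is in hand, one has $|\overline E_{\theta^*}-\overline E_{\lambda_0}|\lesssim\rho/\sqrt{NN_s}$ with high probability, gaining a factor of $\rho$. The paper exploits this by an iterative bootstrap (\cref{lem:2_new,lem:3_new}): a sequence $\delta_0>\delta_1>\cdots>\delta_M$ is constructed where each $\delta_i$ is used as the radius $\rho$ in the next step, and the exponent in $NN_s$ improves from $4$ toward $2$ geometrically, giving $NN_s=\wt\Omega(\delta^{-4/(2-2^{-M})})$. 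The $o(1)$ in the theorem is $\zeta=\Theta(2^{-M})$ from the finite number of bootstrap rounds, not from ``higher-order trigonometric corrections'' as you suggest. Without this difference-based refinement your argument cannot close the gap between $\delta^{-4}$ and $\delta^{-(2+o(1))}$.
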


The precise statement of \cref{thm:qcels_simple} and the proof are given in Appendix \ref{sec:pf_prop}.  To show \cref{eqn:distance}, two parts of the error need to be controlled. First, as discussed before, we should control $E_n$ by increasing the number of samples $N_s$, so that it does not change the loss function too much. This is particularly important as $p_0\to 1$. When the condition \eqref{eqn:NNs_1} is satisfied, the probability of $\abs{E_n}=\Or(\delta^2)=\Or(1-p_0)$ is at least $1-\eta$. 
 The second part of error comes from the pollution from eigenvalues other than $\lambda_0$ when $p_0<1$. As a result, $\delta$ cannot be arbitrarily small and needs to satisfy the relation in \cref{eqn:delta_1} as $p_0\to 1$.

Although \cref{thm:qcels_simple} is a complexity result for one step of QCELS, it also shows that this basic version of QCELS cannot satisfy the Heisenberg-limited scaling: Fix $0<\delta<1$, if we choose the parameter according to condition \eqref{eqn:NNs_1} and $N=\Theta(1)$, we have $\tau=\Theta\left(1/\epsilon\right)$. However, this makes the length of the estimation interval $\left|[-\pi/\tau,\pi/\tau)\right|=\Theta\left(\epsilon\right)$, and the resulting estimation $\theta^*$ is meaningless. To solve this problem, we need to choose $\tau=\Theta(1)$. Then we should set $N=\Theta(1/\epsilon)$ to satisfy the third condition in \eqref{eqn:NNs_1}, this finally makes the total cost $T_{\mathrm{total}}=N(N-1)\tau/2=\Omega(1/\epsilon^2)$, which violates the Heisenberg limit scaling.

\cref{thm:qcels_simple} can be used to describe the maximal runtime of Algorithm \ref{alg:main}. Using \cref{eqn:distance}, we can obtain many candidates of the estimation interval for $\lambda_0$ after solving each minimization problem. On the other hand, we can choose $\tau_j$ properly so that only one of this candidate survives in each iteration. After eliminating other candidates, the estimation in \cref{eqn:distance} can be directly written as $\left|\theta^*-\lambda_0\right|<\frac{\delta}{T}$, which implies $T_{\max}=\frac{\delta}{\epsilon}$ is enough to obtain $\epsilon$ precision in our algorithm. Now, we are ready to introduce the choice of the parameters and the main complexity result of Algorithm \ref{alg:main}:
\begin{thm}[Complexity of multi-level QCELS, informal]
\label{thm:main}
Let $\theta^*$ be the output of Algorithm \ref{alg:main}.
Given $p_0>0.71$, $0<\eta<1/2$, $0<\epsilon<1/2$, we can choose $\delta$ according to \cref{eqn:delta_1},  
\begin{equation}
J=\left\lceil\log_2(1/\epsilon)\right\rceil+1,\quad \tau_j=2^{j-1-\left\lceil\log_2(1/\epsilon)\right\rceil}\frac{\delta}{N\epsilon},\quad\forall 1\leq j\leq J\,.
\label{eqn:tauj_choice}
\end{equation}
Choose $NN_s= \widetilde{\Theta}\left(\delta^{-(2+o(1))}\right)$. Then 
\[
T_{\max}=N\tau_J=\frac{\delta}{\epsilon},\quad T_{\mathrm{total}}=\sum^J_{j=1}N(N-1)N_s\tau_j/2=\widetilde{\Theta}\left(\delta^{-(1+o(1))}\epsilon^{-1}\right)\,,
\]
and
\[
\mathbb{P}\left(\left|(\theta^*-\lambda_0)\bmod [-\pi,\pi)\right|<\epsilon\right)\geq 1-\eta\,.
\]
\end{thm}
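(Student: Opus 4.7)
The plan is to prove \cref{thm:main} by induction on the iteration index $j$, invoking \cref{thm:qcels_simple} once per level with a reduced failure probability $\eta/J$. The invariant I will maintain is that after iteration $j$ the updated search interval $[\lambda_{\min},\lambda_{\max}]$ has width $\pi/\tau_j$ and still contains the true target $\lambda_0$. The choice $\tau_{j+1}=2\tau_j$ is what makes the scheme work: at level $j+1$ the loss function $L(r,\theta)$ has period $2\pi/\tau_{j+1}=\pi/\tau_j$, which exactly matches the width of the search interval produced by level $j$.

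For the base case, the initial interval $[-\pi,\pi]$ contains $\lambda_0$ by assumption, and the smallest time step $\tau_1=2^{-\lceil\log_2(1/\epsilon)\rceil}\delta/(N\epsilon)\le \delta/N\le 1$ guarantees $[-\pi,\pi]\subseteq [-\pi/\tau_1,\pi/\tau_1)$, so exactly one aliased copy of $\lambda_0$ lies inside the search region. For the inductive step, assume the invariant holds at level $j-1$. Then the constrained minimization at level $j$ takes place in one full period of $L$, and \cref{thm:qcels_simple} (applied with target accuracy $\delta/(N\tau_j)$ and failure probability $\eta/J$) produces $\theta^*_j$ with $|\theta^*_j-\lambda_0|<\delta/(N\tau_j)$ without any modular ambiguity. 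The next search interval $[\theta^*_j-\pi/(2\tau_j),\theta^*_j+\pi/(2\tau_j)]$ then still contains $\lambda_0$ provided $\delta/(N\tau_j)<\pi/(2\tau_j)$, which is automatic for $N>2\delta/\pi$.

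Iterating to $j=J$, a union bound gives an overall failure probability of at most $J\cdot(\eta/J)=\eta$, and the final estimate satisfies $|\theta^*-\lambda_0|<\delta/(N\tau_J)=\epsilon$ by the choice $N\tau_J=\delta/\epsilon$; in particular $T_{\max}=N\tau_J=\delta/\epsilon$. The total cost follows from the geometric schedule $\tau_j=2\tau_{j-1}$, which gives $\sum_{j=1}^J \tau_j\le 2\tau_J$ and hence
\begin{equation}
T_{\mathrm{total}}=\sum_{j=1}^J N(N-1)N_s\tau_j \le 2(N-1)N_s\cdot N\tau_J = \widetilde{O}(NN_s\cdot \delta/\epsilon)=\widetilde{\Theta}(\delta^{-1}\epsilon^{-1}),
\end{equation}
where the last equality uses $NN_s=\widetilde{\Theta}(\delta^{-(2+o(1))})$.

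The main obstacle is the tight coupling between the doubling of $\tau_j$ and the shrinking rule for the search interval. If $\tau_j$ grew faster than a factor of two, the new search interval would include multiple aliased minima of $L$, and one could no longer single out the correct candidate using \cref{thm:qcels_simple}; if it grew slower, one would need more levels and pay extra logarithmic factors, but the argument would otherwise go through. A secondary technicality is that \cref{thm:qcels_simple} relies on a tail bound for the Monte Carlo error $E_n$, so the sample count $NN_s=\widetilde{\Theta}(\delta^{-(2+o(1))})$ must be taken with the per-level failure probability $\eta/J$; since $J=\Theta(\log(1/\epsilon))$, this adjustment is absorbed into the polylogarithmic factors already hidden by $\widetilde{\Theta}$.
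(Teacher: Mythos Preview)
Your proposal is correct and follows essentially the same route as the paper: induction over levels invoking \cref{thm:qcels_simple} with failure probability $\eta/J$ per level, the doubling schedule $\tau_{j+1}=2\tau_j$ to resolve modular aliasing, a union bound over the $J$ levels, and a geometric sum for $T_{\mathrm{total}}$. The paper's write-up maintains the slightly tighter invariant $|\theta^*_j-\lambda_0|<\delta/(N\tau_j)$ (rather than just ``$\lambda_0$ lies in the search interval'') and spells out the aliasing step by explicitly intersecting the previous confidence interval with the periodic family of candidates to show only the $c=0$ alias survives; your period-matching argument captures the same mechanism, and would be airtight once you carry this stronger invariant so that $\lambda_0$ is guaranteed to sit well inside---not merely inside---the current search window.
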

The precise statement of \cref{thm:main} and the proof are given in Appendix \ref{sec:pf_thm}. \cref{thm:main} shows that as $p_0\to 1$, the multi-level QCELS algorithm satisfies the Heisenberg-limited scaling, and the maximal runtime can be much smaller $\pi/\epsilon$. 
On the other hand, there is a trade-off between the maximal simulation time and the total simulation time. In particular, $N_sN=\widetilde{\Theta}\left(\delta^{-2}\right)$ diverges as $\delta\rightarrow0$. This implies that, although $T_{\mathrm{total}}$ achieves the Heisenberg-limited scaling, the preconstant may become too large if the circuit depth is forced to be very small.

\section{Ground-state energy estimation with a small initial overlap}\label{sec:small}

When $p_0$ is smaller than the threshold value of $0.71$, our strategy is to find a way to ``increase $p_0$'' in the input data. 
If the system has a spectral gap $\Delta=\lambda_1-\lambda_0\gg \epsilon$, we can then use the algorithm from a previous work \cite{LinTong2022} to construct an eigenvalue filter to effectively filter out the contribution above $\lambda_0+\Delta/2$ in the initial state, using a circuit with maximal runtime $\wt{\Theta}(\Delta^{-1})$. 
The effective value of $p_0$ in the filtered data can be approximately $1$, and the multi-level QCELS algorithm becomes applicable.

The spectral gap is a property of the Hamiltonian.  For many quantum systems of interest, the spectral gap $\Delta$ can be very small. Since QCELS can accurately estimate the eigenvalues starting from an inexact eigenstate, the filtering step does not need to be perfect either if $p_0$ is small.
Consider an interval $I$ containing $\lambda_0$,  a larger interval $I'\supset I$, and we define the distance
\begin{equation}
\label{eqn:def_D}
D=\operatorname{dist}((I')^c,I)=\min_{x_1\notin I',x_2\in I}|x_1-x_2|. 
\end{equation}
Then the \emph{relative overlap} of the initial vector $\ket{\psi}$ with the ground state (as defined in  \cref{eqn:relative_overlap}), denoted by $p_r(I,I')$, plays the role of the effective value of $p_0$. Specifically, if $p_r(I,I') \ge 0.71$, we can effectively filter out the contribution from $(I')^c$ in the initial state using the algorithm in \cite{LinTong2022} with maximal runtime $\wt{\Theta}(D^{-1})$.   
After the filtering operation, the relative overlap  plays the role of $p_0$ in the previous section, and we can apply the multi-level QCELS algorithm to estimate $\lambda_0$ with respect to the filtered data. 

The concept of the relative overlap may allow us to  estimate the ground-state energy for certain small gapped systems with a short-depth circuit, especially when $p_r(I,I')\approx 1$ and $D$ is much larger than the spectral gap $\Delta$. Furthermore, unlike the spectral gap assumption which is a property of the Hamiltonian, the relative overlap is a property of the initial state. This introduces flexibility in the initial state design that may be useful for future explorations. 

\subsection{Algorithm}

The modified algorithm has three steps:

\begin{enumerate}

\item \textbf{Rough estimation of $\lambda_0$:} We obtain two rough estimation intervals $I\subset I'\subset [-\pi,\pi]$ for $\lambda_0$,
meaning that $\lambda_0\in I$ and $p_r(I,I')\approx 1$.

\item \textbf{Eigenvalue filtering to remove high energy contribution:} Define a polynomial  $F_q(x)=\sum^d_{l=-d}\hat{F}_{l,q}\exp(ilx)$ such that: 
\begin{equation}\label{eqn:require_f}
\begin{aligned}
&(1)\,|F_q(x)-1|\leq q,\quad \forall x\in I;\quad (2)\,|F_q(x)|\leq q,\quad \forall  x\in [-\pi,\pi]\setminus I'\,.
\end{aligned}
\end{equation}

We use \cite[Lemma 6]{LinTong2022} to construct $F_q$ that satisfies \eqref{eqn:require_f} with $d=\Theta(D^{-1}\mathrm{polylog}(q^{-1}))$, $\left|\hat{F}_{l,q}\right|=\Theta(|l|^{-1})$, and $\sum^{d}_{l=-d} \left|\hat{F}_{l,q}\right|=\Theta(\log(d))$.

\item \textbf{Refined estimation of $\lambda_0$ with multi-level QCELS:} We can apply Algorithm \ref{alg:main} with the filtered data set (see detail in \cref{alg:main2}) to obtain an accurate estimation of the ground state energy. 
\end{enumerate}

Define
\begin{equation}
\hat{F}_{l,q}=\abs{\hat{F}_{l,q}}e^{i\phi_{l,q}}, \quad \beta_l=\left|\hat{F}_{l,q}\right|/\sum^{d}_{l=-d} \left|\hat{F}_{l,q}\right|.
\end{equation}
The main algorithm is summarized in Algorithm \ref{alg:main2}. Here the ``DataGenerator''  is used to filter out the high energy contribution. According to the construction of $F_{q}$ and $Z_{n,q}$, we have
\[
\begin{aligned}
\left|Z_{n,q}-p_0\exp(-i \lambda_0 n\tau )\right|&\approx \left|p_0 (F_q(\lambda_0)-1)\exp(-i \lambda_0 n\tau )+\sum^{M-1}_{m=1}p_mF_q(\lambda_m)\exp(-i \lambda_m n\tau )\right|\\
&\leq \frac{\left(1-p_r(I,I')\right)p_0}{p_r(I,I')}+q\,.
\end{aligned}
\]
This implies the new data set successfully removes the high energy contribution when $q\ll 1$, $p_r(I,I')\approx 1$ and the solution $\theta^*$ to the optimization problem \eqref{eq:loss_small_p0} should be a good approximation to $\lambda_0$. We also need a sequence of data set $\left\{\left(n\tau_j,Z_{n,q}\right)\right\}^{N-1}_{n=0}$ with an increasing sequence of $\{\tau_j\}^J_{j=1}$ to shrink and keep the correct estimation interval. Similar to \cref{alg:main}, we propose the choice of $\tau_{j+1}=2\tau_j$ (see \cref{thm:qcels_small_p0} for the precise choice of $\tau_j$ and $J$).

\begin{breakablealgorithm}
      \caption{Multi-level QCELS based ground-state energy estimation with  small initial overlap}
  \label{alg:main2}
  \begin{algorithmic}[1]
  \State \textbf{Preparation:} Number of data pairs: $N$; number of samples: $N_s$; number of iterations: $J$; sequence of time steps: $\{\tau_j\}^{J}_{j=1}$;
  \State \textbf{Prepare a rough estimation:} 
  
  \State Generate estimation intervals of $\lambda_0$ such that $\lambda_0\in I\subset I'$.
  \State \textbf{Running:}
  \State $\lambda_{\min}\gets-\pi$; $\lambda_{\max}\gets\pi$; \Comment{$[\lambda_{\min},\lambda_{max}]$ is the estimation interval of $\lambda_0$}
  \State $j\gets 1$;
  \While{$j\leq J$}
  \State Generate a data set $\left\{\left(n\tau_j,Z_{n,q}\right)\right\}^{N-1}_{n=0}$ using DataGenerator($\tau_j,d,N_s$). 
  
  \State Define loss function $L_{j}(r,\theta)$ 
  \begin{equation}\label{eq:loss_small_p0} 
    L_{j}(r,\theta)=\frac{1}{N}\sum^{N-1}_{n=0}\left|Z_{n,q}-r\exp(-i\theta n\tau_j)\right|^2\,.
\end{equation}
  \State Minimizing loss function: $      (r^*_j,\theta^*_j)\gets\mathrm{argmin}_{r\in\mathbb{C},\theta\in[-\lambda_{\min},\lambda_{\max}]} L_j(r,\theta)$.
  \State $\lambda_{\min}\gets\theta^*_j-\frac{\pi}{2\tau_j}$;  $\lambda_{\max}\gets\theta^*_j+\frac{\pi}{2\tau_j}$ \Comment{Shrink the search interval by $1/2$.}
  \State $j\gets j+1$
  \EndWhile
  \State \textbf{Output:} $\theta^*$
    \end{algorithmic}
    \hfill
\begin{algorithmic}[1]
\Function{DataGenerator}{$\tau,d,N_s$} \Comment{Generate the filtered data set}
  \State $k\gets 1$;
  \While{$k\leq N_s$}
  \State Generate a random variable $r\in [-d,d-1,\dots,d]$ with the distribution $\mathbb{P}(r=l)=\beta_l$.
  \State Run the quantum circuit (Figure \ref{fig:qc}) with $t_n=r+n\tau$ and $W=I$ to obtain $\widetilde{X}_{k,n}$.
  \State Run the quantum circuit (Figure \ref{fig:qc}) with $t_n=r+n\tau$ and $W=S^\dagger$ to obtain $\widetilde{Y}_{k,n}$.
  \State $Z_{k,n,q}\gets\left(\widetilde{X}_{k,n}+i\widetilde{Y}_{k,n}\right)\exp(i\phi_{r,q})\left(\sum^{d}_{j=-d} \left|\hat{F}_{j,q}\right|\right)$.
  \State $k\gets k+1$
  \EndWhile
 \State $Z_{n,q}\gets\frac{1}{N_s}\sum^{N_s}_{k=1}Z_{k,n,q}$.
\EndFunction
  \end{algorithmic}
\end{breakablealgorithm}

\subsection{Complexity analysis}
In this section, we analyze the complexity of \cref{alg:main2} to show that it can reduce the circuit depth and maintain the Heisenberg-limited scaling. Define the expectation estimation error:
\begin{equation}\label{eqn:Enq}
E_{n,q}=Z_{n,q}-\left\langle\psi\right|F_q(H)\exp(-in\tau H)\ket{\psi}=\frac{1}{N_s}\sum^{N_s}_{k=1}Z_{k,n,q}-\mathbb{E}(Z_{k,n,q})\,,
\end{equation}
and 
\[
G_{n,q}=Z_{n,q}-p_0\exp(-i \lambda_0 n\tau)=E_{n,q}+p_0(F_q(\lambda_0)-1)\exp(-i\lambda_0n\tau)+\sum^{M-1}_{k=1}p_kF_q(\lambda_k)\exp(-i\lambda_kn\tau)\,.
\]

Using \eqref{eqn:require_f}, we obtain 
\[
|G_{n,q}|\leq |E_{n,q}|+q+\frac{\left(1-p_r(I,I')\right)p_0}{p_r(I,I')}\,.\quad
\]
By choosing a large $N_s$ to reduce the expectation error $|E_{n,q}|$, increasing the quality of the filter to reduce the approximation error $q$, and assuming $p_r(I,I')\approx 1$, we can effectively reduce the error $G_{n,q}$. These choices will let us find a good approximation of $\lambda_0$ by solving the optimization problem. 
In Algorithm \ref{alg:main2}, constructing the loss function contains two steps. First, to construct the eigenvalue filter, the required circuit depth is $d=\widetilde{\Theta}(D^{-1})$. We then combine the eigenvalue filter with the Algorithm \ref{alg:main} to construct the filtered data set. This increases the circuit depth to $T_{\max}=d+\delta/\epsilon=\widetilde{\Theta}(D^{-1})+\delta/\epsilon$. To construct an enough accurate loss function, the number of repetitions is $NN_s=\widetilde{\Theta}\left(p^{-2}_0\delta^{-(2+o(1))}
\right)$. This implies the total evolution time  $T_{\mathrm{total}}=\widetilde{\Theta}\left(p^{-2}_0\delta^{-(2+o(1))}\left(D^{-1}+\delta/\epsilon\right)\right)$.

The result is summarized in the following theorem:
\begin{thm}[Complexity of Algorithm \ref{alg:main2}, informal]\label{thm:qcels_small_p0} Given any failure probability $0<\eta<1$, target precision $0<\epsilon<1/2$, and knowledge of the relative overlap $p_r(I,I')\ge 0.71$, we can set $\delta= \Theta\left(\sqrt{1-p_r(I,I')}\right)$, $d=\widetilde{\Theta}(D^{-1})$, $q=\Theta(p_0\delta^2)$, $NN_s= \widetilde{\Omega}\left(p^{-2}_0\delta^{-(2+o(1))}\right)$, $\min\{N,N_s\}= \widetilde{\Omega}(p^{-2}_0)$, and $J,\tau_j$ according to \cref{thm:main} with $T=N\tau_J=\delta/\epsilon$. Then,
\[
\mathbb{P}\left(\left|(\theta^*-\lambda_0)\ \mathrm{mod}\ [-\pi,\pi) \right|<\epsilon\right)\geq 1-\eta\,,
\]
where $\theta^*$ is the output of \cref{alg:main2}. In particular,  the maximal evolution time is $T_{\max}=d+\delta/\epsilon=\widetilde{\Theta}(D^{-1})+\delta/\epsilon$ and the total evolution time is $T_{\mathrm{total}}=\widetilde{\Theta}\left(p^{-2}_0\delta^{-(2+o(1))}\left(D^{-1}+\delta/\epsilon\right)\right)$.
\end{thm}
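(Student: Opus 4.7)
The plan is to reduce Theorem \ref{thm:qcels_small_p0} to the large-overlap case of Theorem \ref{thm:main} by showing that, after eigenvalue filtering, the data set $\{(n\tau_j, Z_{n,q})\}_{n=0}^{N-1}$ behaves like a QCELS input in which the ground state has effective overlap close to $1$, up to additive noise driven below the $O(p_0 \delta^2)$ threshold required by the analysis behind Theorem \ref{thm:qcels_simple}. First I would analyze the data-generation subroutine. Writing $E_{n,q} = Z_{n,q} - \mathbb{E}[Z_{n,q}]$, the expectation decomposes as
\[
\mathbb{E}[Z_{n,q}] = \langle \psi | F_q(H) e^{-i n \tau_j H} | \psi \rangle = p_0 F_q(\lambda_0) e^{-i \lambda_0 n\tau_j} + \sum_{m\ge 1} p_m F_q(\lambda_m) e^{-i \lambda_m n\tau_j},
\]
and I would split the residual sum into contributions with $\lambda_m \in I' \setminus \{\lambda_0\}$ and $\lambda_m \notin I'$. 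Using $|F_q(\lambda_0) - 1| \le q$ on $I$, the tail bound $|F_q(\lambda_m)| \le q$ outside $I'$, and the definition of $p_r(I,I')$, the deterministic bias $G_{n,q} - E_{n,q}$ is controlled by $(1 - p_r(I,I')) p_0 / p_r(I,I') + q$. Since $\delta = \Theta(\sqrt{1 - p_r(I,I')})$ and $q = \Theta(p_0 \delta^2)$, both contributions are $O(p_0 \delta^2)$.

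Next I would control the sampling error $E_{n,q}$. Each summand $Z_{k,n,q}$ in \eqref{eqn:Enq} has magnitude at most $\sum_l |\hat F_{l,q}| = \Theta(\log d)$, so Hoeffding's inequality, together with a union bound over $n = 0, \ldots, N-1$ and the $J = O(\log(1/\epsilon))$ iteration levels, gives $|E_{n,q}| = \widetilde{O}((NN_s)^{-1/2})$ with probability at least $1 - \eta$; the $\log d = \widetilde{\Theta}(1)$ factor is absorbed into the $o(1)$ of the sample exponent. The choice $NN_s = \widetilde{\Omega}(p_0^{-2}\delta^{-(2+o(1))})$ is calibrated precisely so that $|E_{n,q}| = O(p_0 \delta^2)$ holds simultaneously at every $(n,j)$; the side condition $\min\{N, N_s\} = \widetilde{\Omega}(p_0^{-2})$ ensures that neither factor alone is subcritical in the scalar Bernstein/Hoeffding step needed by the QCELS post-processing.

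With $G_{n,q}$ uniformly $O(p_0 \delta^2)$ on the good event, at each level $j$ the filtered data is of the form $Z_{n,q} = p_0 e^{-i \lambda_0 n \tau_j} + R_n$ with $\|R_n\|_\infty = O(p_0 \delta^2)$. This is structurally the setting handled by Theorem \ref{thm:qcels_simple} with effective overlap close to $1$ (rescaled by $p_0$, which explains the $p_0^{-2}$ factor in the sample count), so minimizing $L_j$ yields $\theta_j^*$ with $|(\theta_j^* - \lambda_0) \bmod [-\pi/\tau_j, \pi/\tau_j)| \le \delta/(N\tau_j)$. The geometric schedule $\tau_{j+1} = 2\tau_j$, together with the shrinking search window of width $\pi/\tau_j$ centered at $\theta_j^*$, eliminates all spurious aliases inductively, exactly as in the proof of Theorem \ref{thm:main}; after $J$ iterations the final choice $N\tau_J = \delta/\epsilon$ gives $|\theta^* - \lambda_0| \le \epsilon$. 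For cost accounting, each circuit call has runtime at most $|r| + (N-1)\tau_j \le d + N\tau_J$ since $r \in [-d, d]$, yielding $T_{\max} = \widetilde{\Theta}(D^{-1}) + \delta/\epsilon$; summing over $J$ levels and $NN_s$ shots per level, with per-shot expected runtime dominated by $\widetilde{\Theta}(D^{-1}) + N\tau_j$, gives the claimed $T_{\mathrm{total}}$.

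The main obstacle, as I see it, is the simultaneous bookkeeping of three independent error sources at every level: the polynomial tail error $q$, the filter leakage $(1 - p_r(I,I'))/p_r(I,I')$ from eigenvalues in $I' \setminus \{\lambda_0\}$, and the Monte Carlo error whose per-sample magnitude is inflated by $\log d$. Tuning $q$, $\delta$, and $NN_s$ so that all three are simultaneously $O(p_0 \delta^2)$, while fitting a single failure budget $\eta$ through the union bound over the $J$ levels and over the grid of initial guesses needed by the nonconvex landscape of $L_j$, is the delicate part; the rest is a direct adaptation of the analyses behind Theorems \ref{thm:qcels_simple} and \ref{thm:main}.
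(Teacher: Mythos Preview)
Your decomposition into deterministic filter bias plus Monte Carlo error is correct, as is the reduction of the multi-level structure to the argument behind \cref{thm:main}. But there is a genuine gap in the sampling-error step, and it is precisely where the nontrivial work of this proof lives.

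First, the Hoeffding claim is misstated: each $E_{n,q}$ is an average of $N_s$ bounded samples, so Hoeffding gives $|E_{n,q}|=\widetilde{O}(N_s^{-1/2})$, not $(NN_s)^{-1/2}$; a union bound over $n$ cannot manufacture the missing $N^{-1/2}$. Only the \emph{averaged} quantity $\overline{E}_{\theta,q}=N^{-1}\sum_n E_{n,q}e^{i\theta n\tau}$ enjoys the $(NN_s)^{-1/2}$ rate, because it pools all $NN_s$ independent shots with cancellation. Second, and more importantly, even with the correct bound $|\overline{E}_{\theta,q}|=\widetilde{O}((NN_s)^{-1/2})$, your calibration fails: with $NN_s\sim p_0^{-2}\delta^{-(2+o(1))}$ one gets $(NN_s)^{-1/2}\sim p_0\delta^{1+o(1)}$, which is one full power of $\delta$ short of the $O(p_0\delta^2)$ you need. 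Forcing the noise to $O(p_0\delta^2)$ by this route would require $NN_s=\widetilde{\Omega}(p_0^{-2}\delta^{-4})$.

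The $\delta^{-(2+o(1))}$ exponent is not obtainable from a single concentration step; the paper uses a bootstrap. A rough-estimate argument (\cref{lem:small_p0}, the analogue of \cref{prop:qcels_simple_full}) with a larger initial $\delta_0$ first localizes $\theta^*$ within $\delta_0/T$ of $\lambda_0$. Once localized, the relevant error becomes $|\overline{E}_{\theta^*,q}-\overline{E}_{\lambda_0,q}|$, which by Lipschitz continuity of $\theta\mapsto\overline{E}_{\theta,q}$ picks up an extra factor of $\delta_0$ (\cref{lem:bound_of_En_q}, \cref{eqn:Enbound_new2_q}). This permits shrinking $\delta_0$ to a smaller $\delta_1$, and iterating along a geometric sequence $\delta_i=\Theta(\delta_M^{(2-2^{-i})/(2-2^{-M})})$ (\cref{lem:2,lem:1_new}) drives the sample-complexity exponent from $4$ down to $4/(2-2^{-M})=2+o(1)$. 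Your proposal, as written, recovers only the $\delta^{-4}$ rough estimate; the bootstrap is the missing ingredient.
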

The detailed statement and the proof of this theorem can be found in Appendix \ref{sec:complexity_small_p0}. This theorem is an analogue of \cref{thm:main}. As a special case, we assume the spectral gap $\Delta=\lambda_1-\lambda_0$ is much larger than the precision $\epsilon$. We can construct $I=[-\pi,\lambda_{\mathrm{prior}}+\Delta/4]$, $I'=[-\pi,\lambda_{\mathrm{prior}}+3\Delta/4]$, and $D=\Delta/2$, where $\lambda_{\mathrm{prior}}$ is a rough estimation of $\lambda_0$ such that $|\lambda_{\mathrm{prior}}-\lambda_0|\leq \Delta/4$. 
Then \cref{thm:qcels_small_p0} gives the following complexity estimate:

\begin{cor}[Complexity of Algorithm \ref{alg:main2} with a spectral gap]\label{cor:small_p0}Given any $0<\delta<1$,   failure probability $0<\eta<1$,  target precision $0<\epsilon<1/2$, and spectral gap $\Delta=\lambda_1-\lambda_0$, we can set $d=\widetilde{\Theta}(\Delta^{-1})$, $q=\widetilde{\Theta}\left(p_0\delta^{2}\right)$,  $NN_s= \widetilde{\Theta}\left(p^{-2}_0\delta^{-(2+o(1))}
\right)$, $\min\{N,N_s\}= \widetilde{\Theta}(p^{-2}_0)$, and $J,\tau_j$ according to \cref{thm:main}.  Then, 
\[
\mathbb{P}\left(\left|(\theta^*-\lambda_0)\ \mathrm{mod}\ [-\pi,\pi) \right|<\epsilon\right)\geq 1-\eta\,,
\]
In particular, we have
\begin{itemize}
    \item Cost of preparing rough estimation (construct $\lambda_{\mathrm{prior}},I,I'$): The maximal evolution time $T_{\max,1}=\widetilde{\Theta}(\Delta^{-1})$, the total evolution time $T_{\mathrm{total},1}=\widetilde{\Theta}(\Delta^{-1}p^{-2}_0)$.
    \item Cost of constructing the loss function: The maximal evolution time $T_{\max,2}=d+\delta/\epsilon=\widetilde{\Theta}(\Delta^{-1})+\delta/\epsilon$, the total evolution time $T_{\mathrm{total},2}=\widetilde{\Theta}\left(p^{-2}_0\delta^{-(2+\zeta)}\left(\Delta^{-1}+\delta/\epsilon\right)\right)$.
\end{itemize}
\end{cor}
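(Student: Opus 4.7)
The plan is to derive \cref{cor:small_p0} as a direct specialization of \cref{thm:qcels_small_p0}, once we verify that the spectral gap assumption lets us construct intervals $I\subset I'$ with relative overlap exactly $1$ (or arbitrarily close to $1$) and gap parameter $D=\Delta/2$. The argument therefore splits into two independent pieces: (i) bounding the cost of preparing the rough estimate $\lambda_{\mathrm{prior}}$ that defines $I$ and $I'$, and (ii) invoking \cref{thm:qcels_small_p0} on this choice of $I,I'$ to obtain the refined estimate.

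For step (i), I would invoke a standard rough ground-state energy estimation subroutine: for instance, the Fourier-filtering procedure from Ref.~\cite{LinTong2022} (or any equivalent method) that, for an initial overlap $p_0$ and a spectral gap $\Delta$, returns $\lambda_{\mathrm{prior}}$ with $|\lambda_{\mathrm{prior}}-\lambda_0|\le\Delta/4$ and failure probability $\le\eta/2$. The relevant complexity for that step is $T_{\max,1}=\widetilde\Theta(\Delta^{-1})$ and $T_{\mathrm{total},1}=\widetilde\Theta(\Delta^{-1}p_0^{-2})$, exactly as the corollary states; the $p_0^{-2}$ factor arises because the number of measurement repetitions needed to resolve the rough phase scales inversely with the squared overlap (one application of a Chernoff/Hoeffding bound to the filtered Hadamard-test data).

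For step (ii), set $I=[-\pi,\lambda_{\mathrm{prior}}+\Delta/4]$ and $I'=[-\pi,\lambda_{\mathrm{prior}}+3\Delta/4]$. The containment $\lambda_0\in I$ is immediate from $|\lambda_{\mathrm{prior}}-\lambda_0|\le\Delta/4$. The key observation is that no other eigenvalue lies in $I'$: for any $k\ge 1$, $\lambda_k\ge\lambda_0+\Delta\ge\lambda_{\mathrm{prior}}+3\Delta/4$, so $\lambda_k\notin I'$. Consequently the denominator in the definition \eqref{eqn:relative_overlap} of $p_r(I,I')$ reduces to $|\langle\psi|\psi_0\rangle|^2=p_0$, giving $p_r(I,I')=1\ge 0.71$. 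Moreover, $D=\operatorname{dist}((I')^c,I)=\Delta/2$. Feeding these values of $p_r$ and $D$ into \cref{thm:qcels_small_p0} (and using that $\delta$ may be chosen freely in $(0,1)$ when $p_r=1$, matching \cref{eqn:delta_1}) yields $T_{\max,2}=\widetilde\Theta(\Delta^{-1})+\delta/\epsilon$ and $T_{\mathrm{total},2}=\widetilde\Theta(p_0^{-2}\delta^{-(2+o(1))}(\Delta^{-1}+\delta/\epsilon))$, together with success probability at least $1-\eta/2$ for the refinement stage.

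Combining the two stages via a union bound gives total failure probability at most $\eta$, and the two pairs $(T_{\max,1},T_{\mathrm{total},1})$ and $(T_{\max,2},T_{\mathrm{total},2})$ are reported separately as in the corollary statement. The main conceptual obstacle, and essentially the only step that is not bookkeeping, is verifying that the chosen $I,I'$ indeed give $p_r(I,I')=1$ under the gap assumption; once this is in hand the corollary is a direct application of the already-proven \cref{thm:qcels_small_p0}. A minor technical obstacle is ensuring the rough-estimation subroutine in step (i) is compatible with the filtering step (i.e.\ it really achieves the claimed $\widetilde\Theta(\Delta^{-1}p_0^{-2})$ total time with only a single ancilla and the same $e^{-i\tau H}$ oracle), which I would address by citing the explicit subroutine from Ref.~\cite{LinTong2022} rather than reproving it.
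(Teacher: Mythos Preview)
Your proposal is correct and follows essentially the same approach as the paper: the paper derives the corollary directly from \cref{thm:qcels_small_p0} by choosing $I=[-\pi,\lambda_{\mathrm{prior}}+\Delta/4]$, $I'=[-\pi,\lambda_{\mathrm{prior}}+3\Delta/4]$, $D=\Delta/2$ (so that $p_r(I,I')=1$), with $\lambda_{\mathrm{prior}}$ obtained via the rough-estimation routine from \cite{LinTong2022}. Your added details (the explicit verification that $\lambda_k\notin I'$ for $k\ge 1$, the union bound over the two stages) are sound elaborations of what the paper leaves implicit.
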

In \cref{cor:small_p0}, if $\epsilon\ll \Delta$, we can choose $\delta=\epsilon/\Delta$, then $T_{\max}=\max\{T_{\max,1},T_{\max,2}\}\approx\widetilde{\Theta}(\Delta^{-1})$ and $T_{\mathrm{total}}=T_{\mathrm{total},1}+T_{\mathrm{total},2}\approx\widetilde{\Theta}(p^{-2}_0\Delta\epsilon^{-2})$. This recovers the results of \cite[Theorem 1.1]{Wang_2022}. Note that with such a choice of $\delta$, the total cost does not satisfy the Heisenberg-limited scaling.

\section{Numerical simulation}\label{sec:ns}
In this section, we numerically demonstrate the efficiency of our method using two different models. In \cref{sec:Ising}, we assume a large initial overlap, and compare the performance of Algorithm \ref{alg:main} with QPE for a transverse-field Ising model. In \cref{sec:Hubbard}, we assume the initial overlap is small, and compare the performance of Algorithm \ref{alg:main2} with QPE for a Hubbard model. The Hamiltonian is constructed using the QuSpin package~\cite{WeinbergBukov2017}. Our numerical experiments are available via github (\url{https://github.com/zhiyanding/QCELS}).   

In our numerical experiments, we normalize the spectrum of original Hamiltonian $H$ so that the eigenvalues belong to $[-\pi/4,\pi/4]$. Given a Hamiltonian $H$, we define the normalized Hamiltonian:
\begin{equation}\label{eqn:normalize_H}
\widetilde{H}=\frac{\pi H}{4\|H\|_2}\,.
\end{equation}
We then use the QCELS based \cref{alg:main} or \cref{alg:main2}, as well as QPE to estimate the smallest eigenvalue of $\widetilde{H}$ and measure the error accordingly.

\subsection{Ising model}\label{sec:Ising}
Consider the one-dimensional transverse field Ising model (TFIM) model defined on $L$ sites with periodic boundary conditions:
\begin{equation}\label{eqn:H_Ising}
H=-\left(\sum^{L-1}_{i=1} Z_{i}Z_{i+1}+Z_{L}Z_1\right) -g\sum^m_{i=1} X_i
\end{equation}
where $g$ is the coupling coefficient, $Z_i,X_i$ are Pauli operators for the $i$-th site and the dimension of $H$ is $2^L$.
We choose $L=8,g=4$. We apply Algorithm \ref{alg:main} (referred to as QCELS for simplicity in this subsection) and QPE to estimate $\lambda_0$ of the normalized Hamiltonian $\widetilde{H}$ (see \cref{eqn:normalize_H}). In the test, we set $p_0=0.6,0.8$ and implement QCELS (with $N=5,N_s=100$) and QPE $10$ times to compare the averaged error. The comparison of the results is shown in \cref{fig:Ising}. The errors of both QPE and QCELS are proportional to the inverse of the maximal evolution time ($T_{\max}$). But the constant factor $\delta=T\epsilon$ of QCELS is much smaller than that of QPE. \cref{fig:Ising} shows that QCELS reduces the maximal evolution time by two orders of magnitude, even  in this case when $p_0=0.6$ is smaller than the theoretical threshold $0.71$. This suggests that the numerical performance of QCELS can be significantly better than the theoretical prediction in \cref{thm:main}.
The error of QPE is observed to scale as $6\pi/T$. Moreover, the total evolution time ($T_{\mathrm{total}})$ of QCELS is also smaller (by nearly an order of magnitude) than that of QPE. 
 
\begin{figure}[htbp]
     \centering
     \subfloat{
         \centering
         \includegraphics[width=0.48\textwidth]{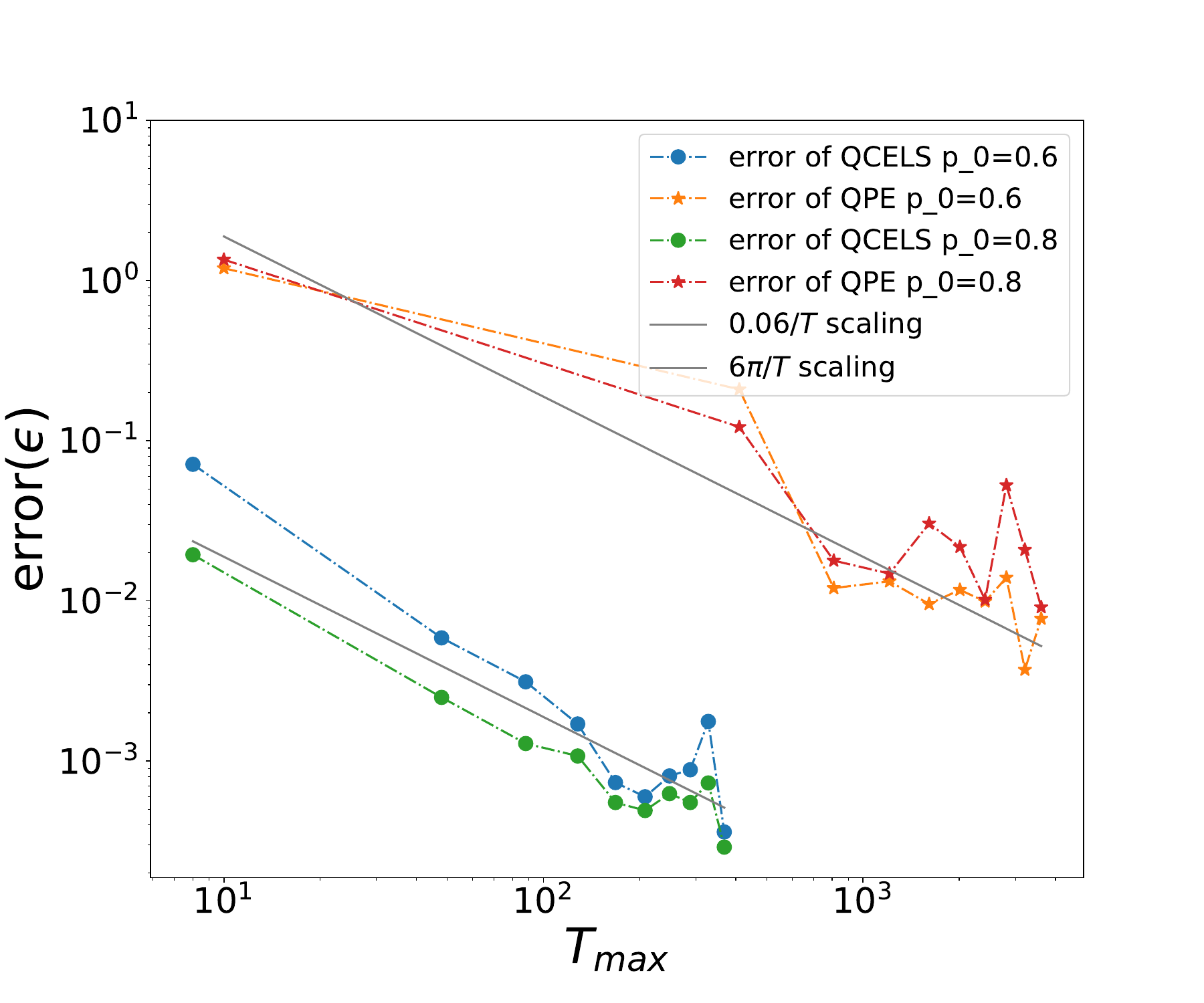}
     }
     \hfill
     \subfloat{
         \centering
         \includegraphics[width=0.48\textwidth]{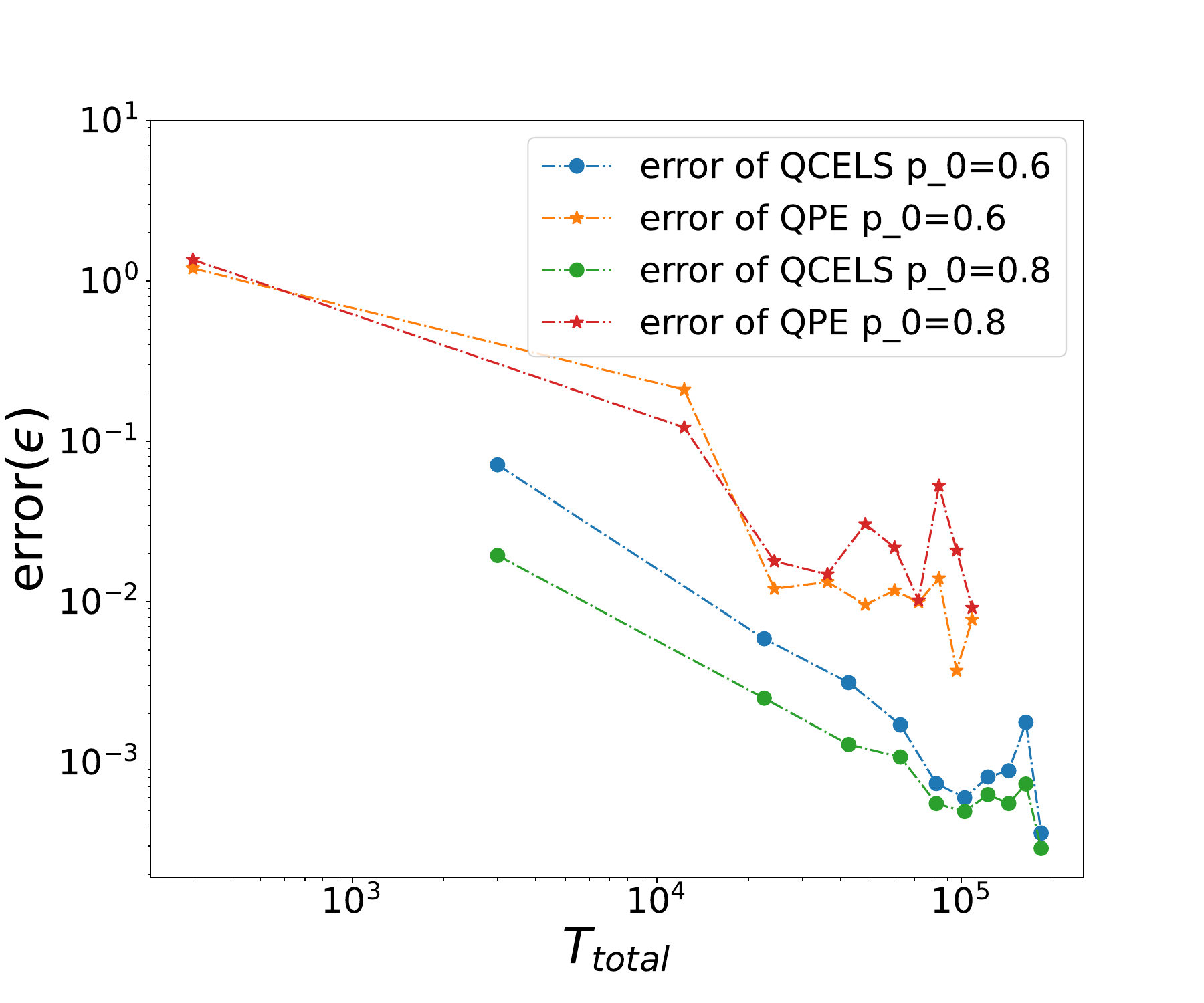}
     }
     \caption{
     \label{fig:Ising} QPE vs QCELS in TFIM model with 8 sites. The initial overlap is large ($p_0=0.6,0.8$). Left: Depth ($T_{\max}$); Right: Cost ($T_{\mathrm{total}}$). For QCELS, we choose $N=5, N_s=100$. $J$ and $\tau_j$ are chosen according to \cref{thm:main}. Both methods have the error scales linearly in $1/T_{\max}$.  The constant factor $\delta=T\epsilon$ of QCELS is much smaller than that of QPE.}
\end{figure}

\subsection{Hubbard model}\label{sec:Hubbard}
Consider the one-dimensional Hubbard model defined on $L$ spinful sites with open boundary conditions
\[
H=-t\sum^{L-1}_{j=1}\sum_{\sigma\in\{\uparrow,\downarrow\}}c^\dagger_{j,\sigma}c_{j+1,\sigma}+U\sum^L_{j=1}\left(n_{j,\uparrow}-\frac{1}{2}\right)\left(n_{j,\downarrow}-\frac{1}{2}\right).
\]
Here $c_{j,\sigma}(c^\dagger_{j,\sigma})$ denotes the fermionic annihilation (creation) operator on the site $j$ with spin $\sigma$. $\left\langle\cdot,\cdot\right\rangle$ denotes sites that are adjacent to each other. $n_{j,\sigma}=c^\dagger_{j,\sigma}c_{j,\sigma}$ is the number operator. 

We choose $L=4,8$, $t=1$, $U=10$. To implement Algorithm \ref{alg:main2} (also referred to as QCELS for simplicity in this subsection) and QPE, we normalize $H$ according to \cref{eqn:normalize_H} and choose a small initial overlap ($p_0=0.1,0.4$).  Following the method in \cref{sec:small}, we first use the algorithm in \cite{LinTong2022} to find a rough estimation $\lambda_{\mathrm{prior}}$ of $\lambda_0$ such that $\left|\lambda_{\mathrm{prior}}-\lambda_0\right|\leq \frac{D}{2}$, where $D$ is chosen properly so that the relative overlap $p_{r}(I,I')>0.75$ with intervals $I=\left[-\pi,\lambda_{\mathrm{prior}}+D/2\right]$ and $I'=\left[-\pi,\lambda_{\mathrm{prior}}+3D/2\right]$. In our test, we set $D=(\lambda_K-\lambda_0)/4$ with $K=\sum^K_{k=1}p_k>p_0/3$. We find that the (normalized) relative gap ($D$) is $0.63$ and $0.26$ for $L=4,8$, respectively. This is significantly larger than the spectral gap, which is $0.018$ and $0.005$ for $L=4,8$, respectively.

After obtaining the rough estimation $\lambda_{\mathrm{prior}}$, we construct the eigenvalue filtering $F_q$ according to \cite[Lemma 6]{LinTong2022} to separate $I,I'$. Noticing $\mathrm{dist}(I,(I')^c)=D$, we set $d=\left\lfloor 15/D\right\rfloor$ to ensure a small enough approximation error $q$. We run QCELS with $N=5$ and $N_s=\left\lfloor 15p_0^{-2}\log(d)\right\rfloor$ and QPE $5$ times to compare the averaged error. The results are shown in \cref{fig:Hubbard_4} (4 sites) and \cref{fig:Hubbard_8} (8 sites). In both figures, it can be seen that the maximal evolution time of QCELS is almost two orders of magnitude smaller than that of QPE. The total cost of the two methods are comparable when $p_0=0.4$, and the total cost of QCELS is larger than that of QPE when $p_0=0.1$, mainly due to the increase of the number of repetitions $N_s$. We note that, for small $p_0$, since we first construct the eigenvalue filter $F_q$, the circuit depth of QCELS is at least $d=\left\lfloor 15/D\right\rfloor$. Thus, it is reasonable to choose $\tau_J\geq d$. This directly ensures a relatively small error ($\epsilon\leq 10^{-2}$) in our case. In other cases when the gap $D$ is large and only low accuracy is need, it may be possible to further reduce the circuit depth. 

\begin{figure}[htbp]
     \centering
     \subfloat{
         \centering
         \includegraphics[width=0.48\textwidth]{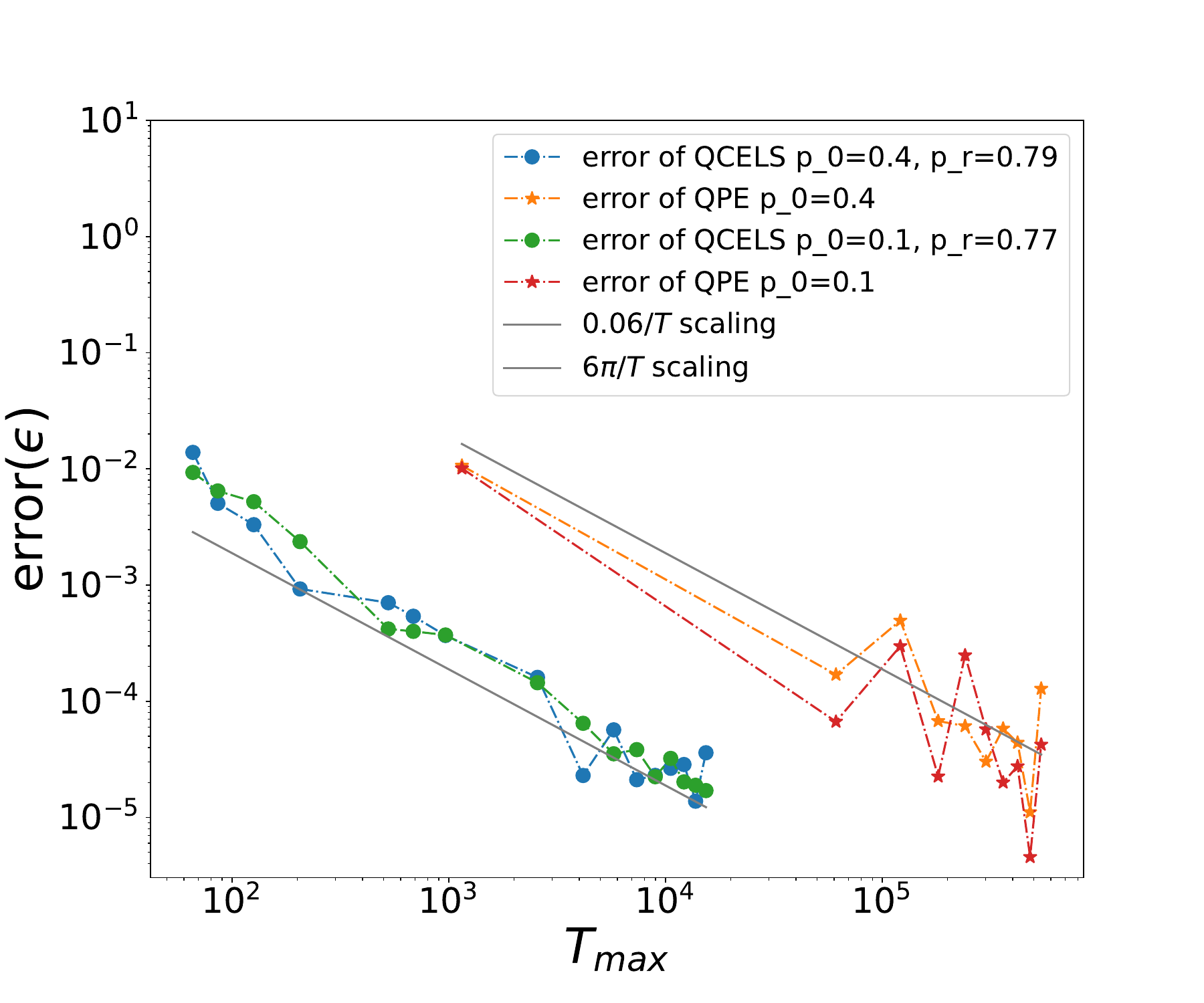}
     }
     \hfill
     \subfloat{
         \centering
         \includegraphics[width=0.48\textwidth]{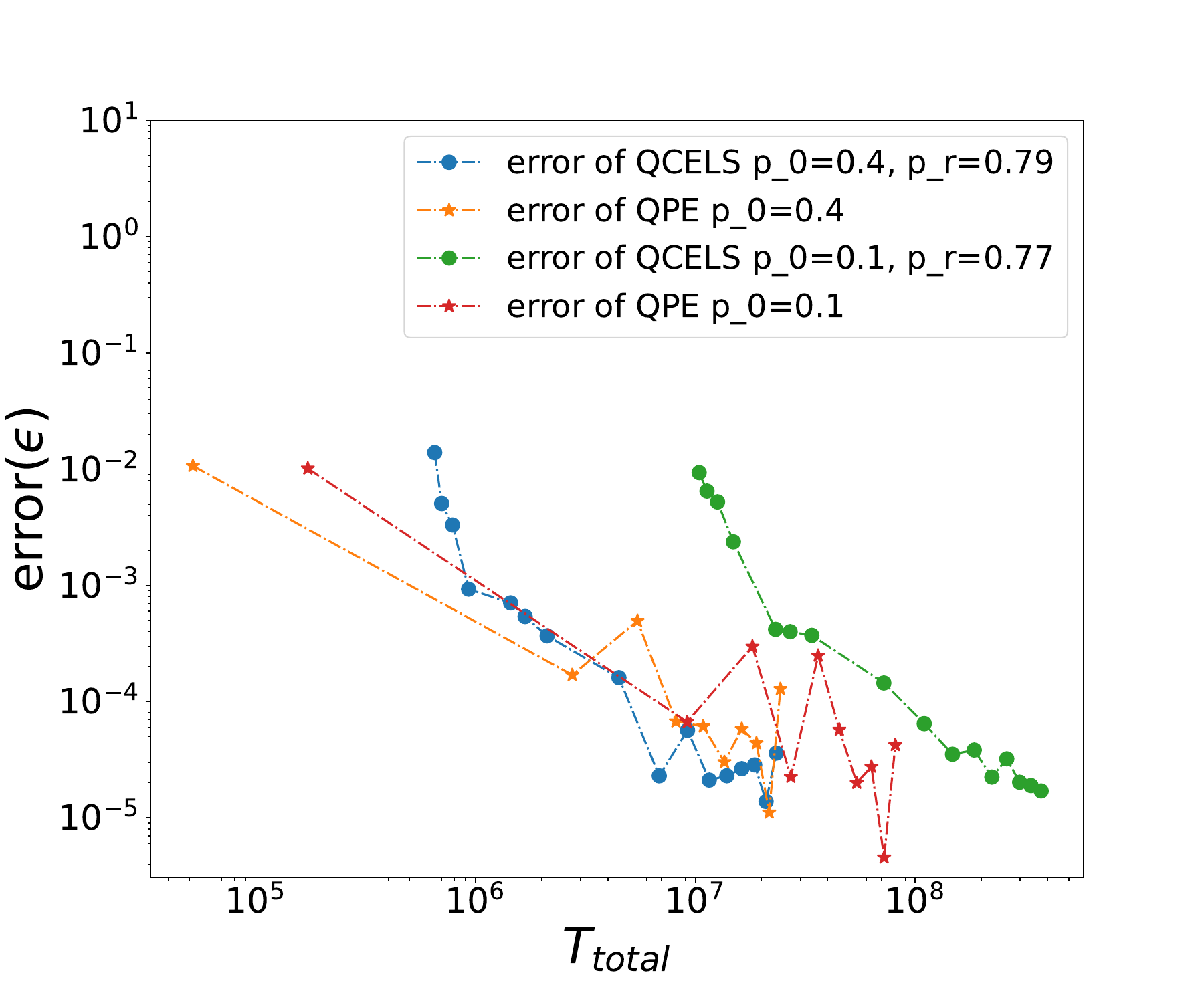}
     }
     \caption{
     \label{fig:Hubbard_4} QPE vs QCELS in Hubbard model with 4 sites. The initial overlap is small ($p_0=0.1,0.4$). Left: Depth ($T_{\max}$); Right: Cost ($T_{\mathrm{total}}$). For QCELS, we choose $N=5$ and $N_s=\left\lfloor 15p_0^{-2}\log(d)\right\rfloor$. $J,\tau_j$ are chosen according to \cref{cor:small_p0}. Compared with QPE, to achieve the same accuracy, QCELS requires a much smaller circuit depth.}
\end{figure}
\begin{figure}[htbp]
     \centering
     \subfloat{
         \centering
         \includegraphics[width=0.48\textwidth]{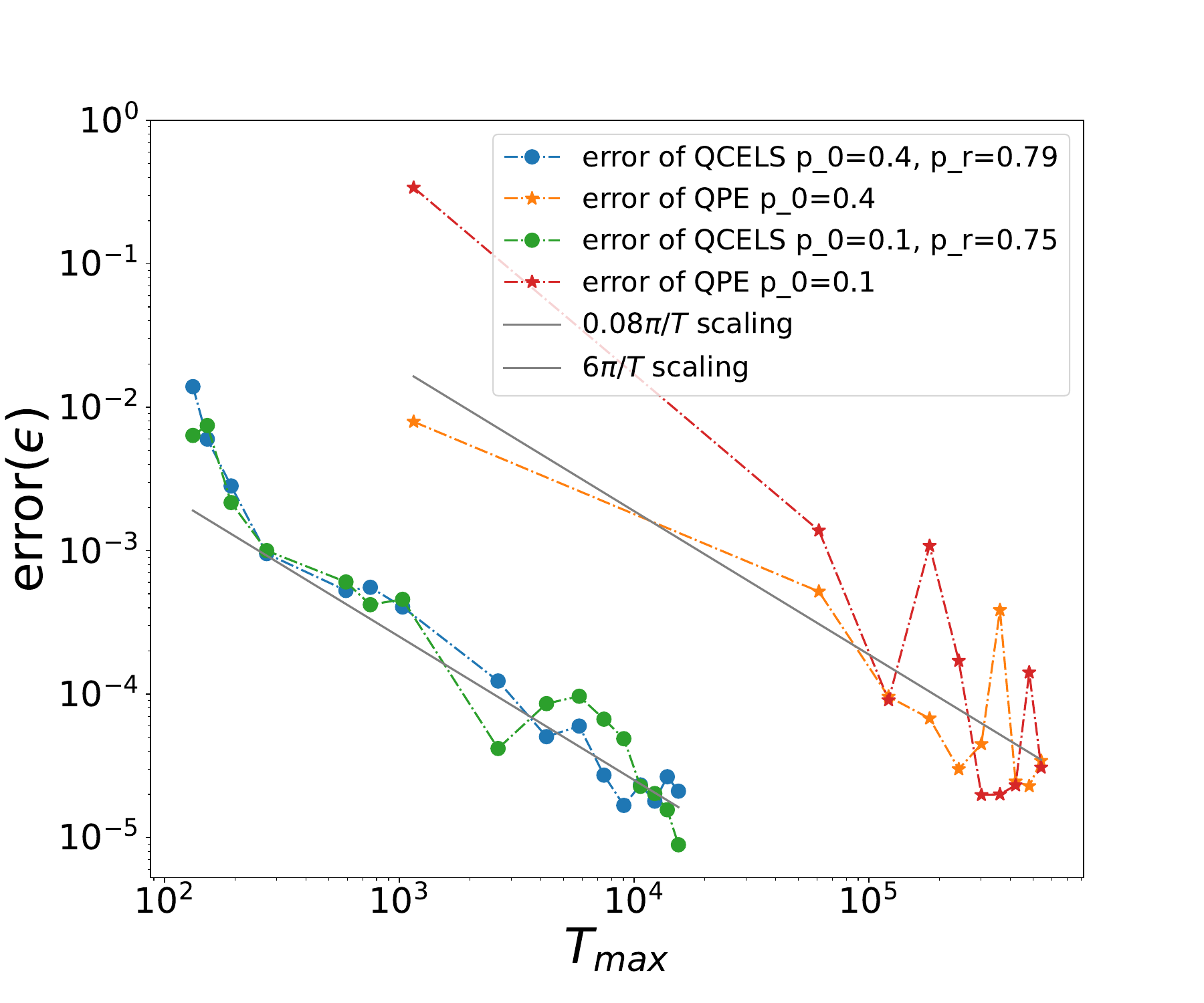}
     }
     \hfill
     \subfloat{
         \centering
         \includegraphics[width=0.48\textwidth]{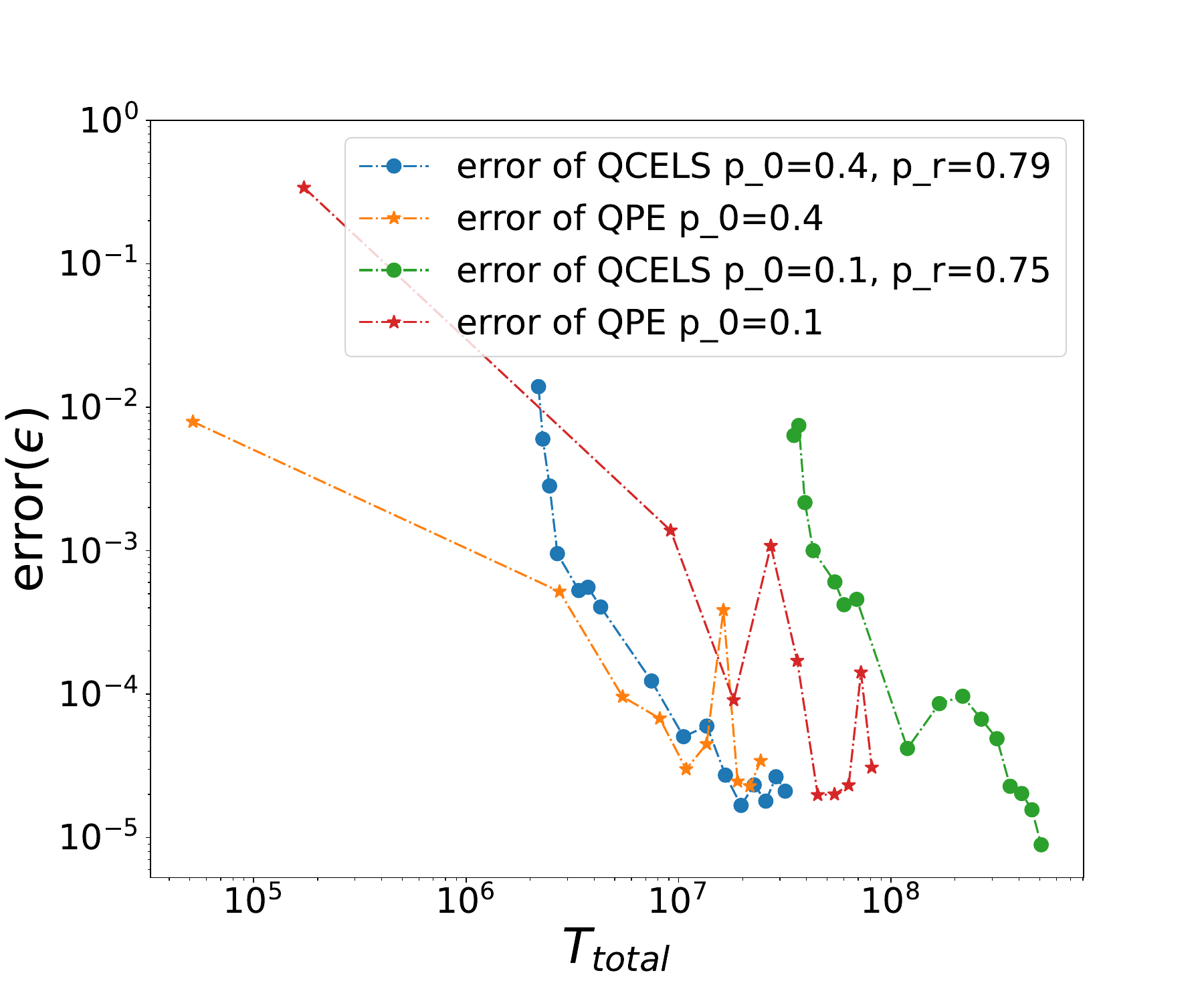}
     }
     \caption{
     \label{fig:Hubbard_8}  QPE vs QCELS in Hubbard model with 8 sites. Left: Depth ($T_{\max}$); Right: Cost ($T_{\mathrm{total}}$). For QCELS, we choose $N=5$ and $N_s=\left\lfloor 15p_0^{-2}\log(d)\right\rfloor$. $J,\tau_j$ are chosen according to \cref{cor:small_p0}. Compared with QPE, to achieve the same accuracy, QCELS has much smaller circuit depth.}
\end{figure}

\section{Discussion}\label{sec:discuss}

Due to the relatively transparent circuit structure and minimal number of required ancilla qubits, the quantum circuit in \cref{fig:qc} is suitable for early fault-tolerant quantum devices, and has received significant attention in performing a variety of tasks on quantum computers. 
Note that all algorithms in this paper (including the filtering algorithm in \cite{LinTong2022}) all use the same circuit, and the only difference is in the post-processing procedure.
This paper finds that the circuit in \cref{fig:qc} is even more powerful than previously thought for phase estimation and ground-state energy estimation, especially when the initial overlap $p_0$, or the relatively overlap $p_r$ is large. The advantage of our method can be theoretically justified when $p_0$ or $p_r$ approaches $1$. Numerical results show that even when $p_0$ or $p_r$ is away from $1$ (e.g. $0.8$), our algorithms can still outperform QPE and reduce the maximal runtime (and hence the circuit depth) by around two orders of magnitude. 

Viewed more broadly, the problem of post-processing the quantum data from the circuit in \cref{fig:qc} is a  signal processing problem using a simple (complex) exponential fitting function. Many methods have been developed in the context of classical signal processing for similar purposes (see e.g.,~\cite{BEYLKIN200517,DAMLE2013251,Hauer_1990,Bunse-Gerstner_1997}). We think that at least two features distinguish the quantum setting from the classical counterpart: (1) it is a priority to reduce the maximal runtime, and (2) each data point in the signal is inherently noisy, and the total number of measurements need to be carefully controlled. While these classical data processing methods can be applied to the phase estimation problem, we are not yet aware of analytical results demonstrating the efficiency of such methods in the quantum setting. Such connections could be an interesting direction to explore in the future. 

When the initial overlap $p_0$ is small, we have combined QCELS with the Fourier filtering algorithm in~\cite{LinTong2022} to effectively amplify this overlap as shown in \cref{alg:main2}. Another natural choice is to use the quantum eigenvalue transformation of unitary matrices (QETU)~\cite{dong2022ground}, which is a more powerful and slightly more complex circuit than that in \cref{fig:qc}, to amplify the overlap with the ground state. While we have demonstrated applications of QCELS-based algorithms to estimate ground-state energies, such algorithms may be useful in a much wider context, such as estimating excited-state energies and other observables~\cite{ZhangWangJohnson2022}. Simultaneous estimation of multiple eigenvalues using the same circuit is another interesting topic, which may open the door for developing efficient algorithms for a broader class of quantum systems with small spectral gaps.

\vspace{1em}
\textbf{Acknowledgment}

This work was supported by the NSF Quantum Leap Challenge Institute (QLCI) program under Grant number OMA-2016245, and by the Google Quantum Research Award. L.L. is a Simons Investigator. The authors thank Yulong Dong, Ethan Epperly, Daniel Stilck Franca, Peter Johnson and Yu Tong for discussions.

\bibliographystyle{abbrvnat}
\bibliography{ref}
\clearpage
\appendix

\renewcommand{\thesubsection}{\thesection.\arabic{subsection}}

\begin{center}
    {\Large \bf {Appendix}}
\end{center}

The appendix is organized as follows:

\begin{enumerate}[leftmargin=.5cm,itemindent=.5cm,labelwidth=\itemindent,labelsep=0cm,align=left]
    \item[Appendix \ref{sec:bound_En}.] We give two lemmas to bound the expectation error defined in \eqref{eqn:En}. These bounds are useful for the later proof.
    
    \item[Appendix \ref{sec:pf_prop}.] We give the proof of  \cref{thm:qcels_simple} in this section.
    
    \item[Appendix \ref{sec:pf_thm}.] We give the proof of  \cref{thm:main} in this section.
    
    \item[Appendix \ref{sec:complexity_small_p0}.] We give the proof of \cref{thm:qcels_small_p0} in this section.
    
    \item[Appendix \ref{sec:newub}-\ref{sec:lbgb}.] We prove some technical estimations that are used in our proof.
\end{enumerate}

\section{Error bound for the expectation estimation}\label{sec:bound_En}
In this section, we give a bound for $E_n$ defined in \eqref{eqn:En} with respect to $N,N_s$. Recall
\[
E_n=Z_n-\left\langle\psi\right|\exp(-in\tau H)\ket{\psi}=Z_n-\left(p_0 \exp(-i \lambda_0 n\tau )+\sum^{M-1}_{m=1}p_m\exp(-i \lambda_m n\tau)\right)
\]
and
\[
Z_n=\frac{1}{N_s}\sum^{N_s}_{k=1}\left(X_{k,n}+iY_{k,n}\right)
\]
where $X_{k,n},Y_{k,n}$ are independently generated by the quantum circuit (Fig \ref{fig:qc}) with different $W$ and satisfy \eqref{eqn:X}, \eqref{eqn:Y} respectively. We also define the average error term for each $\theta$ as
\[
\overline{E}_{\theta}=\frac{1}{N}\sum^{N-1}_{n=0}E_n\exp(i\theta n\tau)\,.
\]

In the following part of this section, we will prove the following bounds for $E_n$ and $\overline{E}_{\theta}$:
\begin{itemize}
    \item (\cref{lem:bound_of_En} equation \eqref{eqn:Enbound}): 
    
    Given $0<\eta<1/2$, when $\min\{N,N_s\}= \Omega(\log(\eta^{-1}))$,
    \begin{equation}\label{eqn:Enbound_firstappendix}
    \mathbb{P}\left(\frac{1}{N}\sum^{N-1}_{n=0}|E_n|\geq 10^{-3}\right)\leq \eta\,.
    \end{equation}
    
    \item (\cref{lem:bound_of_En_new} equation \eqref{eqn:Enbound_new}): 
    
    Given $0<\eta<1/2$ and $0<\rho,\xi<10\pi$, when $NN_s=\Omega\left(\xi^{-2}\mathrm{polylog}(\xi^{-1}\eta^{-1})\right)$,
    \begin{equation}\label{eqn:Enbound_new_firstappendix}
      \mathbb{P}\left(\sup_{\theta\in[\lambda_0-\frac{\rho}{T},\lambda_0+\frac{\rho}{T}]}\left|\overline{E}_{\theta}\right|\geq\xi\right)\leq \eta\,.
    \end{equation}
    
    \item (\cref{lem:bound_of_En_new} equation \eqref{eqn:Enbound_new2}): 
    
    Given $0<\eta<1/2$ and $0<\rho,\xi<10\pi$, when $NN_s=\Omega\left(\xi^{-2}\mathrm{polylog}(\xi^{-1}\eta^{-1})\right)$,
    \begin{equation}\label{eqn:Enbound_new2_firstappendix}
    \mathbb{P}\left(\sup_{\theta\in[\lambda_0-\frac{\rho}{T},\lambda_0+\frac{\rho}{T}]}\left|\overline{E}_{\theta}-\overline{E}_{\lambda_0}\right|\geq \rho\xi\right)\leq \eta\,.
\end{equation}
\end{itemize}

Define $E_{k,n}=X_{k,n}+iY_{k,n}-\left\langle\psi\right|\exp(-in\tau H)\ket{\psi}$. Then, we have $\mathbb{E}(E_{k,n})=0$ and $|E_{k,n}|\leq 2$. 
Using the bound and independence of $\{E_{k,n}\}_{k,n}$, we can first show the following lemma:
\begin{lem}\label{lem:bound_of_En} Given $0<\eta<1$, then
\begin{equation}\label{eqn:Enbound}
    \mathbb{P}\left(\frac{1}{N}\sum^{N-1}_{n=0}|E_n|>\frac{2}{\sqrt{N_s}}+\sqrt{\frac{2\ln(1/\eta)}{N}}\right)\leq \eta\,.
\end{equation}
\end{lem}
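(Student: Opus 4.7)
The plan is to decompose the average error term and then combine a variance bound with a Hoeffding-type concentration inequality. Writing $E_n=\frac{1}{N_s}\sum_{k=1}^{N_s}E_{k,n}$ with $E_{k,n}=X_{k,n}+\I Y_{k,n}-\braket{\psi|e^{-\I n\tau H}|\psi}$, the first observation is that the $E_{k,n}$ are independent (across both $k$ and $n$, since different circuit runs produce independent outputs), have zero mean, and satisfy $|E_{k,n}|\le 2$ because $|X_{k,n}|,|Y_{k,n}|\le 1$ and $|\braket{\psi|e^{-\I n\tau H}|\psi}|\le 1$.

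First I would bound the expectation $\EE[|E_n|]$. Since the summands in $E_n$ are independent and zero-mean, orthogonality gives $\EE[|E_n|^2]=\frac{1}{N_s^2}\sum_{k=1}^{N_s}\EE[|E_{k,n}|^2]\le \frac{4}{N_s}$, so by Jensen's inequality $\EE[|E_n|]\le 2/\sqrt{N_s}$. Averaging in $n$, this yields
\[
\EE\!\left[\frac{1}{N}\sum_{n=0}^{N-1}|E_n|\right]\le \frac{2}{\sqrt{N_s}}.
\]

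Next I would apply concentration to the random variable $S:=\frac{1}{N}\sum_{n=0}^{N-1}|E_n|$. The $|E_n|$ are independent across $n$ (they come from disjoint batches of circuit runs), and each is bounded since $|E_n|\le \frac{1}{N_s}\sum_k |E_{k,n}|\le 2$, hence $|E_n|\in[0,2]$. Hoeffding's inequality for the average of $N$ independent variables on an interval of length $2$ gives $\mathbb{P}(S-\EE[S]>t)\le \exp(-Nt^2/2)$. Choosing $t=\sqrt{2\ln(1/\eta)/N}$ makes this bound at most $\eta$, and combined with the mean bound above yields
\[
\mathbb{P}\!\left(S>\frac{2}{\sqrt{N_s}}+\sqrt{\frac{2\ln(1/\eta)}{N}}\right)\le \mathbb{P}\!\left(S-\EE[S]>\sqrt{\frac{2\ln(1/\eta)}{N}}\right)\le \eta,
\]
which is the claim of the lemma.

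There is no serious obstacle here; the proof is essentially a two-line probabilistic argument (variance bound plus Hoeffding). The only point that requires a touch of care is verifying the independence structure—both across the measurement index $k$ and across the time index $n$—and noting that the bound $|E_n|\le 2$ holds deterministically so that Hoeffding's inequality applies without any truncation step.
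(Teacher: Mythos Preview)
Your proposal is correct and follows essentially the same approach as the paper: bound $\EE[|E_n|]\le 2/\sqrt{N_s}$ via the second moment (using independence and $|E_{k,n}|\le 2$), then apply Hoeffding's inequality to the average of the independent bounded variables $|E_n|\in[0,2]$ with the same threshold $t=\sqrt{2\ln(1/\eta)/N}$. The paper's proof is essentially identical, just slightly more terse.
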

\begin{proof}[Proof of \cref{lem:bound_of_En}]
First, since $E_n=\frac{1}{N_s}\sum^{N_s}_{k=1}E_{k,n}$, we have 
\[
\mathbb{E}(E_{n})=0,\quad \mathbb{E}\left(|E_{n}|\right)\leq \left(\mathbb{E}\left(|E_{n}|^2\right)\right)^{1/2}\leq \frac{2}{\sqrt{N_s}},\quad |E_n|\leq 2\,.
\]
Since $\{|E_n|\}$ are bounded by $2$ and independent of each other,  according to Hoeffding's inequality, we have
\[
\mathbb{P}\left(\frac{1}{N}\sum^{N-1}_{n=0}|E_{n}|-\frac{1}{N}\sum^{N-1}_{n=0}\mathbb{E}(|E_{n}|)>\delta\right)\leq \exp\left(-\frac{N\delta^2}{2}\right)\,.
\]
Combing this inequality with $\mathbb{E}\left(|E_{n}|\right)\leq \frac{2}{\sqrt{N_s}}$ and choosing $\delta=\sqrt{\frac{2\ln(1/\eta)}{N}}$, we prove \eqref{eqn:Enbound}.
\end{proof}

We note that the magnitude bound in \cref{lem:bound_of_En} \eqref{eqn:Enbound} is a stronger than what we need in the analysis of the optimization problem \eqref{eq:loss2}. Intuitively, for fixed $\theta$, with high probability, we have $\left|\overline{E}_{\theta}\right|= \Or\left(\frac{1}{\sqrt{NN_s}}\right)$, which is a much better bound than \eqref{eqn:Enbound}. However, we can not directly use sub-Gaussian properties in this case since $\theta$ is not fixed in the optimization process. On the other hand, we expect that when $N,N_s$ are chosen properly, $\theta^*$ should belong to a tiny interval around $\lambda_0$, meaning that it's not necessary to give a uniform bound for $\left|\overline{E}_{\theta}\right|$ all $\theta$. In particular, to control the effect of error term, it suffices to bound $\left|\overline{E}_{\theta}\right|$ when $\theta$ is close to $\lambda_0$. This bound is stated in the following lemma:
\begin{lem}\label{lem:bound_of_En_new} Given $0<\eta<1/2$ and $\rho>0$, then
\begin{equation}\label{eqn:Enbound_new}
    \mathbb{P}\left(\sup_{\theta\in[\lambda_0-\frac{\rho}{T},\lambda_0+\frac{\rho}{T}]}\left|\overline{E}_{\theta}\right|\geq\left(4\sqrt{2}\log^{1/2}\left(\frac{8\sqrt{N_sN}}{\eta}\right)+\rho\right)\frac{1}{\sqrt{N_sN}}\right)\leq \eta
\end{equation}
and
\begin{equation}\label{eqn:Enbound_new2}
    \mathbb{P}\left(\sup_{\theta\in[\lambda_0-\frac{\rho}{T},\lambda_0+\frac{\rho}{T}]}\left|\overline{E}_{\theta}-\overline{E}_{\lambda_0}\right|\geq\left(4\sqrt{2}\log^{1/2}\left(\frac{8\sqrt{N_sN}}{\eta}\right)+1\right)\frac{\rho}{\sqrt{N_sN}}\right)\leq \eta
\end{equation}
\end{lem}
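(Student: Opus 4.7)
The plan is a standard discretization-plus-union-bound argument, controlling the supremum by passing through a fine net and using a deterministic Lipschitz estimate to bridge the gap. For both inequalities \eqref{eqn:Enbound_new} and \eqref{eqn:Enbound_new2}, the starting point is to rewrite
\[
\overline{E}_\theta = \frac{1}{NN_s}\sum_{n=0}^{N-1}\sum_{k=1}^{N_s} E_{k,n}\exp(i\theta n\tau),
\]
where the $E_{k,n}$ are independent, mean zero, and satisfy $|E_{k,n}|\le 2$. The difference $G_\theta := \overline{E}_\theta - \overline{E}_{\lambda_0}$ admits the analogous representation with $\exp(i\theta n\tau)$ replaced by $\exp(i\theta n\tau) - \exp(i\lambda_0 n\tau)$, whose modulus is at most $\min\{2,|\theta-\lambda_0|n\tau\}\le 2\rho n/N$ whenever $|\theta-\lambda_0|\le \rho/T$.

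For a fixed $\theta$, applying a complex Hoeffding inequality (decomposing into real and imaginary parts and using the per-summand magnitude bounds noted above) gives pointwise tail bounds of the form
\[
\mathbb{P}\!\left(|\overline{E}_\theta|>t\right)\le 4\exp\!\left(-c_1 NN_s t^2\right),\qquad \mathbb{P}\!\left(|G_\theta|>t\right)\le 4\exp\!\left(-c_2 NN_s t^2/\rho^2\right),
\]
with explicit constants $c_1,c_2>0$. Inverting these tails shows that for any fixed $\theta$ the deviation is $\Or(1/\sqrt{NN_s})$ and $\Or(\rho/\sqrt{NN_s})$ respectively, up to a $\sqrt{\log(1/\eta)}$ factor.

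Next, cover $[\lambda_0-\rho/T,\lambda_0+\rho/T]$ by a uniform grid $\{\theta_1,\ldots,\theta_M\}$ with spacing $\delta = \rho/(T\sqrt{NN_s})$, so that $M\le 2\sqrt{NN_s}$. A union bound over the grid combined with the pointwise tails yields, with probability at least $1-\eta$,
\[
\max_{1\le j\le M} |\overline{E}_{\theta_j}|\le \frac{C_1\log^{1/2}(8\sqrt{NN_s}/\eta)}{\sqrt{NN_s}},\qquad \max_{1\le j\le M} |G_{\theta_j}|\le \frac{C_2\rho\log^{1/2}(8\sqrt{NN_s}/\eta)}{\sqrt{NN_s}}.
\]
To pass from the grid maximum to the supremum over the whole interval, we use a purely deterministic Lipschitz estimate. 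Differentiating term by term and using $|E_n|\le 2$ almost surely,
\[
\left|\tfrac{d}{d\theta}\overline{E}_\theta\right|\le \frac{1}{N}\sum_{n=0}^{N-1}|E_n|\,n\tau\le T,
\]
and the identical bound holds for $G_\theta$ since $G'_\theta = \overline{E}'_\theta$. Hence for any $\theta$ in the interval, choosing the nearest grid point $\theta_j$, we have $|\overline{E}_\theta-\overline{E}_{\theta_j}|\le T\delta = \rho/\sqrt{NN_s}$, and likewise $|G_\theta-G_{\theta_j}|\le \rho/\sqrt{NN_s}$. Adding this deterministic slack to the union-bound estimates delivers \eqref{eqn:Enbound_new} and \eqref{eqn:Enbound_new2} directly, producing the clean additive $\rho/\sqrt{NN_s}$ term (respectively $\rho\cdot 1/\sqrt{NN_s}$) that appears in the statement.

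The main obstacle is bookkeeping of the numerical constants. Extracting the precise prefactor $4\sqrt{2}$ requires computing $\sum_{n,k} c_{k,n}^2$ exactly rather than crudely, and applying the sharpest real-valued Hoeffding bound separately to the real and imaginary parts of the sum; the interplay of these constants with the choice of grid spacing $\delta = \rho/(T\sqrt{NN_s})$ is what tunes the two terms in the stated bounds to match. Everything else (the Lipschitz estimate, the union bound, and the net cardinality count) is routine.
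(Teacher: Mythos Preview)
Your proposal is correct and follows essentially the same route as the paper: pointwise Hoeffding bounds on real and imaginary parts (yielding $\mathbb{P}(|\overline{E}_\theta|\ge\xi)\le 4\exp(-NN_s\xi^2/32)$ and the analogous $\rho$-scaled bound for $G_\theta$), a deterministic $T$-Lipschitz estimate for $\overline{E}_\theta$, an $\epsilon$-net with spacing $\rho/(T\sqrt{NN_s})$ and hence $\Or(\sqrt{NN_s})$ points, and a union bound. The only cosmetic difference is that the paper states the Lipschitz bound via a direct difference estimate rather than differentiation, and then reads off the constant $4\sqrt{2}$ by solving $\frac{8\rho}{T\epsilon}\exp(-NN_s\xi^2/32)=\eta$ at the chosen $\epsilon$.
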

\begin{proof}[Proof of \cref{lem:bound_of_En_new}]
For $\theta_1,\theta_2\in[\lambda_0-\frac{\rho}{T},\lambda_0+\frac{\rho}{T}]$, we have
\[
\begin{aligned}
\left|\overline{E}_{\theta_1}-\overline{E}_{\theta_2}\right|&\leq \left|\frac{1}{N}\sum^{N-1}_{n=0}E_n\left(\exp(i\theta_1 n\tau)-\exp(i\theta_2n\tau\right)\right|\\
& \leq \frac{1}{N}\sum^{N-1}_{n=0}|E_n|\left|2\sin((\theta_1-\theta_2)\tau n/2)\right|\leq |\theta_1-\theta_2|T
\end{aligned}
\]
where we use $|E_n|\leq 2$ in the last inequality. This implies $\overline{E}_\theta$ is a $T$-Lipschitz function in $\theta$.

Now, we first consider the tail bound of $\overline{E}_\theta$ for fixed $\theta\in [\lambda_0-\frac{\rho}{T},\lambda_0+\frac{\rho}{T}]$. 
Write
\[
\overline{E}_{\theta}=\frac{1}{N_sN}\sum^{N-1}_{n=0}\sum^{N_s-1}_{m=1}a_{n,m}+ib_{n,m}\,,
\]
where 
\[
a_{n,m}=X_{m,n}\mathrm{Re}\left(\exp(i\theta n\tau)\right)-Y_{m,n}\mathrm{Im}\left(\exp(i\theta n\tau)\right)\,,
\]
and
\[
b_{n,m}=X_{m,n}\mathrm{Im}\left(\exp(i\theta n\tau)\right)+Y_{m,n}\mathrm{Re}\left(\exp(i\theta n\tau)\right)\,.
\]
It's straightforward to see that $\{a_{n,m}\}$ are independent random variables with zero expectation and $|a_{n,m}|\leq 2$. Then, according to sub-Gaussian theory, for any $\xi>0$, we have
\[
\mathbb{P}\left(\left|\frac{1}{N_sN}\sum^{N-1}_{n=0}\sum^{N_s-1}_{m=1}a_{n,m}\right|\geq\xi\right)\leq 2\exp\left(-\frac{NN_s\xi^2}{8}\right)\,.
\]
Similar bound also hold for $\{b_{n,m}\}$. Thus, we obtain that for any $\xi>0$
\[
\mathbb{P}\left(\left|\overline{E}_{\theta}\right|\geq\xi\right)\leq 4\exp\left(-\frac{NN_s\xi^2}{32}\right)\,.
\]

Given any $\epsilon>0$, we can find a set of $\lfloor\frac{2\rho}{T\epsilon}\rfloor$ points $\{\theta_i\}^{\lfloor\frac{2\rho}{T\epsilon}\rfloor}_{i=1}$ such that for any $\theta\in[\lambda_0-\frac{\rho}{T},\lambda_0+\frac{\rho}{T}]$, there exists $i$ such that $|\theta_i-\theta|\leq \epsilon$. Because $\overline{E}_\theta$ is $T$-Lipschitz, we have
\[
\mathbb{P}\left(\sup_{\theta\in[\lambda_0-\frac{\rho}{T},\lambda_0+\frac{\rho}{T}]}\left|\overline{E}_{\theta}\right|\geq\xi+T\epsilon\right)\leq \mathbb{P}\left(\sup_{1\leq i\leq\lfloor\frac{2\rho}{T\epsilon}\rfloor }\left|\overline{E}_{\theta_i}\right|\geq\xi\right)\leq \frac{8\rho}{T\epsilon}\exp\left(-\frac{NN_s\xi^2}{32}\right)
\]
Choose $\epsilon=\frac{\rho}{T\sqrt{N_sN}}$ and $\xi=4\sqrt{2}\log^{1/2}\left(\frac{8\sqrt{N_sN}}{\eta}\right)\frac{1}{\sqrt{N_sN}}$, we prove \eqref{eqn:Enbound_new}.

Next, to prove \eqref{eqn:Enbound_new2}, we first consider the tail bound of $\overline{E}_\theta-\overline{E}_{\lambda_0}$ for fixed $\theta\in [\lambda_0-\frac{\rho}{T},\lambda_0+\frac{\rho}{T}]$. 
Write
\[
\overline{E}_{\theta}-\overline{E}_{\lambda_0}=\frac{1}{N_sN}\sum^{N-1}_{n=0}\sum^{N_s-1}_{m=1}\overline{a}_{n,m}+i\overline{b}_{n,m}\,,
\]
where 
\[
\overline{a}_{n,m}=X_{m,n}\mathrm{Re}\left(\exp(i\theta n\tau)-\exp(i\lambda_0 n\tau)\right)-Y_{m,n}\mathrm{Im}\left(\exp(i\theta n\tau)-\exp(i\lambda_0 n\tau)\right)\,,
\]
and
\[
\overline{b}_{n,m}=X_{m,n}\mathrm{Im}\left(\exp(i\theta n\tau)-\exp(i\lambda_0 n\tau)\right)+Y_{m,n}\mathrm{Re}\left(\exp(i\theta n\tau)-\exp(i\lambda_0 n\tau)\right)\,.
\]
Note that $\{a_{n,m}\}$ are independent random variables with zero expectation and $|a_{n,m}|\leq \frac{2n\rho}{N}$. Then, according to Hoeffding's inequality, for any $\xi>0$, we have
\[
\mathbb{P}\left(\left|\frac{1}{N_sN}\sum^{N-1}_{n=0}\sum^{N_s-1}_{m=1}\overline{a}_{n,m}\right|\geq\xi\right)\leq 2\exp\left(-\frac{NN_s\xi^2}{8\rho^2}\right)\,.
\]
Similar to before, we finally have
\[
\mathbb{P}\left(\sup_{\theta\in[\lambda_0-\frac{\rho}{T},\lambda_0+\frac{\rho}{T}]}\left|\overline{E}_{\theta}-\overline{E}_{\lambda_0}\right|\geq\xi+T\epsilon\right)\leq \mathbb{P}\left(\sup_{1\leq i\leq\lfloor\frac{2\rho}{T\epsilon}\rfloor }\left|\overline{E}_{\theta_i}\right|\geq\xi\right)\leq \frac{8\rho}{T\epsilon}\exp\left(-\frac{NN_s\xi^2}{32\rho^2}\right)
\]
Choose $\epsilon=\frac{\rho}{T\sqrt{N_sN}}$ and $\xi=4\sqrt{2}\log^{1/2}\left(\frac{8\sqrt{N_sN}}{\eta}\right)\frac{\rho}{\sqrt{N_sN}}$, we prove \eqref{eqn:Enbound_new2}.

\end{proof}
\section{Proof of \texorpdfstring{\cref{thm:qcels_simple}}{Lg}}\label{sec:pf_prop}

For convenience of later discussion, we define a constant
\begin{equation}
\alpha=1+\left(\max_{c\in(0,\pi/2]}\frac{\sin(c)}{\pi+c}\right)\approx 1.217.
\label{eqn:alpha_const}
\end{equation}
In this section, we will prove the following theorem:
\begin{thm}[Complexity of QCELS]\label{thm:1_full}
Let $\theta^*$ be the solution of \cref{eq:loss2}, $\alpha$ be defined in \cref{eqn:alpha_const}, $T=N\tau$, and $p_0>0.71$. Given the depth constant $0<\delta\leq 4$, and the failure probability $0<\eta<1/2$. If $\delta=\Theta(\sqrt{1-p_0})$ and $N,N_s$ satisfy
\begin{equation}\label{eqn:prerq:2_thm}
NN_s= \Omega\left(\delta^{-(2+o(1))}\mathrm{polylog}\left(\log(\zeta^{-1})\eta^{-1}\right)\right),\quad \min\{N,N_s\}= \Omega(\mathrm{polylog}\left(\eta^{-1}\right))\,,\quad  T=\frac{\delta}{\epsilon}
\end{equation}
then 
\[
\mathbb{P}\left(\left|(\theta^*-\lambda_0)\ \mathrm{mod}\ [-\pi/\tau,\pi/\tau) \right|<\frac{\delta}{T}\right)\geq 1-\eta\,.
\]
\end{thm}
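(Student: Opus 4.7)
The argument centers on comparing $\sqrt{f(\theta)}$ at $\theta = \lambda_0$ and at the maximizer $\theta^*$, where $f(\theta) = |\sum_{n=0}^{N-1} Z_n e^{i\theta n\tau}|^2$ as in \cref{eqn:max_loss}. Expanding $Z_n = \sum_m p_m e^{-i\lambda_m n\tau} + E_n$ and introducing the normalized Dirichlet kernel $D_N(x) = \frac{1}{N}\sum_{n=0}^{N-1} e^{-i x n\tau}$, the triangle inequality yields the lower bound $\sqrt{f(\lambda_0)}/N \geq p_0 - (1-p_0) - |\overline{E}_{\lambda_0}|$ (using $|D_N| \leq 1$ for $m \geq 1$) and the upper bound $\sqrt{f(\theta^*)}/N \leq p_0\,|D_N(\lambda_0 - \theta^*)| + (1-p_0) + |\overline{E}_{\theta^*}|$. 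Optimality of $\theta^*$ forces $|D_N(\lambda_0 - \theta^*)|$ to be close to one, from which the distance $|\theta^* - \lambda_0|$ can be read off using quantitative properties of $D_N$. The technical complication is that $\theta^*$ is data-dependent, so pointwise concentration of $\overline{E}_\theta$ cannot be applied at $\theta = \theta^*$ directly.

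\textbf{Two-stage localization.} I would resolve the data-dependence issue with a two-stage argument. In Stage 1 (coarse localization), invoke \cref{lem:bound_of_En} to obtain, with probability at least $1-\eta/2$, the deterministic uniform bound $|\overline{E}_\theta| \leq \frac{1}{N}\sum_n |E_n| \leq c_1$ for a small constant $c_1$, which requires only $\min\{N,N_s\} = \Omega(\operatorname{polylog}(\eta^{-1}))$. Combining with the chain above and the assumption $p_0 > 0.71$ gives $|D_N(\lambda_0 - \theta^*)| \geq (3p_0 - 2 - 2c_1)/p_0 > 0$; the off-peak bound $|D_N(x)| \leq \pi/(N|x\tau \bmod [-\pi,\pi)|)$ then forces $|(\theta^* - \lambda_0)\tau \bmod [-\pi,\pi)| \leq \rho/N$ for some $\rho = O(1)$. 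In Stage 2 (refinement), apply \cref{lem:bound_of_En_new} on the small window $|\theta - \lambda_0| \leq \rho/T$ to obtain, with probability at least $1-\eta/2$, the refined uniform bound $\sup_\theta |\overline{E}_\theta| \leq \xi$ where $\xi = \widetilde{O}((NN_s)^{-1/2})$. The scaling $NN_s = \widetilde{\Omega}(\delta^{-(2+o(1))})$ is engineered so that $\xi$ is asymptotically smaller than $\delta^2$.

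\textbf{Quantitative comparison and choice of $\delta$.} With the refined noise control, rechaining the inequalities yields $|D_N(\lambda_0 - \theta^*)| \geq (3p_0 - 2)/p_0 - 2\xi/p_0$. Since Stage 1 localizes $\theta^*$ to the main lobe of $D_N$, a quadratic lower bound of the form $1 - |D_N(x)| \geq c_\alpha (xN\tau)^2$ applies, where $c_\alpha$ is the explicit constant determined by $\alpha$ in \cref{eqn:alpha_const}. Combining these inequalities yields $(RN)^2 \leq c_\alpha^{-1}\bigl(2(1-p_0)/p_0 + 2\xi/p_0\bigr)$ with $R = (\theta^* - \lambda_0)\tau \bmod [-\pi,\pi)$. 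Setting $\delta = \Theta(\sqrt{1-p_0})$ large enough to dominate both terms on the right then gives $|RN| \leq \delta$, and using $T = N\tau = \delta/\epsilon$ concludes $|(\theta^* - \lambda_0) \bmod [-\pi/\tau, \pi/\tau)| \leq \delta/T = \epsilon$. A union bound over the two stages controls the total failure probability by $\eta$.

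\textbf{Main obstacle.} The principal difficulty is decoupling the randomness of $\theta^*$ from the concentration of $\overline{E}_\theta$. Pointwise concentration at a fixed $\theta$ cannot be used because $\theta^*$ depends on the measurement outcomes; on the other hand, a naive uniform bound over the entire range $\theta \in [-\pi,\pi)$ would be far too weak to achieve the $\xi = \widetilde{O}((NN_s)^{-1/2})$ scaling with only a polylogarithmic overhead in $NN_s$. The two-stage strategy---a crude deterministic localization followed by uniform concentration restricted to the localized window---is precisely what permits the $NN_s = \widetilde{\Omega}(\delta^{-(2+o(1))})$ complexity. A secondary obstacle is tracking constants carefully enough to produce the stated $p_0 > 0.71$ threshold and the sharp $\delta = \Theta(\sqrt{1-p_0})$ scaling; this is where the explicit constant $\alpha$ from \cref{eqn:alpha_const} enters the quadratic comparison for $D_N$.
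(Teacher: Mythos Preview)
Your two-stage localization matches the paper's \emph{rough estimate} (\cref{prop:qcels_simple_full}) almost exactly, but it does \emph{not} yield the claimed $NN_s=\widetilde\Omega(\delta^{-(2+o(1))})$ scaling. The arithmetic in your Stage~2 is off: with $NN_s=\widetilde\Theta(\delta^{-(2+o(1))})$ you get $\xi=\widetilde O((NN_s)^{-1/2})=\widetilde O(\delta^{1+o(1)})$, which for small $\delta$ is \emph{larger} than $\delta^2$, not smaller. Feeding that into your comparison $(RN)^2\lesssim (1-p_0)+\xi$ gives only $RN\lesssim \delta^{1/2}$, not $RN\le\delta$. Put differently, your argument correctly reproduces the paper's Proposition, but that proposition needs $\xi=O(\delta^2)$ and hence $NN_s=\widetilde\Omega(\delta^{-4})$; the $o(1)$ exponent improvement is precisely the content of the theorem and requires an additional idea.

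The missing ingredient is the \emph{difference} bound \cref{eqn:Enbound_new2}: once $\theta^*$ is localized to a window of radius $\rho/T$, one has $\sup_\theta|\overline E_\theta-\overline E_{\lambda_0}|\lesssim \rho\,(NN_s)^{-1/2}$, with an extra factor of $\rho$ compared to the bound you used. The paper then rewrites the optimality inequality so that the noise appears only through $\overline E_{\theta^*}-\overline E_{\lambda_0}$ (plus a controlled phase remainder $\xi\delta$), rather than through $|\overline E_{\theta^*}|+|\overline E_{\lambda_0}|$ separately; this is the content of \cref{lem:2_new}. A single application shrinks $\delta$ to $\delta_{\mathrm{new}}$ at cost $NN_s=\widetilde\Omega(\delta^2\,\delta_{\mathrm{new}}^{-4})$, and iterating $M$ times with the geometric schedule $\delta_i=\Theta\bigl(\delta_M^{(2-2^{-i})/(2-2^{-M})}\bigr)$ (\cref{lem:3_new}) drives the exponent to $4/(2-2^{-M})=2+o(1)$. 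Your proposal captures the coarse localization and the quadratic Dirichlet-kernel comparison correctly, but without the difference-of-errors trick and the bootstrapping iteration it cannot reach the stated complexity.
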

We note that \cref{thm:qcels_simple} is a direct corollary of \cref{thm:1_full}. The proof of \cref{thm:1_full} contains two steps. We first use the bound of expectation error in Appendix \ref{sec:bound_En} to show a proposition that induces a rough complexity estimation to an iteration in Algorithm \ref{alg:main}. Then, we use the rough complexity result as an a prior estimation and study the loss function more carefully to obtain a sharper complexity estimation.

\subsection{Rough estimate}

The rough complexity estimation is stated in the following proposition:
\begin{prop}
\label{prop:qcels_simple_full}
Let $\theta^*$ be the solution of \cref{eq:loss2}, $\alpha$ be defined in \cref{eqn:alpha_const}, $T=N\tau$, and $p_0>0.71$. Given the depth constant $0<\delta\leq 4$, and the failure probability $0<\eta<1/2$. If there exists a small enough number $\xi>0$ such that
\begin{equation}\label{eqn:prerq}
\frac{p_0}{\left(1+\alpha\right)p_0-\alpha-\xi}\leq \frac{\delta\cos(\delta/10)}{2\sin(\delta/2)},
\end{equation}
and $N,N_s$ satisfy
\begin{equation}\label{eqn:prerq:2}
NN_s= \Omega\left(\xi^{-2}\mathrm{polylog}\left(\xi^{-1}\eta^{-1}\right)\right),\quad \min\{N,N_s\}= \Omega(\mathrm{polylog}\left(\eta^{-1}\right))\,, 
\end{equation}
then 
\[
\mathbb{P}\left(\left|(\theta^*-\lambda_0)\ \mathrm{mod}\ [-\pi/\tau,\pi/\tau) \right|<\frac{\delta}{T}\right)\geq 1-\eta\,.
\]
\end{prop}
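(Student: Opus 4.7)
The plan is to proceed in three stages: reduce the optimization to a maximization question, establish probabilistic control over the sampling noise, and then combine an a priori localization of $\theta^*$ with a sharp Dirichlet-kernel estimate to obtain the target precision.

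First I would use the reduction from \cref{eqn:min_r,eqn:max_loss}: minimizing $L(r,\theta)$ over $r$ leaves the problem of maximizing $f(\theta) = \bigl|\sum_{n=0}^{N-1} Z_n e^{i\theta n\tau}\bigr|^2$ over $\theta \in [-\pi/\tau, \pi/\tau)$. Decomposing $Z_n = p_0 e^{-i\lambda_0 n\tau} + \sum_{m\ge 1} p_m e^{-i\lambda_m n\tau} + E_n$ gives $\sqrt{f(\theta)} = \bigl|\,p_0 D_N(\lambda_0-\theta) + \sum_{m\ge 1} p_m D_N(\lambda_m-\theta) + N\overline{E}_\theta\,\bigr|$, where $D_N(x) = \sum_{n=0}^{N-1} e^{-ixn\tau}$ is the Dirichlet kernel and $\overline{E}_\theta = \frac{1}{N}\sum_n E_n e^{i\theta n\tau}$. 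I would then invoke the concentration bounds in \cref{sec:bound_En}: \cref{lem:bound_of_En} controls $\frac{1}{N}\sum_n |E_n|$ given $\min\{N,N_s\} = \Omega(\mathrm{polylog}(\eta^{-1}))$, while the two-part \cref{lem:bound_of_En_new} yields the uniform bound $\sup_{\theta\in[\lambda_0-\rho/T,\lambda_0+\rho/T]}|\overline{E}_\theta| \le \xi$ given $NN_s = \Omega(\xi^{-2}\mathrm{polylog}(\xi^{-1}\eta^{-1}))$. A union bound makes both events hold simultaneously with probability at least $1-\eta$, and the remaining argument is deterministic on this good event.

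The heart of the proof is a two-scale comparison of $f$ at $\lambda_0$ and at the optimizer $\theta^*$. Using $D_N(0)=N$ and $|D_N|\le N$, I expect a refined lower bound $\sqrt{f(\lambda_0)} \ge [(1+\alpha)p_0 - \alpha - \mathcal{O}(\xi)]N$ to follow by splitting $\sum_{m\ge 1} p_m D_N(\lambda_m-\lambda_0)$ into a \emph{near} part (eigenvalues within some exclusion window around $\lambda_0$) and a \emph{far} part, where the Dirichlet kernel admits a decay bound. The specific form of $\alpha$ in \cref{eqn:alpha_const} strongly suggests that the width of the exclusion window is optimized to balance these two contributions, so that the worst-case tail bound becomes exactly $\max_{c\in(0,\pi/2]}\sin(c)/(\pi+c)$. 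On the other hand, the triangle inequality gives $\sqrt{f(\theta^*)} \le p_0|D_N(\lambda_0-\theta^*)| + (1-p_0)N + N\xi$, and since $\theta^*$ maximizes $f$, the inequality $f(\theta^*)\ge f(\lambda_0)$ yields $p_0|D_N(\lambda_0-\theta^*)|/N \ge (1+\alpha)p_0 - \alpha - \mathcal{O}(\xi)$. Under the hypothesis $p_0 > 0.71$, the right-hand side is positive, and this already localizes $\theta^*$ into the strip where the uniform concentration bound above applies.

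To finish, set $R = |(\lambda_0-\theta^*)\tau \bmod [-\pi,\pi)|$ and use $|D_N(\lambda_0-\theta^*)| = |\sin(NR/2)/\sin(R/2)|$. The hypothesis on $\delta$ in \eqref{eqn:prerq} is precisely the calibration that inverts the lower bound on $|D_N|/N$ into the upper bound $NR \le \delta$, which translates to the desired $|(\lambda_0-\theta^*) \bmod [-\pi/\tau,\pi/\tau)| \le \delta/T$. I expect the main obstacle to be the sharp lower bound producing the $\alpha$ factor rather than the cruder $(3p_0-2)$ factor used in the intuitive analysis around \cref{eqn:expinequal_2}: this is the ingredient that lets $\delta$ shrink to zero as $p_0 \to 1$, and it requires both a careful choice of exclusion window and a two-stage argument --- a coarse localization first, ensuring $\theta^*$ lies inside the strip where the uniform concentration in \cref{lem:bound_of_En_new} is valid, followed by the sharper estimate that produces the precision $\delta/T$.
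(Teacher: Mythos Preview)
Your overall architecture is correct and matches the paper: reduce to maximizing $|f(\theta)|$, invoke the concentration lemmas from \cref{sec:bound_En}, run a two-stage argument (coarse localization using the crude bound of \cref{lem:bound_of_En} at \emph{every} $\theta$, then refine using the strip-uniform bound of \cref{lem:bound_of_En_new}), and finally compare $f(\lambda_0)$ with $f(\theta^*)$ and invert via the monotonicity of $\sin(NR/2)/\sin(R/2)$ on $[0,\pi/N]$ together with \eqref{eqn:prerq}. You also correctly flag that the sharp lower bound on $|f(\lambda_0)|$ producing the constant $\alpha$ is the crux.

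However, your proposed mechanism for that lower bound is wrong. There is no near/far splitting and no exclusion-window optimization; with no information on the location of the $\lambda_m$, such a split can only recover the crude bound $|\sum_{m\ge 1}p_m D_N(\lambda_m-\lambda_0)|\le (1-p_0)N$, which yields the $3p_0-2$ estimate you want to improve upon, not the $\alpha$ factor. The paper instead passes to the \emph{real part}: it shows (Appendix \ref{sec:lbgb}) that for every $\theta$ one has $\mathrm{Re}\,D_N(\theta)\ge -(\alpha-1)N$, with $\alpha-1=\max_{c\in(0,\pi/2]}\sin(c)/(\pi+c)$ arising as (minus) the global minimum of $\mathrm{Re}\,D_N/N$, located near $|\theta|\approx \pi/(N-1)$. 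Since the leading term $p_0 D_N(0)=p_0 N$ is real and positive, this gives
\[
|f(\lambda_0)|\ge \mathrm{Re}\,f(\lambda_0)\ge \Bigl(1-(\alpha-1)\tfrac{1-p_0}{p_0}-\tfrac{|\overline{E}_{\lambda_0}|}{p_0}\Bigr)N,
\]
which, combined with the upper bound on $|f(\theta^*)|$, produces exactly the denominator $(1+\alpha)p_0-\alpha-\xi$ in \eqref{eqn:prerq}. Note also that $[(1+\alpha)p_0-\alpha-\mathcal{O}(\xi)]N$ is not the lower bound on $\sqrt{f(\lambda_0)}$ itself but the outcome of the comparison after subtracting the $(1-p_0)N$ upper bound on the non-ground-state contribution at $\theta^*$; keep those two roles separate.
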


According to the proposition, we can choose $\delta$, $N$, and $N_s$ according to the inequality \eqref{eqn:prerq}. First, we use the parameter $\xi$ in \eqref{eqn:prerq} to represent an upper bound of ``$|E_n|$'' (defined in \cref{eqn:En}) and $N,N_s$ are chosen according to the results in Appendix \ref{sec:bound_En} so that $E_n$ satisfies this upper bound. Second, when $\delta$ is very small, to make sure $p_0$ satisfies \eqref{eqn:prerq}, we need $p_0=1-\mathcal{O}(\delta^2)$, which implies \eqref{eqn:delta_1} in \cref{thm:qcels_simple}. Finally, the lower bound $p_0> 0.71$ comes from \cref{eqn:R_0bound} later in the proof of \cref{prop:qcels_simple_full}. More specifically, to obtain \eqref{eqn:R_0bound}, we need $\frac{p_0}{\left(1+\alpha\right)p_0-\alpha-10^{-3}}\leq 2$, which implies $p_0>0.71$.

We first rewrite the loss function \eqref{eq:loss}. Notice that for any fixed $\theta$,
\[
\begin{aligned}
\max_{r\in\mathbb{C}}L(r,\theta)=&\frac{1}{N}\sum^{N-1}_{n=0}\left|p_0 \exp(i (\theta-\lambda_0) n\tau )+\sum^{M-1}_{m=1}p_m\exp(i (\theta-\lambda_m) n\tau )+E_n\exp(i\theta n\tau)\right|^2\\
&-\left|\frac{1}{N}\sum^{N-1}_{n=0}p_0 \exp(i (\theta-\lambda_0) n\tau )+\sum^{M-1}_{m=1}p_m\exp(i (\theta-\lambda_m) n\tau )+E_n\exp(i\theta n\tau)\right|^2\\
=&\frac{1}{N}\sum^{N-1}_{n=0}\left|p_0 \exp(i \lambda_0 n\tau )+\sum^{M-1}_{m=1}p_m\exp(i \lambda_m n\tau )+E_n\exp(i\theta n\tau)\right|^2\\
&-\left|\frac{1}{N}\sum^{N-1}_{n=0}\left[p_0 \exp(i (\theta-\lambda_0) n\tau )+\sum^{M-1}_{m=1}p_m\exp(i (\theta-\lambda_m) n\tau )+E_n\exp(i\theta n\tau)\right]\right|^2\,.
\end{aligned}
\]
This means that minimizing $L(r,\theta)$ is equivalent to maximizing the magnitude of following function:
\begin{equation}\label{eqn:f}
\begin{aligned}
f(\theta)=&\sum^{N-1}_{n=0}\left[\exp(i (\theta-\lambda_0) n\tau)+\sum^{M-1}_{m=1}\frac{p_m}{p_0}\exp(i (\theta-\lambda_m) n\tau)+\frac{E_n}{p_0}\exp(i\theta n\tau)\right]\\
=&\frac{\exp(i (\theta-\lambda_0) N\tau)-1}{\exp(i (\theta-\lambda_0)\tau)-1}+\sum^{M-1}_{m=1}\frac{p_m}{p_0} \frac{\exp(i (\theta-\lambda_m) N\tau)-1}{\exp(i (\theta-\lambda_m)\tau)-1}+\sum^{N-1}_{n=0}\frac{E_n}{p_0}\exp(i\theta n\tau)\,.
\end{aligned}
\end{equation}
Define 
\[
\overline{E}_{\theta}=\frac{1}{N}\sum^{N-1}_{n=0}E_n\exp(i\theta n\tau)\,.
\] 
Now, we are ready to prove \cref{prop:qcels_simple_full}:
\begin{proof}[Proof of \cref{prop:qcels_simple_full}] Define $R_m=\left|(\lambda_m-\theta^*)\tau\ \mathrm{mod}\ (-\pi,\pi]\right|$ for $0\leq m\leq M-1$. We separate the following proof into three steps. In the first step, we give a lower bound for ``$|f(\lambda_0)|$''. Then we give a loose upper bound for $R_0$ using the fact $|f(\theta^*)|\geq |f(\lambda_0)|$. Finally, we improve the bound to $\frac{\delta}{T}$.

\textbf{Step 1: Lower bound for ``$|f(\lambda_0)|$''.}

Using \cref{eqn:lbgb1} in Appendix \ref{sec:lbgb} , we have 
\begin{equation}\label{eqn:lowerboundgbnew2}
\begin{aligned}
        \lim_{\theta\rightarrow\lambda_0}|f(\theta)|&\geq N-\sum^{M-1}_{k=1}\frac{p_k}{p_0}\left|\frac{\exp(i(\lambda_0-\lambda_k)T)-1}{\exp(i(\lambda_0-\lambda_k)\tau)-1}\right|-\frac{\left|\overline{E}_{\lambda_0}\right|}{p_0}N\\
    &\geq \left(1-(\alpha-1)\frac{1-p_0}{p_0}-\frac{\left|\overline{E}_{\lambda_0}\right|}{p_0}\right)N\,,
\end{aligned}
\end{equation}

\textbf{Step 2: Loose upper bound for $R_0$.}

We claim that for $\alpha$ in \cref{eqn:alpha_const},
\begin{equation}\label{eqn:loosebound_D0}
R_0N\leq \frac{\pi p_0}{\left(\left(1+\alpha\right)p_0-\alpha-\left(\left|\overline{E}_{\theta^*}\right|+\left|\overline{E}_{\lambda_0}\right|\right)\right)}.
\end{equation}

If the claim does not hold, notice
\[
\left|\exp(i(\lambda_0-\theta^*)\tau)-1\right|=\left|2\sin(R_0/2)\right|\geq \frac{2}{\pi}R_0\,.
\]
Combining this with \cref{eqn:gb} in Appendix \ref{sec:newub} , 
\[
\begin{aligned}
|f(\theta^*)|&\leq  \frac{\pi}{R_0}+\sum^{M-1}_{m=1}\frac{p_m}{p_0}\frac{R_m N}{R_m}+\frac{|\overline{E}_{\theta^*}|}{p_0}N\\
&\leq \left(\frac{\pi}{R_0N}+\frac{1-p_0}{p_0}+\frac{|\overline{E}_{\theta^*}|}{p_0}\right)N\\
&< \left(1-\left(\alpha-1\right)\frac{1-p_0}{p_0}-\frac{|\overline{E}_{\lambda_0}|}{p_0}\right)N\leq \lim_{\theta\rightarrow\lambda_0}|f(\theta)|
\end{aligned}
\]
where we use $R_0N>\frac{\pi p_0}{\left(\left(1+\alpha\right)p_0-\alpha-\left(|\overline{E}_{\theta^*}|+|\overline{E}_{\lambda_0}|\right)\right)}$ in the second last inequality. This contradicts to the fact that $|f(\theta^*)|$ is the maximum. Thus, we must have \eqref{eqn:loosebound_D0}.

\textbf{Step 3: Improve upper bound to $\frac{\delta}{T}$ with probability $1-\eta$.}

Define $\beta=\frac{p_0}{\left(1+\alpha\right)p_0-\alpha-\xi}$. First, combining the second inequality of \eqref{eqn:prerq:2} with \cref{lem:bound_of_En} \eqref{eqn:Enbound} (or \eqref{eqn:Enbound_firstappendix}), we have
\[
\mathbb{P}\left(\left|\overline{E}_{\theta^*}\right|+\left|\overline{E}_{\lambda_0}\right|>10^{-3}\right)\leq \eta/2\,.
\]
When $\left|\overline{E}_{\theta^*}\right|+\left|\overline{E}_{\lambda_0}\right|\leq 10^{-3}$, according to \eqref{eqn:loosebound_D0}, we have $R_0\leq \frac{\pi p_0}{\left((1+\alpha)p_0-\alpha-10^{-3}\right)N}$. Also, plugging $\rho=\frac{\pi p_0}{\left((1+\alpha)p_0-\alpha-10^{-3}\right)}$ into \cref{lem:bound_of_En_new} \eqref{eqn:Enbound_new} (or \eqref{eqn:Enbound_new_firstappendix}) and using the first inequality of \eqref{eqn:prerq:2}, we have 
\[
\mathbb{P}\left(\sup_{\theta\in[\lambda_0-\frac{\rho}{T},\lambda_0+\frac{\rho}{T}]}\left|\overline{E}_{\theta}\right|\geq\xi/2\right)\leq \eta/2\,,
\]

Combining the above two inequalities, we have
\[
\mathbb{P}\left(|\overline{E}_{\theta^*}|+|\overline{E}_{\lambda_0}|\geq\xi\right)\leq \eta
\]
Then, to prove \eqref{eqn:distance}, it suffices to show $R_0\leq \frac{\delta}{N}$ when $|\overline{E}_{\theta^*}|+|\overline{E}_{\lambda_0}|\leq \xi$.

When $|\overline{E}_{\theta^*}|+|\overline{E}_{\lambda_0}|\leq\xi$, using \eqref{eqn:loosebound_D0}, $p_0>0.71$, and the fact that $\xi$ is small enough ($\xi<10^{-3}$), we further have
\begin{equation}\label{eqn:R_0bound}
R_0\leq \frac{\pi p_0}{\left((1+\alpha)p_0-\alpha-\xi\right)N}=\frac{\pi\beta}{N}<\frac{2\pi}{N}\,.
\end{equation}
Since $
\left|\frac{\exp(i\theta N)-1}{\exp(i\theta)-1}\right|=\left|\frac{\sin( N\theta/2)}{\sin(\theta/2)}\right|$, from \cref{eqn:f} we also have
\begin{equation}\label{eqn:ftheta_tmp1}
|f(\theta^*)|\leq \left(\frac{\sin(NR_0/2)}{N\sin(R_0/2)}+\frac{1-p_0}{p_0}+\frac{\xi}{2p_0}\right)N\,,
\end{equation}
and
\begin{equation}\label{eqn:ftheta_tmp2}
|f(\theta^*)|\geq \left(1-(\alpha-1)\frac{1-p_0}{p_0}-\frac{\xi}{2p_0}\right)N\geq \lim_{\theta\rightarrow\lambda_0}|f(\theta)|\,.
\end{equation}
Combining \cref{eqn:ftheta_tmp1,eqn:ftheta_tmp2}, we have
\begin{equation}
\frac{\sin(NR_0/2)}{\sin(R_0/2)}\ge N\frac{(1+\alpha)p_0-\alpha-\xi}{p_0}=\frac{N}{\beta}.
\end{equation}
Notice
\[
\frac{\sin(\delta/2)}{\sin(\delta/(2N))}\leq \frac{2\sin(\delta/2)N}{\cos(\delta/(2N))\delta}\leq \frac{N}{\beta}\,,
\] 
where we use $\sin(\delta/(2N))\geq \cos(\delta/(2N))\delta/(2N)$ in the first inequality, and $\beta\leq \frac{\delta\cos(\delta/10)}{2\sin(\delta/2)}$ in the second inequality. Hence
\[
\frac{\sin(NR_0/2)}{\sin(R_0/2)}\ge \frac{\sin( N\delta/(2N))}{\sin(\delta/(2N))}.
\]
Finally, because $\frac{\sin(Nx)}{\sin(x)}$ is monotonically decreasing in $ (0,\pi/N]$, we have
\begin{equation}
R_0\leq \frac{\delta}{N}.
\end{equation}
This concludes the proof.
\end{proof}

\subsection{Refined estimate}

According to \cref{prop:qcels_simple_full} \eqref{eqn:prerq} and \eqref{eqn:prerq:2}, when $\delta\rightarrow0$, we should set $\xi\sim O(\delta^{-2})$ and $NN_s\sim O(\delta^{-4})$ to ensure $T_{\max}=\frac{\delta}{\epsilon}$. Thus, we can not directly prove \cref{thm:1_full} using Proposition \ref{prop:qcels_simple_full}. We need to reduce the scaling of $N,N_s$ with respect to $\delta^{-1}$.

The main idea is to use a different way to bound the expectation error $E_{n}$. To achieve a better bound for the error term, instead of bounding $\overline{E}_{\theta^*}$ and $\overline{E}_{\lambda_0}$ separately, now we can bound the difference of these two error terms using \eqref{eqn:Enbound_new2}. Intuitively, when $\theta^*$ and $\lambda_0$ is close to each other, it is likely that these two error terms will cancel each other when we compare the difference between $|f(\theta^*)|$ and $|f(\lambda_0)|$. This intuition is justified by \cref{lem:bound_of_En_new} \eqref{eqn:Enbound_new2}.  Assume that we have already known $\left|(\theta^*-\lambda_0)\ \mathrm{mod}\ [-\pi/\tau,\pi/\tau] \right|<\frac{\delta}{T}$, then $NN_s\geq\widetilde{\mathcal{O}}\left(\delta^2\xi^{-2}\right)$ is suffices to guarantee $\left|\overline{E}_{\theta,q}-\overline{E}_{\lambda_0,q}\right|\geq \xi$ with high probability. Formally, compared this requirement with the first inequality of \eqref{eqn:prerq:2}, we can reduce the blow up rate to $\Or(\delta^{-2})$ when $\delta\rightarrow0$, which matches the condition in \cref{thm:1_full}. However, the above calculation is assuming $\left|(\theta^*-\lambda_0)\ \mathrm{mod}\ [-\pi/\tau,\pi/\tau] \right|<\frac{\delta}{T}$, which is unknown to us in prior. To overcome this difficulty, we need to use an iteration argument to achieve the desired order. We first show the following lemma to start our iteration:
\begin{lem}\label{lem:2_new}
Given $0<\delta\leq 4$ and $0<\eta<1/2$ and define $T=N\tau$. 
Assume the condition of \cref{prop:qcels_simple_full} is satisfied. If there exists $0<\xi_1<\xi$ such that
\begin{equation}\label{eqn:prerq_1.5_new}
\frac{p_0}{\left(\sqrt{2}\alpha+1\right)p_0-\sqrt{2}\alpha-\sqrt{2}(\xi_1+\xi\delta/2)}< \frac{\delta\cos(\delta/10)}{2\sin(\delta/2)}\,,
\end{equation}
and
\begin{equation}\label{eqn:prerq_3_new}
NN_s= \Omega\left(\delta^2\xi^{-2}_1\mathrm{polylog}(\xi^{-1}\eta^{-1})\right)\,,
\end{equation}
then there exists $\delta_{\mathrm{new}}\in (0,\delta)$ such that
\[
\mathbb{P}\left(\left|(\theta^*-\lambda_0)\ \text{mod}\ [-\pi/\tau,\pi/\tau] \right|<\frac{\delta_{\mathrm{new}}}{T}\right)\geq 1-\eta\,,
\]
where $\delta_{\mathrm{new}}$ is the unique solution to the following equation:
\begin{equation}\label{eqn:final_new}
\frac{p_0}{\left(\sqrt{2}\alpha+1\right)p_0-\sqrt{2}\alpha-\sqrt{2}(\xi_1+\xi\delta_{\mathrm{new}}/2)}=\frac{N\sin(\delta_{\mathrm{new}}/(2N))}{\sin(\delta_{\mathrm{new}}/2)}\,.
\end{equation}
\end{lem}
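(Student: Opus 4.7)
The approach is to bootstrap \cref{prop:qcels_simple_full}: its hypotheses already guarantee the crude localization $|\theta^* - \lambda_0| < \delta/T$ with high probability, and I will use this a~priori window to activate the sharper Lipschitz-type noise bound \eqref{eqn:Enbound_new2}, which only becomes tight once we restrict attention to a small interval around $\lambda_0$. Re-running the optimality comparison between $|f(\theta^*)|$ and $|f(\lambda_0)|$ with this improved noise control then produces the refined bound $\delta_{\mathrm{new}} < \delta$ defined implicitly by \eqref{eqn:final_new}.

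\textbf{Main steps.} First I would apply \cref{prop:qcels_simple_full} to pin $\theta^*$ into $[\lambda_0 - \delta/T, \lambda_0 + \delta/T]$ (modulo the aliasing period) on an event of probability at least $1 - \eta/2$. On this interval I would then invoke \cref{lem:bound_of_En_new} equation \eqref{eqn:Enbound_new2} with $\rho = \delta$: under the sample-size condition \eqref{eqn:prerq_3_new} this yields $\sup_{\theta \in [\lambda_0 - \delta/T, \lambda_0 + \delta/T]} |\overline{E}_\theta - \overline{E}_{\lambda_0}| \leq \xi_1$ with probability at least $1 - \eta/4$, while $|\overline{E}_{\lambda_0}|$ is separately controlled by $\xi\delta/2$ via the single-point bound. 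Finally, I would repeat Steps~2--3 of the proof of \cref{prop:qcels_simple_full}: the only place the noise enters is through the upper bound on $|f(\theta^*)|$ and the lower bound on $|f(\lambda_0)|$, and substituting the decomposition $|\overline{E}_{\theta^*}| \leq |\overline{E}_{\lambda_0}| + |\overline{E}_{\theta^*} - \overline{E}_{\lambda_0}|$ into that comparison produces the sharper constant $\beta(\delta_{\mathrm{new}})$ that appears in \eqref{eqn:prerq_1.5_new}, with the $\sqrt{2}$ factor arising from a Cauchy--Schwarz-type pairing of the two noise components with the eigenvalue-tail contributions. Applying the monotonicity argument for $\sin(N\gamma)/\sin(\gamma)$ on $(0, \pi/N]$ (exactly as in Step~3 of the proof of \cref{prop:qcels_simple_full}) with this new $\beta$ yields the implicit equation \eqref{eqn:final_new}.

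\textbf{Existence and uniqueness of $\delta_{\mathrm{new}}$.} The right-hand side of \eqref{eqn:final_new} is strictly increasing in $\delta_{\mathrm{new}}$ on $(0, \pi]$ and tends to $1$ as $\delta_{\mathrm{new}} \to 0^+$, while the left-hand side $\beta(\delta_{\mathrm{new}})$ is decreasing in $\delta_{\mathrm{new}}$ and exceeds $1$ for $p_0 > 0.71$. Condition \eqref{eqn:prerq_1.5_new} says precisely that at $\delta_{\mathrm{new}} = \delta$ the left-hand side lies below $\frac{\delta \cos(\delta/10)}{2 \sin(\delta/2)}$, which by $\sin(\delta/(2N)) \geq \cos(\delta/(2N))\cdot \delta/(2N)$ is at most the right-hand side at $\delta$. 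Continuity then forces a unique crossing $\delta_{\mathrm{new}} \in (0, \delta)$.

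\textbf{Main obstacle.} The delicate point is the $\sqrt{2}$ improvement over the naive triangle-inequality estimate $|\overline{E}_{\theta^*}| + |\overline{E}_{\lambda_0}| \leq 2|\overline{E}_{\lambda_0}| + |\overline{E}_{\theta^*} - \overline{E}_{\lambda_0}|$, which would only produce a factor of $2$ rather than $\sqrt{2}$ on the error term. It must be extracted by exploiting the quadratic structure of the comparison $|f(\theta^*)|^2 \geq |f(\lambda_0)|^2$ rather than just $|f(\theta^*)| \geq |f(\lambda_0)|$, so that a bound of the form $|a| + |b| \leq \sqrt{2(|a|^2 + |b|^2)}$ can be paired with a Pythagorean-type estimate on the complex errors $\overline{E}_{\lambda_0}$ and $\overline{E}_{\theta^*} - \overline{E}_{\lambda_0}$. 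Carefully propagating this refinement through the upper/lower bounds on $|f|$ without losing the $\sqrt{2}$ is where the bookkeeping is most subtle.
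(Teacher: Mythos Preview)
Your overall bootstrap strategy---apply \cref{prop:qcels_simple_full} for a crude window, then invoke \eqref{eqn:Enbound_new2} with $\rho=\delta$ to control $\overline{E}_{\theta^*}-\overline{E}_{\lambda_0}$, then redo the optimality comparison---matches the paper. But two of the key mechanisms you describe are not the ones that actually work.

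\textbf{The $\sqrt{2}$ factor.} It does not come from squaring or from a Cauchy--Schwarz pairing $|a|+|b|\le\sqrt{2(|a|^2+|b|^2)}$ of the two noise pieces. The paper instead keeps $\overline{E}_{\lambda_0}$ as a \emph{common} complex shift on both sides. After replacing $\overline{E}_{\theta^*}$ by $\overline{E}_{\lambda_0}$ (cost $\xi_1$) and rotating the phase factor $\exp(iR_0(N-1)/2)$ to $1$ (this is the true source of the $\xi\delta/2$ term: $|\exp(iR_0(N-1)/2)-1|\le\delta/2$ multiplied by $|\overline{E}_{\lambda_0}|\le\xi$, not a ``single-point bound'' on $|\overline{E}_{\lambda_0}|$), the comparison $|f(\theta^*)|\ge|f(\lambda_0)|$ reduces to comparing $|B+w|$ with $|A+w|$ for the \emph{same} $w=\overline{E}_{\lambda_0}/p_0$, with $A=1$ and $B=\sin(R_0N/2)/(N\sin(R_0/2))$. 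The $\sqrt{2}$ then drops out of the elementary derivative estimate $\frac{d}{dx}\sqrt{(x+a)^2+b^2}\ge 1/\sqrt{2}$ when $x+a>|b|$, yielding $(A-B)/\sqrt{2}\le |A+w|-|B+w|$. Your Cauchy--Schwarz idea does not exploit this cancellation of the common noise and would not produce the stated inequality.

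\textbf{The fixed point $\delta_{\mathrm{new}}$ requires an iteration you omit.} One pass of the comparison gives only
\[
\frac{\sin(R_0N/2)}{N\sin(R_0/2)}\ \ge\ \frac{(\sqrt{2}\alpha+1)p_0-\sqrt{2}\alpha-\sqrt{2}(\xi_1+\xi\,\delta/2)}{p_0},
\]
with the \emph{original} $\delta$ in the error term, hence $R_0\le\delta_1/N$ for some $\delta_1<\delta$. To obtain \eqref{eqn:final_new} with $\delta_{\mathrm{new}}$ on both sides one feeds $\delta_1$ back into the phase-alignment bound, obtains $\delta_2<\delta_1$, and continues; the limit is the fixed point $\delta_{\mathrm{new}}$. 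Finally, your monotonicity claim is inverted: the left side of \eqref{eqn:final_new} is \emph{increasing} in $\delta_{\mathrm{new}}$ (the denominator shrinks), so your crossing argument fails as stated. The paper instead argues with the reciprocals: $h_1(x)=\sin(x/2)/(N\sin(x/(2N)))$ is concave with $h_1(0)=1$, $h_1'(0)=0$, while $h_2(x)=\bigl[(\sqrt{2}\alpha+1)p_0-\sqrt{2}\alpha-\sqrt{2}(\xi_1+\xi x/2)\bigr]/p_0$ is linear decreasing with $h_2(0)<1$, forcing a unique intersection.
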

\begin{proof}[Proof of \cref{lem:2_new}] Define $R_0=\left|(\lambda_0-\theta^*)\tau\ \mathrm{mod}\ (-\pi,\pi]\right|$. Similar to the proof of \cref{prop:qcels_simple_full}, we have
\[
\mathbb{P}\left(R_0<\frac{\delta}{N}\right)\geq 1-\eta/4,\quad \mathbb{P}\left(|\overline{E}_{\theta^*}|+|\overline{E}_{\lambda_0}|\geq\xi\right)\leq \eta/4\,.
\]
Combining \cref{lem:bound_of_En_new} \eqref{eqn:Enbound_new2} (setting $\rho=\delta$) with \eqref{eqn:prerq_3_new} and $p_0>0.71$, we have
\[
    \mathbb{P}\left(\sup_{\theta\in[\lambda_0-\frac{\delta}{T},\lambda_0+\frac{\delta}{T}]}\left|\overline{E}_{\theta}-\overline{E}_{\lambda_0}\right|\geq \xi_1\right)\leq \eta/2\,.
\]
Thus, with probability $1-\eta$, we have
\begin{equation}\label{eqn:assum_new}
R_0<\frac{\delta}{N},\quad |\overline{E}_{\theta^*}|+|\overline{E}_{\lambda_0}|<\xi,\quad \left|\overline{E}_{\theta^*}-\overline{E}_{\lambda_0}\right|< \xi_1\,.
\end{equation}
From \eqref{eqn:assum_new}, it suffices to prove $\left|(\theta^*-\lambda_0)\ \text{mod}\ [-\pi/\tau,\pi/\tau] \right|<\frac{\delta_{\mathrm{new}}}{T}$.

Because $\theta^*$ is the maximal point, using \eqref{eqn:f} and the result in \cref{sec:lbgb} we have
\begin{equation}
\begin{aligned}
&\left|\sum^{N-1}_{n=0}\exp(i (\lambda_0-\theta^*) n\tau)+\sum^{M-1}_{m=1}\frac{p_m}{p_0} \frac{\exp(i (\lambda_m-\theta^*) N\tau)-1}{\exp(i (\lambda_m-\theta^*)\tau)-1}+\sum^{N-1}_{n=0}\frac{E_n}{p_0}\exp(-i\theta ^*n\tau)\right|\\
\geq &\left|N+\sum^{M-1}_{m=1}\frac{p_m}{p_0} \frac{\exp(i (\lambda_m-\lambda_0) N\tau)-1}{\exp(i (\lambda_m-\lambda_0)\tau)-1}+\sum^{N-1}_{n=0}\frac{E_n}{p_0}\exp(-i\lambda_0\tau)\right|\\
\geq &\left|N+\sum^{N-1}_{n=0}\frac{E_n}{p_0}\exp(-i\lambda_0\tau)\right|-(\alpha-1)\frac{1-p_0}{p_0}N\,.
\end{aligned}
\label{eqn:refine_temp1}
\end{equation}
Also, using \eqref{eqn:assum_new}, we have
\begin{equation}
\begin{aligned}
&\left|\sum^{N-1}_{n=0}\exp(i (\lambda_0-\theta^*)n\tau)+\sum^{M-1}_{m=1}\frac{p_m}{p_0} \frac{\exp(i (\lambda_m-\theta^*) N\tau)-1}{\exp(i (\lambda_m-\theta^*)\tau)-1}+\sum^{N-1}_{n=0}\frac{E_n}{p_0}\exp(-i\theta ^*n\tau)\right|\\
\leq &\left|\sum^{N-1}_{n=0}\exp(i (\lambda_0-\theta^*)n\tau)+\sum^{N-1}_{n=0}\frac{E_n}{p_0}\exp(-i\lambda_0n\tau)\right|+\frac{1-p_0}{p_0}N+\frac{\xi_1}{p_0} N\\
=&\left|\frac{\sin(R_0N/2)}{\sin(R_0/2)}\exp\left(iR_0(N-1)/2\right)+\sum^{N-1}_{n=0}\frac{E_n}{p_0}\exp(-i\lambda_0n\tau)\right|+\frac{1-p_0}{p_0}N+\frac{\xi_1}{p_0}N\\
\leq &\left|\frac{\sin(R_0N/2)}{\sin(R_0/2)}+\sum^{N-1}_{n=0}\frac{E_n}{p_0}\exp(-i\lambda_0n\tau)\right|+\frac{1-p_0}{p_0}N+\frac{\xi_1+\xi\delta/2}{p_0}N\\
=&\left|\frac{\sin(R_0N/2)}{N\sin(R_0/2)}+\frac{\overline{E}_{\lambda_0}}{p_0}\right|N+\frac{1-p_0}{p_0}N+\frac{\xi_1+\xi\delta/2}{p_0}N\,,
\end{aligned}
\label{eqn:refine_temp2}
\end{equation}
where we use $\left|\overline{E}_{\theta^*}-\overline{E}_{\lambda_0}\right|< p_0\xi_1$ in the second inequality, $|\exp\left(iR_0(N-1)/2\right)-1|\leq \delta/2$ and $\left|\sum^{N-1}_{n=0}E_n\exp(-i\lambda_0n\tau)\right|\leq N\xi$ in the last inequality.

Let 
\[
\sum^{N-1}_{n=0}E_n\exp(-i\lambda_0\tau)=N\left|\overline{E}_{\lambda_0}\right|\left(\cos(\theta_E)+i\sin(\theta_E)\right)\,,
\]
where $\overline{E}_{\lambda_0}=\frac{1}{N}\sum^{N-1}_{n=0}E_n\exp(-i\lambda_0 n\tau)$. Then $\left|\overline{E}_{\lambda_0}\right|\leq \xi<\frac{p_0}{\pi}<\frac{p_0}{2N}\frac{\sin(R_0N/2)}{\sin(R_0/2)}$ according to the second inequality of \eqref{eqn:assum_new}, and the estimates in \cref{eqn:refine_temp1,eqn:refine_temp2} can be combined to obtain
\begin{equation}\label{eqn:prepare_new}
\begin{aligned}
&\sqrt{\left(\frac{\sin(R_0N/2)}{N\sin(R_0/2)}+\frac{\left|\overline{E}_{\lambda_0}\right|}{p_0}\cos(\theta_E)\right)^2+\left(\frac{\left|\overline{E}_{\lambda_0}\right|}{p_0}\sin(\theta_E)\right)^2}+\frac{\alpha(1-p_0)+(\xi_1+\xi\delta/2)}{p_0}\\
\geq &\sqrt{\left(1+\frac{\left|\overline{E}_{\lambda_0}\right|}{p_0}\cos(\theta_E)\right)^2+\left(\frac{\left|\overline{E}_{\lambda_0}\right|}{p_0}\sin(\theta_E)\right)^2}\,.
\end{aligned}
\end{equation}
Because $\left(\sqrt{(x+a)^2+b^2}\right)'\geq \frac{1}{\sqrt{2}}$ when $x,a,b>0$ and $x+a>b$, we obtain
\[
\begin{aligned}
&\sqrt{\left(1+\frac{\left|\overline{E}_{\lambda_0}\right|}{p_0}\cos(\theta_E)\right)^2+\left(\frac{\left|\overline{E}_{\lambda_0}\right|}{p_0}\sin(\theta_E)\right)^2}\\
&-\sqrt{\left(\frac{\sin(R_0N/2)}{N\sin(R_0/2)}+\frac{\left|\overline{E}_{\lambda_0}\right|}{p_0}\cos(\theta_E)\right)^2+\left(\frac{\left|\overline{E}_{\lambda_0}\right|}{p_0}\sin(\theta_E)\right)^2}\\
\geq & \left(1-\frac{\sin(R_0N/2)}{N\sin(R_0/2)}\right)\frac{1}{\sqrt{2}}\,,
\end{aligned}
\]
where we set $a=\left|\overline{E}_{\lambda_0}\right|\cos(\theta_E)$ and $b=\left|\overline{E}_{\lambda_0}\right|\sin(\theta_E)$. Plugging this back into \eqref{eqn:prepare_new}, we obtain
\begin{equation}\label{eqn:start_new}
\frac{\sin(R_0N/2)}{N\sin(R_0/2)}\geq \frac{\left(\sqrt{2}\alpha+1\right)p_0-\sqrt{2}\alpha-\sqrt{2}(\xi_1+\xi\delta/2)}{p_0}\,,
\end{equation}
According to \eqref{eqn:prerq_1.5_new}, we first have
\[
\frac{\sin(\delta/2)}{N\sin(\delta/(2N))}< \frac{2\sin(\delta/2)}{\delta\cos(\delta/10)}< \frac{\left(\sqrt{2}\alpha+1\right)p_0-\sqrt{2}\alpha-\sqrt{2}(\xi_1+\xi\delta/2)}{p_0}\,.
\]
Thus, there exists $\delta_1<\delta$ such that 
\[
\frac{\sin(\delta_1/2)}{N\sin(\delta_1/(2N))}= \frac{\left(\sqrt{2}\alpha+1\right)p_0-\sqrt{2}\alpha-\sqrt{2}(\xi_1+\xi\delta/2)}{p_0}\,.
\]
Using similar argument as the proof of \cref{prop:qcels_simple_full}, we have $|R_0|\leq \frac{\delta_1}{N}$. Then, similar to the previous argument, we can improve \eqref{eqn:start_new}, meaning that $R_0$ should satisfy the following inequality
\[
\frac{\sin(R_0N/2)}{N\sin(R_0/2)}\geq \frac{\left(\sqrt{2}\alpha+1\right)p_0-\sqrt{2}\alpha-\sqrt{2}(\xi_1+\xi\delta_1/2)}{p_0}
\]
Because
\[
\begin{aligned}
&\frac{\left(\sqrt{2}\alpha+1\right)p_0-\sqrt{2}\alpha-\sqrt{2}(\xi_1+\xi\delta_1/2)}{p_0}\\
>&\frac{\left(\sqrt{2}\alpha+1\right)p_0-\sqrt{2}\alpha-\sqrt{2}(\xi_1+\xi\delta/2)}{p_0}=\frac{\sin(\delta_1/2)}{N\sin(\delta_1/(2N))}\,,
\end{aligned}
\]
there exists $\delta_2<\delta_1$ such that
\[
\frac{\sin(\delta_2/2)}{N\sin(\delta_2/(2N))}= \frac{\left(\sqrt{2}\alpha+1\right)p_0-\sqrt{2}\alpha-\sqrt{2}(\xi_1+\xi\delta_1/2)}{p_0}
\]
and $|R_0|\leq \frac{\delta_2}{N}$. Doing this recurrently, we finally have $|R_0|\leq \frac{\delta_{\mathrm{new}}}{N}$, where 
\[
\frac{\sin(\delta_{\mathrm{new}}/2)}{N\sin(\delta_{\mathrm{new}}/(2N))}= \frac{\left(\sqrt{2}\alpha+1\right)p_0-\sqrt{2}\alpha-\sqrt{2}(\xi_1+\xi\delta_{\mathrm{new}}/2)}{p_0}\,.
\]
Finally, because $h_1(x)=\frac{\sin(x/2)}{N\sin(x/(2N))}$ is a concave function ($x\in[0,\pi]$) with $h_1(0)=1,h'_1(0)=0$ and $h_2(x)=\frac{\left(\sqrt{2}\alpha+1\right)p_0-\sqrt{2}\alpha-\sqrt{2}(\xi_1+\xi x/2)}{p_0}$ is a linearly decreasing function with $0<h_2(0)<1$, $\delta_{\mathrm{new}}$ is the unique solution satisfying \eqref{eqn:final_new}.
\end{proof}

\cref{eqn:prerq_3_new} provides a direction for reducing the scaling of $N,N_s$ with respect to $\delta$. If we ignore \eqref{eqn:prerq}, to guarantee a short depth $\delta_{\mathrm{new}}/\epsilon$, we can choose $\delta\approx \delta_{\mathrm{new}}$, $\xi\sim \delta_{\mathrm{new}}$, and $\xi_1\sim \delta^2_{\mathrm{new}}$, then it suffices to choose $NN_s\sim \Theta(\delta^{-2}_{\mathrm{new}})$. This gives us the desired order that we want. However, the previous argument can not be directly applied because we also need $\xi= \Or(\delta^2)$ and $NN_s= \Omega(\xi^{-2})$ according to \eqref{eqn:prerq} and \eqref{eqn:prerq:2}. These two requirements would bring back the original $\delta^{-4}_{\mathrm{new}}$ dependence in $NN_s$. 

Even the previous argument cannot be applied directly, we can still use \cref{lem:2_new} to  improve the scaling with respect to $\delta$ in the following way: According to \cref{lem:2_new}, for fixed small $\delta_{\mathrm{new}}$, according to \eqref{eqn:prerq_1.5_new}, to make the depth smaller than $\delta_{\mathrm{new}}/\epsilon$, we should choose 
\[
\xi=\mathcal{O}\left(\min\{\delta^2,\delta_{\mathrm{new}}\}\right),\quad \xi_1=\mathcal{O}\left(\delta^2_{\mathrm{new}}\right)
\]
for all $1\leq i\leq M$. Then, according to the first inequality of \eqref{eqn:prerq:2} and \eqref{eqn:prerq_3_new}, we set
\[
NN_s=\widetilde{\Theta}\left(\max_{0\leq i\leq M-1}\left\{\xi^{-2},\delta^2\xi^{-2}_1\right\}\right)
\]
Minimizing this in $\delta,\xi,\xi_1$ with fixed $\delta_1$, a proper choice of these parameters should be 
\[
\xi=\Theta\left(\delta^2\right),\quad \xi_1=\Theta\left(\delta^2_{\mathrm{new}}\right),\quad \delta=\Theta\left(\delta^{\frac{2}{3}}_{\mathrm{new}}\right)\,.
\]
This choice of parameters would reduce the blow up rate of $NN_s$ to $\widetilde{\Theta}\left(\delta^{-\frac{8}{3}}_{\mathrm{new}}\right)$ when $\delta_{\mathrm{new}}\rightarrow0$. Using similar argument as before, we can apply \cref{lem:2_new} repeatedly. In each iteration, we can slightly reduce scaling of $NN_s$, and the final scaling can be $\Or(\delta^{-2+o(1)})$. This iteration process can be carried out using the following lemma:
\begin{lem}\label{lem:3_new}
Given $0<\delta\leq 4$, $0<\eta<1/2$ and define $T=N\tau$. 
Given an integer $M>1$, a decreasing sequence $\{\xi_i\}^M_{i=0}$ with small enough $\xi_0$, a decreasing sequence $\{\delta_i\}^M_{i=0}$. Assume the condition of \cref{prop:qcels_simple_full} is satisfied with $\xi=\xi_0$ and $\delta=\delta_0$. If 
\begin{equation}\label{eqn:prerq_225_new}
NN_s= \Omega\left(\delta^2_i\xi^{-2}_{i+1}\mathrm{polylog}(M\xi^{-1}_{i+1}\eta^{-1})\right)\,,
\end{equation}
and
\begin{equation}\label{eqn:prerq_24_new}
\frac{p_0}{\left(\sqrt{2}\alpha+1\right)p_0-\sqrt{2}\alpha-\sqrt{2}(\xi_{i+1}+\xi_0\delta_{i+1}/2)}\leq  \frac{\delta_{i+1}\cos(\delta_{i+1}/10)}{2\sin(\delta_{i+1}/2)}\,.
\end{equation}
for all $0\leq i\leq M-1$. Then, 
\begin{equation}\label{eqn:distance_new_new}
\mathbb{P}\left(\left|(\theta^*-\lambda_0)\ \bmod\ [-\pi/\tau,\pi/\tau] \right|<\frac{\delta_M}{T}\right)\geq 1-\eta\,.
\end{equation}
\end{lem}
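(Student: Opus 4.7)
The plan is induction on $i$, iterating the analytical refinement of \cref{lem:2_new} on a single fixed data set that simultaneously realizes several high-probability events. Throughout, let $R_0 = \left|(\lambda_0-\theta^*)\tau\ \mathrm{mod}\ (-\pi,\pi]\right|$; the goal is to show $R_0 \le \delta_i/N$ for every $0 \le i \le M$, conditional on a good event $\mathcal{E}$ of probability at least $1-\eta$. The base case is \cref{prop:qcels_simple_full}, whose hypotheses hold by assumption with $\xi=\xi_0,\delta=\delta_0$, and the inductive step sharpens $\delta_i \to \delta_{i+1}$ via a deterministic consequence of the argument used inside \cref{lem:2_new}.

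\textbf{Construction of the good event.} I would take $\mathcal{E}$ to be the intersection of (i) $\left|\overline{E}_{\lambda_0}\right| + \left|\overline{E}_{\theta^*}\right| < \xi_0$, and (ii) for each $0 \le i \le M-1$, $\sup_{\theta \in [\lambda_0-\delta_i/T,\, \lambda_0+\delta_i/T]} \left|\overline{E}_\theta-\overline{E}_{\lambda_0}\right| < \xi_{i+1}$. Event (i) holds with probability $\ge 1-\eta/(M+1)$ by \cref{lem:bound_of_En_new} equation \eqref{eqn:Enbound_new} under the sample-size assumption at $i=0$; event (ii) at level $i$ holds with probability $\ge 1-\eta/(M+1)$ by \cref{lem:bound_of_En_new} equation \eqref{eqn:Enbound_new2} applied with $\rho=\delta_i$ and threshold $\xi_{i+1}$, which is exactly what the hypothesis \eqref{eqn:prerq_225_new} furnishes. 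The factor $M$ inside the polylog in \eqref{eqn:prerq_225_new} absorbs the $\log M$ cost of the union bound over the $M+1$ events, giving $\mathbb{P}(\mathcal{E}) \ge 1-\eta$.

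\textbf{Inductive sharpening on $\mathcal{E}$.} Suppose $R_0 \le \delta_i/N$. Then $\theta^* \in [\lambda_0-\delta_i/T,\, \lambda_0+\delta_i/T]$, so on $\mathcal{E}$ the bound $\left|\overline{E}_{\theta^*}-\overline{E}_{\lambda_0}\right| < \xi_{i+1}$ is in force. I then replay the deterministic chain in the proof of \cref{lem:2_new}, from the optimality inequalities \eqref{eqn:refine_temp1}--\eqref{eqn:refine_temp2} through the $\sqrt{2}$-Lipschitz step \eqref{eqn:prepare_new} to \eqref{eqn:start_new}, with substitutions $\delta \leftarrow \delta_i$, $\xi \leftarrow \xi_0$, $\xi_1 \leftarrow \xi_{i+1}$. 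This produces
\[
\frac{\sin(R_0N/2)}{N\sin(R_0/2)} \;\ge\; \frac{(\sqrt{2}\alpha+1)p_0 - \sqrt{2}\alpha - \sqrt{2}(\xi_{i+1} + \xi_0\delta_i/2)}{p_0}.
\]
Running the monotone refinement loop at the end of \cref{lem:2_new}'s proof upgrades this to $R_0 \le \delta'/N$, where $\delta'$ is the unique crossing of $h_1(x)=\sin(x/2)/[N\sin(x/(2N))]$ and $h_2(x)=[(\sqrt{2}\alpha+1)p_0-\sqrt{2}\alpha-\sqrt{2}(\xi_{i+1}+\xi_0 x/2)]/p_0$. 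Finally, hypothesis \eqref{eqn:prerq_24_new} together with the elementary bound $N\sin(\delta_{i+1}/(2N)) \ge \cos(\delta_{i+1}/10)\,\delta_{i+1}/2$ gives $h_1(\delta_{i+1}) \le h_2(\delta_{i+1})$; concavity of $h_1$ and linearity of $h_2$ then force $\delta' \le \delta_{i+1}$, closing the induction and yielding \eqref{eqn:distance_new_new}.

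\textbf{Main obstacle.} The delicate point is that $\mathcal{E}$ must be assembled once, from a single data set, with uniform error bounds at $M$ different scales $\rho=\delta_i$ simultaneously, since resampling at each level would either change $\theta^*$ between iterations or inflate the total cost. The enabling observation is that \cref{lem:bound_of_En_new} equation \eqref{eqn:Enbound_new2} provides a $\rho$-dependent Lipschitz-type bound, so the sample size required at level $i$ scales like $\delta_i^2\xi_{i+1}^{-2}$ rather than $\xi_{i+1}^{-2}$; this shrinking-$\rho$ structure along the iteration is precisely what makes the ultimate scaling $NN_s = \widetilde{\Theta}(\delta_M^{-(2+o(1))})$ quoted in \cref{thm:1_full} achievable.
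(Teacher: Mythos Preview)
Your approach is essentially the same as the paper's: set up a good event on which all the needed error bounds hold simultaneously, then run the deterministic refinement from \cref{lem:2_new} inductively to push $R_0$ from $\delta_0/N$ down to $\delta_M/N$. The paper organizes the probability bookkeeping slightly differently (it accumulates failure probabilities step by step rather than via a single upfront union bound), but the substance is identical.

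One small imprecision to fix: your event (i), namely $|\overline{E}_{\lambda_0}|+|\overline{E}_{\theta^*}|<\xi_0$, cannot be established by \cref{lem:bound_of_En_new}~\eqref{eqn:Enbound_new} alone, because $\theta^*$ is data-dependent and not a priori confined to any fixed interval around $\lambda_0$. The correct route is the two-stage argument already inside the proof of \cref{prop:qcels_simple_full}: first use \cref{lem:bound_of_En}~\eqref{eqn:Enbound} (which bounds $\frac{1}{N}\sum|E_n|$ and hence $|\overline{E}_\theta|$ uniformly in $\theta$) together with the condition $\min\{N,N_s\}=\Omega(\mathrm{polylog}(\eta^{-1}))$ to place $\theta^*$ inside an $O(1/N)$ interval around $\lambda_0$ via \eqref{eqn:loosebound_D0}, and only then apply \eqref{eqn:Enbound_new} on that interval. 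Since you already invoke \cref{prop:qcels_simple_full} as the base case and its hypotheses are assumed, this is just a matter of citing the right ingredient; the rest of your argument goes through unchanged.
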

\begin{proof}[Proof of \cref{lem:3_new}] First, according to \cref{lem:2_new} and \cref{prop:qcels_simple_full}, conditions \eqref{eqn:prerq_225_new}, \eqref{eqn:prerq_24_new} ensure
\[
\mathbb{P}\left(\left|(\theta^*-\lambda_0)\ \text{mod}\ [-\pi/\tau,\pi/\tau] \right|<\frac{\delta_1}{T}\right)\geq 1-\eta/2,\quad \mathbb{P}\left(|\overline{E}_{\theta^*}|+|\overline{E}_{\lambda_0}|\geq\xi_0\right)\leq \eta/8\,.
\]
Combining second inequality of \eqref{eqn:prerq_225_new} with \cref{lem:bound_of_En_new} \eqref{eqn:Enbound_new2}, we also have
 \[
    \mathbb{P}\left(\sup_{\theta\in[\lambda_0-\frac{\delta_1}{T},\lambda_0+\frac{\delta_1}{T}]}\left|\overline{E}_{\theta}-\overline{E}_{\lambda_0}\right|\geq \xi_2\right)\leq \frac{\eta}{4(M-1)}\,.
\]
Thus, with probability $1-\left(\
\frac{3\eta}{4}+\frac{\eta}{4(M-1)}\right)$, we have
\[
R_0<\frac{\delta_1}{N},\quad |\overline{E}_{\theta^*}|+|\overline{E}_{\lambda_0}|<\min\{\xi_0,10^{-3}\},\quad \left|\overline{E}_{\theta^*}-\overline{E}_{\lambda_0}\right|< \xi_2\,.
\]
Similar to the proof of \cref{lem:2_new}, we have
\[
\mathbb{P}\left(\left|(\theta^*-\lambda_0)\ \text{mod}\ [-\pi/\tau,\pi/\tau] \right|<\frac{\delta_{\mathrm{new}}}{T}\right)\geq 1-\left(\frac{3\eta}{4}+\frac{\eta}{4(M-1)}\right)\,.
\]
where $\delta_{\mathrm{new}}$ is the unique solution to the following equation:
\[
\frac{p_0}{\left(\sqrt{2}\alpha+1\right)p_0-\sqrt{2}\alpha-\sqrt{2}(\xi_2+\xi_0\delta_{\mathrm{new}}/2)}=\frac{N\sin(\delta_{\mathrm{new}}/(2N))}{\sin(\delta_{\mathrm{new}}/2)}\,.
\]
Because 
\[
\frac{p_0}{\left(\sqrt{2}\alpha+1\right)p_0-\sqrt{2}\alpha-\sqrt{2}(\xi_2+\xi_0\delta_2/2)}= \frac{\delta_2\cos(\delta_2/10)}{2\sin(\delta_2/2)}<\frac{N\sin(\delta_{2}/(2N))}{\sin(\delta_{2}/2)}\,,
\]
we must have $\delta_2>\delta_{\mathrm{new}}$ and
\[
\mathbb{P}\left(\left|(\theta^*-\lambda_0)\ \text{mod}\ [-\pi/\tau,\pi/\tau] \right|<\frac{\delta_2}{T}\right)\geq 1-\left(\frac{3\eta}{4}+\frac{\eta}{4(M-1)}\right)\,.
\]
Combining this with second inequality of \eqref{eqn:prerq_225_new} ($i=2$), with probability $1-\left(\frac{3\eta}{4}+\frac{\eta}{2(M-1)}\right)$, we have
\[
R_0<\frac{\delta_2}{N},\quad |\overline{E}_{\theta^*}|+|\overline{E}_{\lambda_0}|<\xi_0,\quad \left|\overline{E}_{\theta^*}-\overline{E}_{\lambda_0}\right|< \xi_3\,,
\]
which implies
\[
\mathbb{P}\left(\left|(\theta^*-\lambda_0)\ \text{mod}\ [-\pi/\tau,\pi/\tau] \right|<\frac{\delta_3}{T}\right)\geq 1-\left(\frac{3\eta}{4}+\frac{\eta}{2(M-1)}\right)\,.
\]
Doing this repeatedly, we finally obtain \eqref{eqn:distance_new_new}.
\end{proof}

Now, we are ready to prove \cref{thm:1_full}:
\begin{proof}[Proof of \cref{thm:1_full}]
For fixed  decreasing sequence $\{\delta_i\}^M_{i=0}$ with small enough $\delta_0$, according to \eqref{eqn:prerq} and \eqref{eqn:prerq_24_new}, to make the depth smaller than $\delta_M/\epsilon$, we need
\begin{equation}\label{eqn:requirement_p_xi}
(1+\sqrt{2}\alpha)(1-p_0)=\mathcal{O}\left(\min_{i}\left\{\delta^2_i\right\}\right),\quad \xi_0=\mathcal{O}\left(\min\{\delta^2_0,\delta_M\}\right),\quad \xi_i=\mathcal{O}\left(\delta^2_i\right)
\end{equation}
for all $1\leq i\leq M$. The first equation implies that the smallest $\delta_M$ that we can choose is
\[
\delta_M=\Theta(\sqrt{1-p_0})\,.
\]
According to \eqref{eqn:prerq:2} and \eqref{eqn:prerq_225_new}, the last two equations of \eqref{eqn:requirement_p_xi} imply that
\begin{equation}\label{eqn:RQforN_old}
    NN_s=\widetilde{\Omega}\left(\max_{0\leq i\leq M-1}\left\{p^{-2}_0\xi^{-2},p^{-2}_0\delta^2_i\xi^{-2}_{i+1}\right\}\right)
\end{equation}
Minimizing \eqref{eqn:RQforN_old} in $\{\delta_i\}^{M-1}_{i=0},\{\xi_i\}^M_{i=0}$ with fixed $\delta_M$, a proper choice of these parameters should be 
\[
\xi_i=\Theta\left(\delta^2_i\right),\quad \delta_i=\Theta\left(\delta^{\frac{2-(1/2)^i}{2-(1/2)^M}}_M\right)
\]
for $0\leq i\leq M$. Thus, to ensure
\[
\mathbb{P}\left(\left|(\theta^*-\lambda_0)\ \mathrm{mod}\ [-\pi/\tau,\pi/\tau] \right|<\frac{\delta_M}{T}\right)\geq 1-\eta\,,
\]
we can choose $NN_s=\Omega\left(\delta^{-\frac{4}{2-(1/2)^M}}_M\mathrm{polylog}(M\delta^{-1}_M\eta^{-1}p^{-1}_0\Delta^{-1})\right)$. Set $\zeta=\frac{1}{2^M}\frac{1}{2-(1/2)^M}$, we conclude the proof.

\end{proof}

\section{Proof of \texorpdfstring{\cref{thm:main}}{Lg}}\label{sec:pf_thm}
In this section, we prove a rigorous version of \cref{thm:main} as follows:
\begin{thm}[Complexity of multi-level QCELS]
\label{thm:main_full}
Let $\theta^*$ be the output of Algorithm \ref{alg:main}.
Given $p_0>0.71$, $0<\eta<1/2$, $0<\epsilon<1/2$, and small $\zeta>0$. We can choose $\delta$ according to \cref{eqn:delta_1}. Let
\[
J=\left\lceil\log_2(1/\epsilon)\right\rceil+1,\quad \tau_j=2^{j-1-\left\lceil\log_2(1/\epsilon)\right\rceil}\frac{\delta}{N\epsilon},\quad\forall 1\leq j\leq J\,,
\]
and
\[
NN_s=\Theta\left(\delta^{-(2+\zeta)}\mathrm{polylog}\left(\log(\zeta^{-1})\log(\epsilon^{-1})\eta^{-1}\right)\right),\quad \min\{N,N_s\}= \Omega(\mathrm{polylog}\left(\log(\epsilon^{-1})\eta^{-1}\right))\,.
\]
Denote $\theta^*$ as the output of Algorithm \ref{alg:main}, then
\[
\mathbb{P}\left(\left|(\theta^*-\lambda_0)\right|<\epsilon\right)\geq 1-\eta\,.
\]
In particular, we have
\[
T_{\max}=N\tau_J =\frac{\delta}{\epsilon},\quad T_{\mathrm{total}}=\sum^J_{j=1}N(N-1)N_s\tau_j/2=\Theta\left(\frac{\mathrm{polylog}\left(\log(\zeta^{-1})\log(\epsilon^{-1})\eta^{-1}\right)}{\delta^{1+\zeta}\epsilon}\right)\,.
\]
\end{thm}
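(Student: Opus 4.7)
The plan is to prove by induction on $j$ that, with probability at least $1 - j\eta/J$, at the end of iteration $j$ of \cref{alg:main} we have $\lambda_0 \in [\lambda_{\min}^{(j)}, \lambda_{\max}^{(j)}]$ and $|\theta_j^* - \lambda_0| < \delta/T_j$ where $T_j = N\tau_j$. The base case is immediate: before iteration $1$ the search interval is $[-\pi, \pi]$ and contains $\lambda_0$. For the inductive step, one applies \cref{thm:1_full} to the data set generated at iteration $j$ with per-iteration failure probability $\eta/J$, which yields
\begin{equation*}
\bigl|(\theta_j^* - \lambda_0)\bmod[-\pi/\tau_j, \pi/\tau_j)\bigr| < \delta/T_j.
\end{equation*}
The stated sample budget $NN_s = \Theta(\delta^{-(2+\zeta)}\polylog(\cdots))$ together with $\min\{N,N_s\} = \Omega(\polylog(\cdots))$ fits the hypothesis of \cref{thm:1_full}; the replacement of $\eta$ by $\eta/J$ with $J = \lceil\log_2(1/\epsilon)\rceil + 1$ only changes the polylog prefactor, absorbing a $\log\log(1/\epsilon)$.

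The crucial step is to convert this modular estimate into the plain estimate $|\theta_j^* - \lambda_0| < \delta/T_j$. The choice $\tau_{j+1} = 2\tau_j$ is designed precisely so that for $j \geq 2$ the search interval has width $\pi/\tau_{j-1} = 2\pi/\tau_j$, equal to one full period of the loss $L$ at iteration $j$; for $j=1$ the initial interval $[-\pi,\pi]$ is contained in the period $2\pi/\tau_1$ since $\tau_1 \leq \delta/N \leq 1$. Since both $\theta_j^*$ (by the constraint in step 8) and $\lambda_0$ (by the inductive hypothesis) lie in this single period, their difference modulo $2\pi/\tau_j$ either equals the actual difference or differs from it by $\pm 2\pi/\tau_j$; because the mod value has magnitude $< \delta/T_j < \pi/\tau_j$, the non-shifted case must occur, yielding $|\theta_j^* - \lambda_0| < \delta/T_j$. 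The update $[\lambda_{\min}^{(j+1)}, \lambda_{\max}^{(j+1)}] = [\theta_j^* \pm \pi/(2\tau_j)]$ still contains $\lambda_0$ because $\delta/T_j < \pi/(2\tau_j)$ (which follows from $\delta \leq 4$ and the polylog lower bound on $N$), closing the induction.

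A union bound over $j = 1, \ldots, J$ yields the global success probability $\geq 1 - \eta$. At $j = J$, the choice $\tau_J = \delta/(N\epsilon)$ gives $T_J = \delta/\epsilon$ and $|\theta_J^* - \lambda_0| < \epsilon$, so $T_{\max} = \delta/\epsilon$ as claimed. For the total cost, the geometric growth $\tau_{j+1} = 2\tau_j$ makes the final iteration dominate: $\sum_{j=1}^J \tau_j = \tau_J \sum_{k=0}^{J-1} 2^{-k} \leq 2\tau_J$, whence $T_{\mathrm{total}} \leq 2N^2 N_s \tau_J = 2NN_s\,\delta/\epsilon$, and substituting the sample budget gives the advertised $\Theta(\delta^{-(1+\zeta)}\epsilon^{-1}\polylog(\cdots))$.

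The main obstacle, and where a naive argument would fail, is the passage from the modular error bound of \cref{thm:1_full} to a bound on the actual constrained algorithm output $\theta_j^*$. This identification hinges on the exact matching between the search-interval width and one period of the per-iteration loss: a growth rate $\tau_{j+1}/\tau_j > 2$ would create multiple indistinguishable candidate estimates in each search interval, while a strictly slower growth would either fail to reach $T_{\max} = \delta/\epsilon$ in $O(\log(1/\epsilon))$ iterations or inflate the total cost beyond the Heisenberg-limited scaling. The doubling schedule in \eqref{eqn:tauj_choice} is exactly the sweet spot that makes both the inductive step and the final cost accounting close.
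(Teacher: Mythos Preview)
Your overall plan coincides with the paper's proof: induction on the level $j$, invoking \cref{thm:1_full} at each level with per-level failure probability $\eta/J$, converting the modular estimate to an absolute one via the doubling schedule $\tau_{j}=2\tau_{j-1}$, and summing a geometric series for $T_{\mathrm{total}}$. The base case and the cost bookkeeping are fine.

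There is, however, a genuine gap in the step you yourself flag as crucial. You argue: both $\theta_j^*$ and $\lambda_0$ lie in the search interval of width $2\pi/\tau_j$, so their actual difference lies in $(-2\pi/\tau_j,2\pi/\tau_j)$; since the modular value has magnitude $<\delta/T_j<\pi/\tau_j$, ``the non-shifted case must occur.'' That inference is invalid. If $\theta_j^*$ and $\lambda_0$ sit near opposite ends of the search interval, the actual difference can be as large as $2\pi/\tau_j-\varepsilon'$ while the representative in $[-\pi/\tau_j,\pi/\tau_j)$ is $-\varepsilon'$, arbitrarily small; nothing in ``both lie in one period'' rules this out.

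The repair uses the \emph{stronger} half of your own inductive hypothesis, which you state but do not invoke here: $|\theta_{j-1}^*-\lambda_0|<\delta/T_{j-1}=2\delta/(N\tau_j)$. Since $\theta_j^*$ is constrained to the search interval centered at $\theta_{j-1}^*$ with half-width $\pi/\tau_j$, one gets $|\theta_j^*-\lambda_0|\le \pi/\tau_j+2\delta/(N\tau_j)$. The shifted case would require $|\theta_j^*-\lambda_0|>2\pi/\tau_j-\delta/(N\tau_j)$, which is excluded once $3\delta/N<\pi$; this holds because $\delta\le 4$ and $N$ is at least polylogarithmically large. The paper carries out exactly this computation: it intersects the inductive interval $[\theta_{j-1}^*\pm\delta/(N\tau_{j-1})]$ with the periodic set $C_k=\bigcup_{c\in\mathbb Z}[\theta_j^*\pm\delta/(N\tau_j)+2c\pi/\tau_j]$ and uses $\delta/N<\pi/4$ to show only the $c=0$ copy survives. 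With this correction in place, your argument and the paper's are the same.
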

\begin{proof}[Proof of \cref{thm:main_full}]
To prove \cref{thm:main}, it suffices to prove that for each fixed $j$, 
\begin{equation}\label{eq:induction}
\mathbb{P}\left(\lambda_0\in \left[\theta^*_j-\frac{\delta}{N\tau_j},\theta^*_j+\frac{\delta}{N\tau_j}\right]\right)\leq 1-\frac{j\eta}{J}\,.
\end{equation}
We prove this by induction. First, let $j=1$, according to \cref{thm:1_full},
\[
\mathbb{P}\left(\left|(\theta^*_1-\lambda_0)\ \mathrm{mod}\ [-\pi/\tau_1,\pi/\tau_1] \right|<\frac{\delta}{T}\right)\geq 1-\frac{\eta}{J}\,,
\]
Because $\tau_1=2^{-\left\lceil\log_2(1/\epsilon)\right\rceil}\frac{\delta}{N\epsilon}\leq \frac{\delta}{N}<\frac{1}{4}$, we have $\pi/\tau_1>\pi$. Then,
\[
\mathbb{P}\left(\left|\theta^*_1-\lambda_0\right|<\frac{\delta}{T}\right)\geq 1-\frac{\eta}{J}\,,
\]
Assume \eqref{eq:induction} is true for $j=K-1$, meaning
\[
\mathbb{P}\left(\lambda_0\in \left[\theta^*_{K-1}-\frac{\delta}{N\tau_{K-1}},\theta^*_{K-1}+\frac{\delta}{N\tau_{K-1}}\right]\right)\geq 1-\frac{(K-1)\eta}{J}\,.
\]
Using \cref{thm:1_full} again, we have
\begin{equation}\label{eqn:several_interval}
\mathbb{P}\left(\lambda_0\in \left[\theta^*_{K-1}-\frac{\delta}{N\tau_{K-1}},\theta^*_{K-1}+\frac{\delta}{N\tau_{K-1}}\right]\bigcap C_k\right)\geq 1-\frac{K\eta}{J}
\end{equation}
where $C_k=\bigcup_{c\in\mathbb{Z}}\left[\theta^*_{K}-\frac{\delta}{N\tau_{K}}+\frac{2c\pi}{\tau_K},\theta^*_{K}+\frac{\delta}{N\tau_{K}}+\frac{2c\pi}{\tau_K}\right].$

Noticing $\tau_K=2\tau_{K-1}$, $\frac{\delta}{N}<\frac{\pi}{4}$, we have 
\[
\frac{2\pi}{\tau_K}-\frac{4\delta}{N\tau_K}>\frac{\pi}{\tau_{K}}\,.
\]
Since $\theta^*_K\in \left[\theta^*_{K-1}-\frac{\pi}{\tau_{K}},\theta^*_{K-1}+\frac{\pi}{\tau_{K}}\right]$, we obtain
\[
\begin{aligned}
&\left[\theta^*_{K-1}-\frac{\delta}{N\tau_{K-1}},\theta^*_{K-1}+\frac{\delta}{N\tau_{K-1}}\right]\bigcap C_k\\
=&\left[\theta^*_{K-1}-\frac{\delta}{N\tau_{K-1}},\theta^*_{K-1}+\frac{\delta}{N\tau_{K-1}}\right]\bigcap \left[\theta^*_{K}-\frac{\delta}{N\tau_{K}},\theta^*_{K}+\frac{\delta}{N\tau_{K}}\right]\,.
\end{aligned}
\]
Combining this with \eqref{eqn:several_interval}, we obtain
\[
\mathbb{P}\left(\lambda_0\in \left[\theta^*_{K}-\frac{\delta}{N\tau_{K}},\theta^*_{K}+\frac{\delta}{N\tau_{K}}\right]\right)\geq 1-\frac{K\eta}{J}\,,
\]
which concludes the proof.
\end{proof}
\section{Proof of  \texorpdfstring{\cref{thm:qcels_small_p0}}{Lg}}\label{sec:complexity_small_p0}

In this section, we prove the following rigorous version of \cref{thm:qcels_small_p0}:
\begin{thm}[Complexity of Algorithm \ref{alg:main2}]\label{prop:small_p0_rigorous} Given small $\zeta>0$ and the failure probability $0<\eta<1$. Assume $\delta= \Theta\left(\sqrt{1-p_r(I,I')}\right)$ and $p_r(I,I')$ is close enough to $1$. Let $d=\Theta(D^{-1}\mathrm{polylog}(p^{-1}_0\delta^{-1}))$, $q=\Theta(p_0\delta^2)$, 
\[
J=\left\lceil\log_2(1/\epsilon)\right\rceil+1,\quad \tau_j=2^{j-1-\left\lceil\log_2(1/\epsilon)\right\rceil}\frac{\delta}{N\epsilon},\quad\forall 1\leq j\leq J\,,
\]
and
\[
\begin{aligned}
&NN_s= \Omega\left(p^{-2}_0\delta^{-(2+\zeta)}\mathrm{polylog}(\log(\zeta^{-1})\log(\epsilon^{-1})D^{-1}\eta^{-1}p^{-1}_0)\right)\,,\\
&\min\{N,N_s\}= \Omega(p^{-2}_0\mathrm{polylog}\left(\log(\epsilon^{-1})D^{-1}\eta^{-1}\right))\,,
\end{aligned}
\]
then
\[
\mathbb{P}\left(\left|(\theta^*-\lambda_0)\ \mathrm{mod}\ [-\pi/\tau,\pi/\tau] \right|<\frac{\delta}{T}\right)\geq 1-\eta\,,
\]
where $\theta^*$ is the output of \cref{alg:main2}. In particular, to construct the loss function,  
\[
T_{\max}=d+\delta/\epsilon,\quad T_{\mathrm{total}}=\Theta\left(p^{-2}_0\delta^{-(2+\zeta)}\mathrm{polylog}(\log(\zeta^{-1})\log(\epsilon^{-1})D^{-1}\eta^{-1}p^{-1}_0)\left(d+\delta/\epsilon\right)\right)\,.
\]
\end{thm}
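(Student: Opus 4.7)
The plan is to reduce \cref{prop:small_p0_rigorous} to a variant of \cref{thm:main_full} applied to the filtered data $\{Z_{n,q}\}$ produced by DataGenerator. The central observation is that, modulo a small deterministic bias arising from the imperfect filter and from the leakage of eigenstates in $I'\setminus\{\lambda_0\}$, the expectation $\mathbb{E}(Z_{k,n,q})=\langle\psi|F_q(H)e^{-in\tau H}|\psi\rangle$ is concentrated around $p_0 e^{-i\lambda_0 n\tau}$, so the nonlinear least squares \eqref{eq:loss_small_p0} is fitting a complex exponential of amplitude $p_0$ with effective overlap $p_r(I,I')$ playing the role of $p_0$ in the pure QCELS analysis.

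First I expand the expectation in the spectral decomposition, split the sum over eigenvalues into $\{\lambda_0\}$, $I'\setminus\{\lambda_0\}$, and $(I')^c$, and apply the two bounds in \cref{eqn:require_f} together with the definition of $p_r(I,I')$ to obtain
\[
|Z_{n,q}-p_0 e^{-i\lambda_0 n\tau}|\le |E_{n,q}|+q+\frac{(1-p_r(I,I'))\,p_0}{p_r(I,I')}.
\]
Choosing $q=\Theta(p_0\delta^2)$ together with $\delta=\Theta(\sqrt{1-p_r(I,I')})$ keeps the deterministic part of the perturbation at scale $O(p_0\delta^2)$, of the same order as the target fluctuation tolerated by the pure QCELS analysis.

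Next I re-derive the tail bounds of \cref{lem:bound_of_En,lem:bound_of_En_new} for $E_{n,q}$. Because each sample is rescaled by $\sum_l|\hat F_{l,q}|=\Theta(\log d)$, its magnitude is $\widetilde{O}(1)$ rather than $2$, which introduces only polylogarithmic overhead in the Hoeffding and $\varepsilon$-net arguments. Pushing $\overline{E}_{\theta,q}$ and $\overline{E}_{\theta,q}-\overline{E}_{\lambda_0,q}$ below the thresholds required to drive the QCELS comparison argument then costs $NN_s=\widetilde{\Omega}(p_0^{-2}\delta^{-(2+o(1))})$ and $\min\{N,N_s\}=\widetilde{\Omega}(p_0^{-2})$; the extra $p_0^{-2}$ reflects the fact that the signal amplitude in the least-squares fit is $p_0$ instead of $1$, so the noise must be controlled at the smaller scale $p_0\delta^2$.

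Finally I replay the three-step argument of \cref{prop:qcels_simple_full,thm:1_full}: after rescaling the loss by $p_0^{-2}$, the optimization reduces to the pure QCELS problem with a dominant unit-amplitude exponential perturbed by a complex error of size $O(\delta^2)$, which is exactly the regime handled by \cref{lem:2_new,lem:3_new}. This yields $|\theta^*_j-\lambda_0 \bmod [-\pi/\tau_j,\pi/\tau_j)|\le \delta/(N\tau_j)$ at each level $j$, and the doubling schedule $\tau_{j+1}=2\tau_j$ together with the inductive argument in the proof of \cref{thm:main_full} removes the modular ambiguity and converts this into absolute precision $\epsilon$ at the final level $J=\lceil\log_2(1/\epsilon)\rceil+1$. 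Each circuit call has runtime $|r|+n\tau_j\le d+N\tau_J=\widetilde{\Theta}(D^{-1})+\delta/\epsilon$, giving the claimed $T_{\max}$; summing over $n\le N$, $j\le J$, and $N_s$ samples gives $T_{\mathrm{total}}=\widetilde{\Theta}(p_0^{-2}\delta^{-(2+\zeta)}(d+\delta/\epsilon))$. The main obstacle will be the interplay between the three error sources in the recursive refinement: the self-improving scheme of \cref{lem:3_new} must tolerate not only the Monte Carlo noise but also the deterministic bias $q+p_0(1-p_r(I,I'))/p_r(I,I')$, so I need to verify that its induction hypotheses remain consistent when $\xi_i$ is replaced by the sum of stochastic and deterministic errors and that the $\log d$ overhead from the filter rescaling does not contaminate the $\delta^{-(2+\zeta)}$ scaling.
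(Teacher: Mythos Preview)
Your proposal is correct and follows essentially the same route as the paper. The paper reduces \cref{prop:small_p0_rigorous} to a single-level statement (\cref{thm:qcels_small_p0_full}) via the same multi-level induction as in \cref{thm:main_full}, then proves the single-level version by (i) re-deriving the tail bounds with the $\mathcal{F}=\sum_l|\hat F_{l,q}|$ rescaling (\cref{lem:bound_of_En_q}), (ii) absorbing all non-$\lambda_0$ contributions into the perturbation $G_{n,q}/p_0$ and rerunning the rough estimate (\cref{lem:small_p0}), and (iii) iterating the refinement exactly as in \cref{lem:2_new,lem:3_new} (their \cref{lem:2,lem:1_new}); the deterministic bias $q+p_0(1-p_r)/p_r$ is carried along as an additive $\Theta(\delta^2)$ term in each step, just as you anticipate.
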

Define 
$\mathcal{F}=\sum^{d}_{l=-d} \left|\hat{F}_{l,q}\right|$. Since $F_q$ in \cref{alg:main2} is chosen according to \cite[Lemma 6]{LinTong2022}, we have $\mathcal{F}=\Theta(\log(D^{-1}\mathrm{polylog}(q^{-1})))$. Then, similar to the proof of \cref{thm:main_full} (in Appendix \ref{sec:pf_thm}), to prove \cref{prop:small_p0_rigorous}, it suffices to show the following theorem which gives us the complexity of one step of the iteration with general choice of $I,I'$, and $F_q$:
\begin{thm}\label{thm:qcels_small_p0_full} Given small $\zeta>0$ and the failure probability $0<\eta<1$. Assume  $\delta=\Theta\left(\sqrt{1-p_r(I,I')}\right)$ and $p_r(I,I')$ is close enough to $1$. Set $q=\Theta(p_0\delta^2)$, \[NN_s= \Omega\left(p^{-2}_0\delta^{-(2+\zeta)}\mathcal{F}^2
\mathrm{polylog}(\log(\zeta^{-1})\eta^{-1}p^{-1}_0\mathcal{F})\right),\quad\min\{N,N_s\}= \Omega(p^{-2}_0\mathcal{F}^2\mathrm{polylog}\left(\eta^{-1}\mathcal{F}\right))\,.\] Then, 
\begin{equation}\label{eqn:distance_small_p_0}
\mathbb{P}\left(\left|(\theta^*-\lambda_0)\ \mathrm{mod}\ [-\pi/\tau,\pi/\tau] \right|<\frac{\delta}{T}\right)\geq 1-\eta\,,
\end{equation}
where $\theta^*$ is defined in \eqref{eq:loss_small_p0} (with $\tau_j=T$).
\end{thm}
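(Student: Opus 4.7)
The plan is to mirror the three-step structure used in the proof of \cref{thm:1_full} (rough bound on $R_0$, sharpened bound via the Lipschitz/cancellation argument, then bootstrapping through \cref{lem:2_new,lem:3_new}), but with three substitutions throughout: (i) replace the raw samples $X_{k,n}+iY_{k,n}$ by $Z_{k,n,q}$ which are bounded in magnitude by $\mathcal{F}$ rather than $2$, (ii) replace the expectation target $\langle\psi|e^{-in\tau H}|\psi\rangle$ by $\langle\psi|F_q(H)e^{-in\tau H}|\psi\rangle$, and (iii) replace the ``clean coefficient'' $p_0$ by $p_0 F_q(\lambda_0)$ and the ``clean overlap'' $p_0$ in the threshold condition by the relative overlap $p_r(I,I')$.

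First I would establish the analogues of \cref{lem:bound_of_En,lem:bound_of_En_new} for
\[
E_{n,q}=Z_{n,q}-\langle\psi|F_q(H)e^{-in\tau H}|\psi\rangle,\qquad \overline{E}_{\theta,q}=\frac{1}{N}\sum_{n=0}^{N-1}E_{n,q}e^{i\theta n\tau}.
\]
Since $|Z_{k,n,q}|\le \mathcal{F}$, the Hoeffding / sub-Gaussian arguments go through essentially verbatim but pick up an extra factor $\mathcal{F}$ in the deviation, which is precisely what produces the $\mathcal{F}^2$ factor in the hypothesis on $NN_s$. The Lipschitz estimate for $\overline{E}_{\theta,q}$ in $\theta$ picks up the same $\mathcal{F}$ prefactor, but this is harmless because it only enters logarithmically after the union bound on a discrete net.

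Next I would rewrite the loss function minimization. Expanding $Z_{n,q}$ in the eigenbasis,
\[
Z_{n,q}=p_0 F_q(\lambda_0)e^{-i\lambda_0 n\tau}+\sum_{m\ge 1}p_m F_q(\lambda_m)e^{-i\lambda_m n\tau}+E_{n,q},
\]
so that (by the same calculation leading to \cref{eqn:f}) minimizing $L_j(r,\theta)$ is equivalent to maximizing $|f_q(\theta)|$, where $f_q$ has the same form as $f$ but with each weight $p_m$ replaced by $p_m F_q(\lambda_m)$ and the leading coefficient being $p_0 F_q(\lambda_0)$. Using $|F_q(\lambda_0)-1|\le q$, $|F_q(\lambda_m)|\le 1+q$ for $\lambda_m\in I'$, and $|F_q(\lambda_m)|\le q$ for $\lambda_m\notin I'$, the ``pollution'' entering both the rough bound (\cref{eqn:loosebound_D0}) and the refined bound becomes
\[
\sum_{m\ge 1}\frac{p_m|F_q(\lambda_m)|}{p_0 |F_q(\lambda_0)|}\le \frac{1-p_r(I,I')}{p_r(I,I')}\cdot\frac{1+q}{1-q}+\frac{q}{p_0(1-q)},
\]
which is the exact analogue of the quantity $(1-p_0)/p_0$ appearing throughout \cref{sec:pf_prop}. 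With $q=\Theta(p_0\delta^2)$ and $\delta=\Theta(\sqrt{1-p_r(I,I')})$, this pollution is $\Theta(\delta^2)$, so the threshold inequality \eqref{eqn:prerq} (now written in terms of $p_r(I,I')$ instead of $p_0$) is satisfied whenever $p_r(I,I')$ is close enough to $1$, in the same regime as before.

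With these substitutions the Step 1/Step 2/Step 3 argument of \cref{prop:qcels_simple_full} goes through: Step 1 gives the lower bound $|f_q(\lambda_0)|\gtrsim (1-\Theta(\delta^2))N$, Step 2 produces a loose bound $R_0N=O(1)$ from comparing $|f_q(\theta^*)|\ge |f_q(\lambda_0)|$ with the estimate \eqref{eqn:gb}, and Step 3 refines $R_0N\le \delta$ using the concavity of the Fej\'er kernel. The iteration argument of \cref{lem:2_new,lem:3_new} then upgrades the $\delta$-scaling of $NN_s$ from $\delta^{-4}$ to $\delta^{-(2+\zeta)}$ for arbitrarily small $\zeta>0$ by repeatedly invoking the ``difference'' bound \eqref{eqn:Enbound_new2} (now for $\overline{E}_{\theta,q}-\overline{E}_{\lambda_0,q}$, again with an $\mathcal{F}$ prefactor) on the shrinking intervals produced by each iteration. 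I expect the main obstacle to be bookkeeping rather than conceptual: one must track the $\mathcal{F}$ and $p_0^{-1}$ factors cleanly through every application of the tail bounds (in particular verifying that the ``pollution'' terms enter additively and not multiplicatively with $p_0^{-1}$) so that the final sample complexity reads $p_0^{-2}\delta^{-(2+\zeta)}\mathcal{F}^2$ rather than some larger power, and confirming that the $\mathrm{polylog}$ factors absorb the net-size overhead from the Lipschitz-plus-union-bound step.
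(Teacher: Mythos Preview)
Your proposal is correct and follows the paper's own proof closely: the paper indeed re-proves the tail bounds for $E_{n,q}$ with an $\mathcal{F}$ prefactor (its \cref{lem:bound_of_En_q}), then re-runs the rough estimate (\cref{lem:small_p0}), the single-step refinement (\cref{lem:2}), and the bootstrapping iteration (\cref{lem:1_new}) to reach the $\delta^{-(2+\zeta)}$ scaling, exactly as you outline.

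The one organizational difference worth noting is that the paper does \emph{not} keep the full eigenbasis expansion with weights $p_mF_q(\lambda_m)$ as you propose. Instead it lumps everything except the single term $p_0e^{-i\lambda_0 n\tau}$ into one residual $G_{n,q}=E_{n,q}+p_0(F_q(\lambda_0)-1)e^{-i\lambda_0 n\tau}+\sum_{k\ge1}p_kF_q(\lambda_k)e^{-i\lambda_k n\tau}$, bounds $|G_{n,q}|\le |E_{n,q}|+q+(1-p_r)p_0/p_r$ once, and then runs the whole argument against a single exponential plus noise. This buys a small simplification: the lower bound on $|f(\lambda_0)|$ becomes simply $N(1-|G|/p_0)$ with no need for the constant $\alpha$ from \cref{sec:lbgb}, so the threshold inequality \eqref{eqn:pre_small_p0_0} is cleaner than the $\alpha$-laden \eqref{eqn:prerq} you would obtain. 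It also sidesteps having to bound $|F_q(\lambda_m)|$ on the transition region $I'\setminus I$ separately (your claim $|F_q|\le 1+q$ there does not follow from the two stated properties of $F_q$; the lumped bound avoids this issue). Your route still works and yields the same final scaling, but the paper's packaging of the pollution is tidier.
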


Define
\[
\overline{E}_{\theta}=\frac{1}{N}\sum^{N-1}_{n=0}E_{n,q}\exp(i\theta n\tau)\,.
\]
where $E_{n,q}$ is defined in \eqref{eqn:Enq}. Similar to the large $p_0$ setting, we first give a bound for the expectation error $E_{n,q}$ in the following lemma:
\begin{lem}\label{lem:bound_of_En_q} Assume $q<1$. Given $0<\eta<1/2$ and $0<\rho,\xi<10\pi$, then
\begin{itemize}
    \item When $\min\{N,N_s\}= \Omega(p^{-2}_0\mathcal{F}^2\log(\eta^{-1}\mathcal{F}))$,
    \begin{equation}\label{eqn:Enbound_q}
    \mathbb{P}\left(\frac{1}{N}\sum^{N-1}_{n=0}|E_{n,q}|>
    0.001p_0\right)\leq \eta\,.
\end{equation}

\item When $NN_s=\Omega\left(p^{-2}_0\xi^{-2}\mathcal{F}^2\mathrm{polylog}(\xi^{-1}\eta^{-1}p^{-1}_0)\mathcal{F}\right)$,
\begin{equation}\label{eqn:Enbound_new_q}
    \mathbb{P}\left(\sup_{\theta\in[\lambda_0-\frac{\rho}{T},\lambda_0+\frac{\rho}{T}]}\left|\overline{E}_{\theta,q}\right|\geq p_0\xi\right)\leq \eta\,.
\end{equation}

\item When $NN_s=\Omega\left(\rho^2 p^{-2}_0\xi^{-2}\mathcal{F}^2\mathrm{polylog}(\xi^{-1}\eta^{-1}p^{-1}_0\mathcal{F})\right)$,
\begin{equation}\label{eqn:Enbound_new2_q}
    \mathbb{P}\left(\sup_{\theta\in[\lambda_0-\frac{\rho}{T},\lambda_0+\frac{\rho}{T}]}\left|\overline{E}_{\theta,q}-\overline{E}_{\lambda_0,q}\right|\geq p_0\xi\right)\leq \eta\,.
\end{equation}

\end{itemize}
\end{lem}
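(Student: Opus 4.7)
The plan is to adapt the proofs of \cref{lem:bound_of_En} and \cref{lem:bound_of_En_new} to the filtered-data setting. The only structural change is that a single output of \texttt{DataGenerator}, namely
\[
Z_{k,n,q}=\bigl(\widetilde{X}_{k,n}+i\widetilde{Y}_{k,n}\bigr)\,e^{i\phi_{r,q}}\,\mathcal{F},
\]
has magnitude bounded by $2\mathcal{F}$ rather than by $2$, while still being an unbiased estimator of $\langle\psi|F_q(H)e^{-in\tau H}|\psi\rangle$ (averaging over $r\sim(\beta_l)$ and then over the Hadamard-test outcomes reconstructs the Fourier expansion of $F_q$). Hence each independent summand contributing to $E_{n,q}$ or $\overline{E}_{\theta,q}$ is bounded by $\Or(\mathcal{F})$, and this is the single factor that must be tracked through every concentration inequality.

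For part~(1), write $E_{n,q}=\tfrac{1}{N_s}\sum_k E_{k,n,q}$ with $|E_{k,n,q}|\le 2\mathcal{F}$ and zero mean. Cauchy--Schwarz gives $\mathbb{E}|E_{n,q}|\le 2\mathcal{F}/\sqrt{N_s}$, and $|E_{n,q}|\le 2\mathcal{F}$ deterministically. Applying Hoeffding's inequality to the average $\tfrac{1}{N}\sum_n |E_{n,q}|$, exactly as in \cref{lem:bound_of_En}, yields a tail of the form $\Or(\mathcal{F}/\sqrt{N_s}+\mathcal{F}\sqrt{\log(1/\eta)/N})$. The stated hypothesis $\min\{N,N_s\}=\Omega(p_0^{-2}\mathcal{F}^2\log(\eta^{-1}\mathcal{F}))$ forces both terms below $0.001\,p_0$, proving \eqref{eqn:Enbound_q}.

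For parts~(2) and~(3), I would follow the net argument of \cref{lem:bound_of_En_new}. Split $Z_{k,n,q}e^{i\theta n\tau}$ into real and imaginary parts; each yields an independent, mean-zero summand of magnitude $\Or(\mathcal{F})$, so Hoeffding gives the pointwise bound $\mathbb{P}(|\overline{E}_{\theta,q}|\ge t)\le 4\exp(-c\,NN_s\,t^2/\mathcal{F}^2)$. The map $\theta\mapsto\overline{E}_{\theta,q}$ is $\Or(\mathcal{F}T)$-Lipschitz on the relevant interval (the $\mathcal{F}$ only enters the Lipschitz constant and consequently sits inside a logarithm after discretization). Discretizing $[\lambda_0-\rho/T,\lambda_0+\rho/T]$ with mesh $\rho/(T\sqrt{NN_s})$ and taking a union bound, with $t=p_0\xi$, produces \eqref{eqn:Enbound_new_q} under the hypothesis $NN_s=\Omega(p_0^{-2}\xi^{-2}\mathcal{F}^2\,\mathrm{polylog})$. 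For \eqref{eqn:Enbound_new2_q}, the increment $|e^{i\theta n\tau}-e^{i\lambda_0 n\tau}|\le n|\theta-\lambda_0|\tau\le\rho$ reduces the effective magnitude of each summand by a factor $\rho$, so the sub-Gaussian proxy shrinks to $\Or(\rho^2\mathcal{F}^2)$; this is precisely why the sample requirement gains the explicit $\rho^2$ factor while the target deviation remains $p_0\xi$.

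The main bookkeeping difficulty is not mathematical but notational: one must simultaneously propagate the factor $\mathcal{F}$ through the sub-Gaussian proxy \emph{and} account for the fact that the target deviation is phrased as $p_0\xi$ rather than $\xi$, which together produce the combined scaling $p_0^{-2}\mathcal{F}^2\xi^{-2}$ in the sample-size hypothesis. Beyond this, the argument is a mechanical application of Hoeffding plus Lipschitz discretization, entirely parallel to \cref{sec:bound_En}.
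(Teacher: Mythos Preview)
Your proposal is correct and matches the paper's approach exactly: the paper's proof consists of a single sentence noting that the argument is identical to that of \cref{lem:bound_of_En} and \cref{lem:bound_of_En_new} once one observes $E_{n,q}=\frac{1}{N_s}\sum_k\bigl(Z_{k,n,q}-\mathbb{E}Z_{k,n,q}\bigr)$ with $|Z_{k,n,q}|=\Or(\mathcal{F})$. You have spelled out precisely this adaptation, including the correct tracking of the $\mathcal{F}$ factor through Hoeffding and the $p_0$ in the target deviation, so there is nothing to add.
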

\begin{proof}[Proof of \cref{lem:bound_of_En_q}] The proof is very similar to that of Lemmas \ref{lem:bound_of_En} and \ref{lem:bound_of_En_new} after noticing $E_{n,q}=\frac{1}{N_s}\sum^{N_s}_{k=1}Z_{k,n,q}-\mathbb{E}(Z_{k,n,q})$ and $|Z_{k,n,q}|\leq \mathcal{F}$. 
\end{proof}

Before proving \cref{thm:qcels_small_p0_full}, we first prove a result that is similar to \cref{prop:qcels_simple_full}.
\begin{lem}\label{lem:small_p0}
Define $T=N\tau$ and assume $p_r(I,I')$ is close enough to $1$. Set $q=\Theta(p_0\delta^2)$. Given $\Theta\left(\sqrt{1-p_r(I,I')}\right)\leq \delta\leq 4$, $0<\eta<1/2$. If there exists a small enough number $\xi>0$ such that
\begin{equation}\label{eqn:pre_small_p0_0}
\frac{1}{1-\xi-\delta^2/200}\leq \frac{\delta\cos(\delta/10)}{2\sin(\delta/2)}\,, 
\end{equation}
and
\begin{equation}\label{eqn:pre_small_p0}
NN_s= \Omega\left(p^{-2}_0\xi^{-2}\mathcal{F}^2\mathrm{polylog}\left(\xi^{-1}\eta^{-1}p^{-1}_0\mathcal{F}\right)\right),\quad \min\{N,N_s\}= \Omega\left(p^{-2}_0\mathcal{F}^2\mathrm{polylog}\left(\eta^{-1}\mathcal{F}\right)\right)\,, 
\end{equation}
then
\begin{equation}\label{eqn:distance_small_p0_lemma1}
\mathbb{P}\left(\left|(\theta^*-\lambda_0)\ \mathrm{mod}\ [-\pi/\tau,\pi/\tau] \right|<\frac{\delta}{T}\right)\geq 1-\eta\,,
\end{equation}
where $\theta^*$ is defined in \eqref{eq:loss_small_p0}.
\end{lem}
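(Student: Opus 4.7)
The plan is to repeat the three-step argument used in the proof of \cref{prop:qcels_simple_full}, but with the filtered data $Z_{n,q}$ in place of $Z_n$, and with the clean signal $p_0 e^{-i\lambda_0 n\tau}$ identified via the filter bound $|F_q(\lambda_0)-1|\le q$. First I would reduce minimizing $L_j(r,\theta)$ to maximizing
\[
f_q(\theta)=\frac{1}{N}\sum_{n=0}^{N-1}Z_{n,q}\,e^{i\theta n\tau},
\]
exactly as in \eqref{eqn:min_r}--\eqref{eqn:max_loss}. I would then expand $Z_{n,q}$ around the clean signal,
\[
Z_{n,q}=p_0 F_q(\lambda_0)\,e^{-i\lambda_0 n\tau}+\sum_{m\ge 1}p_m F_q(\lambda_m)\,e^{-i\lambda_m n\tau}+E_{n,q},
\]
and use the two filter bounds in \eqref{eqn:require_f} together with the definition \eqref{eqn:relative_overlap} of $p_r(I,I')$: the contributions with $\lambda_m\in(I')^c$ are damped by $q$, while those with $\lambda_m\in I'\setminus\{\lambda_0\}$ have total weight at most $p_0\bigl(1-p_r(I,I')\bigr)/p_r(I,I')$. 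With $\delta=\Theta(\sqrt{1-p_r(I,I')})$ and $q=\Theta(p_0\delta^2)$, both contaminations are $O(p_0\delta^2)$, so the filtered data has essentially the structural form treated in \cref{prop:qcels_simple_full}, with $p_0$ playing the role of the overall amplitude (not of the overlap) and a relative residual of order $\delta^2$.

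Concretely the key steps are as follows. (i) Apply \cref{lem:bound_of_En_q}, equation \eqref{eqn:Enbound_q}, using $\min\{N,N_s\}=\Omega(p_0^{-2}\mathcal{F}^2\mathrm{polylog})$, to control the average magnitude $|\overline{E}_{\theta^*,q}|+|\overline{E}_{\lambda_0,q}|\le\xi p_0$ with probability $1-\eta/2$; then invoke \cref{lem:bound_of_En_q}, equation \eqref{eqn:Enbound_new_q}, with the sample bound $NN_s=\Omega(p_0^{-2}\xi^{-2}\mathcal{F}^2\mathrm{polylog})$, to obtain a uniform-in-$\theta$ control on a small window $[\lambda_0-\rho/T,\lambda_0+\rho/T]$. (ii) Lower-bound $|f_q(\lambda_0)|\ge p_0 N\bigl(1-O(\delta^2)-\xi\bigr)$ using $|F_q(\lambda_0)|\ge 1-q$ and the residual bound above, mirroring \eqref{eqn:lowerboundgbnew2}. (iii) Upper-bound $|f_q(\theta^*)|\le p_0(1+q)\pi/(R_0 N)+O(p_0\delta^2)+p_0\xi$, where $R_0=|(\lambda_0-\theta^*)\tau\bmod(-\pi,\pi]|$, using the Dirichlet bound \eqref{eqn:expinequal_1} on the $m=0$ term and the trivial estimate $|(e^{iaN}-1)/(e^{ia}-1)|\le N$ on the remaining terms. (iv) Combine (ii) and (iii) via $|f_q(\theta^*)|\ge|f_q(\lambda_0)|$ to obtain the rough bound $R_0 N\le\pi/(1-\xi-\delta^2/200)$, which is the analogue of \eqref{eqn:loosebound_D0}. (v) Invoke monotonicity of $\sin(R_0 N/2)/\sin(R_0/2)$ on $(0,\pi/N]$ together with hypothesis \eqref{eqn:pre_small_p0_0} and the comparison $\sin(\delta/2)/\sin(\delta/(2N))\le 2N\sin(\delta/2)/(\cos(\delta/10)\delta)$, used at the end of the proof of \cref{prop:qcels_simple_full}, to refine the rough estimate to $R_0\le\delta/N$, which is precisely \eqref{eqn:distance_small_p0_lemma1}.

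The main obstacle is accounting for the $p_0$ prefactor correctly in the tail bounds of \cref{lem:bound_of_En_q}: the error $|\overline{E}_{\theta,q}|$ must be shown to be of relative size $\xi$ compared with the clean amplitude $p_0$, not absolute size, because the cancellation in step (iv) is driven by the ratio $|\overline{E}_{\theta,q}|/p_0$. This is what forces the $p_0^{-2}\mathcal{F}^2$ prefactor in the sample complexity \eqref{eqn:pre_small_p0}: each $Z_{k,n,q}$ has magnitude bounded by $\mathcal{F}=\sum_l|\hat F_{l,q}|$ due to the renormalization in \cref{alg:main2}, so a sub-Gaussian concentration argument yields the target relative bound only when $NN_s\cdot(p_0\xi)^2\gtrsim\mathcal{F}^2\log(\cdots)$. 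A secondary subtlety is that the preconstant $\alpha$ that appeared in \eqref{eqn:prerq} disappears here, because after filtering the offending eigenvalues are suppressed by either $q$ or $1-p_r(I,I')$ rather than merely by the tail of the Dirichlet kernel; this lets the threshold condition \eqref{eqn:pre_small_p0_0} take the cleaner form $1/(1-\xi-\delta^2/200)\le\delta\cos(\delta/10)/(2\sin(\delta/2))$. Once these bookkeeping points are in place, the final inequality closes by the same Dirichlet-kernel comparison as in \cref{prop:qcels_simple_full}, and the refinement argument of \cref{lem:2_new,lem:3_new} is not needed at the level of this lemma.
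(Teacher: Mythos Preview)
Your proposal is correct and follows essentially the same route as the paper: reduce to maximizing $|f_q(\theta)|$, package the filter residual plus sampling error into a single $G_{n,q}/p_0$-type term of size $O(\delta^2)+|E_{n,q}|/p_0$, invoke \cref{lem:bound_of_En_q} in two stages (crude magnitude bound from $\min\{N,N_s\}$, then the uniform-in-window bound from $NN_s$), obtain the loose bound $R_0\le\pi/(cN)$, and close with the Dirichlet-kernel monotonicity comparison against \eqref{eqn:pre_small_p0_0}. One small bookkeeping slip: in your step (i), equation \eqref{eqn:Enbound_q} gives the crude bound $\frac{1}{N}\sum|E_{n,q}|\le 10^{-3}p_0$ (not $\xi p_0$), which is what first places $\theta^*$ in a window; only afterwards does \eqref{eqn:Enbound_new_q} with $\rho=\pi/0.9$ yield the sharper $\xi$-level control you need in step (iv)---the paper's proof makes this two-stage ordering explicit.
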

\begin{proof}[Proof of \cref{lem:small_p0}.]
$R_0=\left|(\lambda_0-\theta^*)\tau\ \mathrm{mod}\ (-\pi,\pi]\right|$. Recall
\[
G_{n,q}=Z_{n,q}-p_0\exp(i \lambda_0 n\tau)=E_{n,q}+p_0(F_q(\lambda_0)-1)\exp(-i\lambda_0n\tau)+\sum^{M-1}_{k=1}p_kF_q(\lambda_k)\exp(-i\lambda_kn\tau)\,.
\]
and
\begin{equation}\label{eqn:inequality_G}
|G_{n,q}|\leq |E_{n,q}|+q+\frac{\left(1-p_r(I,I')\right)p_0}{p_r(I,I')}\,.
\end{equation}
Notice
\[
\begin{aligned}
    &\min_{r\in\mathbb{C},\theta\in\mathbb{R}}\frac{1}{N}\sum^{N-1}_{n=0}\left|Z_{n,q}-r\exp(-i\theta n\tau)\right|^2\\
    =&\min_{r\in\mathbb{C},\theta\in\mathbb{R}}\frac{1}{N}\sum^{N-1}_{n=0}\left|p_0\exp(-i\lambda_0n\tau)+G_{n,q}-r\exp(-i\theta n\tau)\right|^2\\
    =&\min_{r\in\mathbb{C},\theta\in\mathbb{R}}\frac{1}{N}\sum^{N-1}_{n=0}\left|\exp(-i\lambda_0n\tau)+\frac{G_{n,q}}{p_0}-\frac{r}{p_0}\exp(-i\theta n\tau)\right|^2
\end{aligned}
\]
Similar to the proof of \cref{prop:qcels_simple_full}, this minimization problem is equivalent to maximizing the magnitude of the following function in $\theta$:
\begin{equation}\label{eqn:f2}
\begin{aligned}
f(\theta)=\frac{\exp(i (\theta-\lambda_0) N\tau)-1}{\exp(i (\theta-\lambda_0)\tau)-1}+\sum^{N-1}_{n=1}\frac{G_{n,q}}{p_0}\exp(i\theta n\tau)\,.
\end{aligned}
\end{equation}
First, using \cref{eqn:Enbound_q,eqn:pre_small_p0}, 
\[
\mathbb{P}\left(\frac{|E_{n,q}|}{p_0}> 5*10^{-4}\right)\leq \eta/2.
\]
Combining this with \eqref{eqn:inequality_G}, $q=\Theta(p_0\delta^2)$ and the assumption that $1-p_r(I,I')$ is small enough, we can have 
\[
\mathbb{P}\left(\frac{|G_{n,q}|}{p_0}> 10^{-3}\right)\leq \eta/2\,.
\]

When $\frac{|G_{n,q}|}{p_0}\leq 10^{-3}$, similar to the proof of \cref{prop:qcels_simple_full}, we have $R_0\leq \frac{\pi}{0.9N}$. This implies that, with probability $1-\eta/2$,
\[
\frac{|G_{n,q}|}{p_0}\leq 10^{-3},\quad R_0\leq \frac{\pi}{0.9N}\,.
\]
The second inequality gives us a loose bound for $R_0$. Combining these two inequalities with \eqref{eqn:Enbound_new_q} ($\rho=\pi/0.9$), the first inequality of \eqref{eqn:pre_small_p0}, and $\Theta\left(\sqrt{1-p_r(I,I')}\right)\leq \delta\leq 4$, we further have
\[
    \mathbb{P}\left(\left|\frac{1}{N}\sum^{N-1}_{n=1}\frac{G_{n,q}}{p_0}\exp(i\theta^* n\tau)\right|+\left|\frac{1}{N}\sum^{N-1}_{n=1}\frac{G_{n,q}}{p_0}\exp(i\lambda_0 n\tau)\right|\geq \delta^2/200+\xi\right)\leq \eta\,.
\]

Finally, similar to the proof of \cref{prop:qcels_simple_full}, because
\[
\frac{\sin(\delta/2)}{\sin(\delta/(2N)}\leq \frac{2\sin(\delta/2)N}{\delta\cos(\delta/10)}\leq N(1-\xi-\delta^2/200)\,,
\]
we have $\mathbb{P}\left(R_0\leq\frac{\delta}{N}\right)\geq 1-\eta$, which implies \eqref{eqn:distance_small_p0_lemma1}.
\end{proof}

Similar to the discussion in Appendix \ref{sec:pf_prop}, we can not directly prove \cref{thm:qcels_small_p0_full} using  \cref{lem:small_p0}. We need to to use a different way to bound the expectation error $E_{n,q}$ to reduce the scaling of $N,N_s$ with respect to $\delta$. First, we show the following lemma similar to \cref{lem:2_new}: 

\begin{lem}\label{lem:2}
Define $T=N\tau$ and assume $p_r(I,I')$ is close enough to $1$. Given $\Theta\left(\sqrt{1-p_r(I,I')}\right)\leq \delta\leq 4$, $0<\eta<1/2$, and small enough $\xi>0$. Assume \cref{lem:small_p0} equations \eqref{eqn:pre_small_p0_0}, \eqref{eqn:pre_small_p0} hold and there exists $0<\xi_1<\xi$ such that
\begin{equation}\label{eqn:prerq_new1.5}
\frac{1}{1-\sqrt{2}(\xi_1+\xi\delta/2+\delta^2/200)}< \frac{\delta\cos(\delta/10)}{2\sin(\delta/2)}\,.
\end{equation}
Let $\Theta\left(\sqrt{1-p_r(I,I')}\right)\leq  \delta_{\mathrm{new}}<\delta$  satisfy
\[
\frac{1}{1-\sqrt{2}(\xi_1+\xi\delta_{\mathrm{new}}/2+\delta^2_{\mathrm{new}}/200)}\leq \frac{\delta_{\mathrm{new}}\cos(\delta_{\mathrm{new}}/10)}{2\sin(\delta_{\mathrm{new}}/2)}\,.
\]
If $q=\Theta(p_0\delta^2_{\mathrm{new}})$ and
\begin{equation}\label{eqn:prerq_new3}
NN_s= \Omega\left(p^{-2}_0\delta^2\xi^{-2}_1\mathcal{F}^2\mathrm{polylog}\left(\xi^{-1}_{1}\eta^{-1}p^{-1}_0\mathcal{F}\right)\right)\,.
\end{equation}
Then
\[
\mathbb{P}\left(\left|(\theta^*-\lambda_0)\ \mathrm{mod}\ [-\pi/\tau,\pi/\tau] \right|<\frac{\delta_{\mathrm{new}}}{T}\right)\geq 1-\eta\,,
\]
\end{lem}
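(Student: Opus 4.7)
The strategy is to mimic the refined-estimate argument in the proof of \cref{lem:2_new}, adapting every step to the filtered input data produced by Algorithm~\ref{alg:main2}. First, I would invoke \cref{lem:small_p0} with parameters $(\delta,\xi)$ to obtain, on an event of probability at least $1-\eta/2$, the coarse localization $R_0 := |(\lambda_0-\theta^*)\tau \bmod (-\pi,\pi]| \leq \delta/N$, together with the rough noise bound $|\overline{E}_{\theta^*,q}| + |\overline{E}_{\lambda_0,q}| \leq p_0\xi$. Then I would apply the refined concentration bound \cref{lem:bound_of_En_q} equation \eqref{eqn:Enbound_new2_q} with $\rho=\delta$; the sample-size condition \eqref{eqn:prerq_new3} is tailored precisely so that $\sup_{\theta\in[\lambda_0-\delta/T,\lambda_0+\delta/T]}|\overline{E}_{\theta,q}-\overline{E}_{\lambda_0,q}| \leq p_0\xi_1$ holds with probability $\geq 1-\eta/2$. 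A union bound yields a good event of probability $\geq 1-\eta$ on which all three estimates hold simultaneously.

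Next, using the reformulation \eqref{eqn:f2}, the relation $|f(\theta^*)| \geq |f(\lambda_0)|$ drives the algebraic improvement. The filter construction \eqref{eqn:require_f} gives the deterministic bound $|G_{n,q} - E_{n,q}| \leq q + (1-p_r(I,I'))p_0/p_r(I,I')$, which under the hypotheses $q=\Theta(p_0\delta_{\mathrm{new}}^2)$ and $\delta=\Theta(\sqrt{1-p_r(I,I')})$ is $O(\delta^2 p_0)$ and can be absorbed into the constant ``$\delta^2/200$'' appearing in \eqref{eqn:prerq_new1.5}. The genuinely delicate piece is the Monte Carlo noise, which I would split as $\overline{E}_{\theta^*,q} = (\overline{E}_{\theta^*,q}-\overline{E}_{\lambda_0,q}) + \overline{E}_{\lambda_0,q}$: the first difference is controlled by $p_0\xi_1$ on the good event, while the second is a common shift that cancels in the appropriate magnitude comparison up to a factor $|\exp(iR_0(N-1)/2)-1|\leq \delta/2$, contributing only a $\xi\delta/2$ term.

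Putting these estimates into the inequality $|f(\theta^*)| \geq |f(\lambda_0)|$ and applying the Lipschitz trick from the proof of \cref{lem:2_new} — namely that $\partial_x\sqrt{(x+a)^2+b^2}\geq 1/\sqrt{2}$ when $x+a>b\geq 0$ — I obtain the sharper inequality $\frac{\sin(R_0 N/2)}{N\sin(R_0/2)} \geq 1 - \sqrt{2}\bigl(\xi_1 + \xi\delta/2 + \delta^2/200\bigr)$. Hypothesis \eqref{eqn:prerq_new1.5} together with the monotonicity of $x\mapsto \sin(Nx/2)/(N\sin(x/2))$ on $(0,\pi/N]$ then gives $R_0\leq \delta_1/N$ for some $\delta_1<\delta$. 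Iterating this refinement — each improved bound on $R_0$ lets one replace the ``$\xi\delta/2$'' factor by ``$\xi\delta_j/2$'' and the filter-floor ``$\delta^2/200$'' eventually by ``$\delta_{\mathrm{new}}^2/200$'' — produces a monotone decreasing sequence $\delta > \delta_1 > \delta_2 > \cdots$ that converges to the unique fixed point $\delta_{\mathrm{new}}$ characterized in the statement. Uniqueness follows from the concavity of $\sin(Nx/2)/(N\sin(x/2))$ and the linearity of the right-hand side in $\delta_{\mathrm{new}}$, matching the argument at the end of the proof of \cref{lem:2_new}.

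The main obstacle is the careful bookkeeping of the three distinct error sources — the Monte Carlo noise $|\overline{E}_{\cdot,q}|$ (which is $\theta$-dependent and benefits from the Lipschitz cancellation when $\theta$ is close to $\lambda_0$), the deterministic filter approximation error $q$, and the leakage term $(1-p_r(I,I'))p_0/p_r(I,I')$ — ensuring that only the Monte Carlo part couples to $R_0$ through the cancellation, while the other two remain bounded uniformly by $\delta^2/200$ throughout the iteration. The delicate point is to verify that the choice $q=\Theta(p_0\delta_{\mathrm{new}}^2)$, combined with $\delta_{\mathrm{new}}\geq \Theta(\sqrt{1-p_r(I,I')})$, keeps the filter-induced floor strictly below the Lipschitz-driven gain at every iterate, so that the fixed-point scheme does not stall above $\delta_{\mathrm{new}}$.
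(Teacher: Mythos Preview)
Your proposal is correct and follows essentially the same route as the paper's proof: invoke \cref{lem:small_p0} and \cref{lem:bound_of_En_q} to establish, with probability $\ge 1-\eta$, the three estimates $R_0\le\delta/N$, $|\overline{E}_{\theta^*,q}|+|\overline{E}_{\lambda_0,q}|\le p_0\xi$, and $|\overline{E}_{\theta^*,q}-\overline{E}_{\lambda_0,q}|\le p_0\xi_1$; then compare $|f(\theta^*)|$ with $|f(\lambda_0)|$ via the reformulation \eqref{eqn:f2}, absorb the filter/leakage terms into the $\delta^2/200$ budget, apply the $\partial_x\sqrt{(x+a)^2+b^2}\ge 1/\sqrt{2}$ trick, and iterate down to $\delta_{\mathrm{new}}$. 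Your identification of the $\xi\delta/2$ term as arising from rotating the common shift $\overline{E}_{\lambda_0,q}$ by the phase $\exp(iR_0(N-1)/2)$, and your bookkeeping of the three error sources, match the paper's argument; the only cosmetic difference is that the paper splits the initial probability as $\eta/4+\eta/4+\eta/2$ rather than $\eta/2+\eta/2$.
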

\begin{proof}[Proof of \cref{lem:2}] Define $R_0=\left|(\lambda_0-\theta^*)\tau\ \mathrm{mod}\ (-\pi,\pi]\right|$. According to Lemmas \ref{lem:bound_of_En_q} and \ref{lem:small_p0}, we have $\mathbb{P}\left(R_0<\frac{\delta}{N}\right)\geq 1-\eta/4$ and 
\[
\mathbb{P}\left(\left|\overline{E}_{\theta^*}\right|+\left|\overline{E}_{\lambda_0}\right|\geq p_0\xi\right)\leq \eta/4\,.
\]
Combining \cref{lem:bound_of_En_q} \eqref{eqn:Enbound_new2_q} with \eqref{eqn:prerq_new3}, we have
\[
    \mathbb{P}\left(\sup_{\theta\in[\lambda_0-\frac{\delta}{T},\lambda_0+\frac{\delta}{T}]}\left|\overline{E}_{\theta}-\overline{E}_{\lambda_0}\right|\geq p_0\xi_1\right)\leq \eta/2\,.
\]
Thus, with probability $1-\eta$, we have
\begin{equation}\label{eqn:assumption}
R_0<\frac{\delta}{N},\quad \left|\overline{E}_{\theta^*}-\overline{E}_{\lambda_0}\right|< p_0\xi_1\,,
\end{equation}
and
\begin{equation}\label{eqn:assumption_1}
\left|\overline{E}_{\theta}\right|+\left|\overline{E}_{\lambda_0}\right|\leq p_0\xi\,.
\end{equation}
It suffices to prove $\left|(\theta^*-\lambda_0)\ \mathrm{mod}\ [-\pi/\tau,\pi/\tau] \right|<\frac{\delta^\star}{T}$ assuming \eqref{eqn:assumption} and \eqref{eqn:assumption_1}.

Because $\theta^*$ is the maximal point of \eqref{eqn:f2}, we have
\[
\begin{aligned}
&\left|\sum^{M-1}_{m=0}\frac{p_mF_q(\lambda_m)}{p_0} \frac{\exp(i (\theta^*-\lambda_m) N\tau)-1}{\exp(i (\theta^*-\lambda_m)\tau)-1}+\frac{N\overline{E}_{\theta^*}}{p_0}\right|\\
\geq &\left|N+\sum^{N-1}_{n=0}\frac{E_n}{p_0}\exp(i\lambda_0n\tau)\right|-\frac{\delta^2N}{400}\,.
\end{aligned}
\]
where we use $q\leq p_0\delta^2/1600$ and  $\frac{1-p_r(I,I')}{p_r(I,I')}\leq \delta^2/800$ in the last inequality.

Also, using \eqref{eqn:assumption}, we have
\[
\begin{aligned}
&\left|\sum^{M-1}_{m=0}\frac{p_mF_q(\lambda_m)}{p_0} \frac{\exp(i (\theta^*-\lambda_m) N\tau)-1}{\exp(i (\theta^*-\lambda_m)\tau)-1}+\frac{N\overline{E}_{\theta^*}}{p_0}\right|\\
\leq &\left| \sum^{N-1}_{n=0}\exp(i (\theta^*-\lambda_0)n\tau)+\sum^{N-1}_{n=0}\frac{E_n}{p_0}\exp(i\lambda_0n\tau)\right|+\frac{\delta^2N}{400}+\xi_1N\\
=&\left| \frac{\sin(R_0N/2)}{\sin(R_0/2)}\exp\left(iR_0(N-1)/2\right)+\sum^{N-1}_{n=0}\frac{E_n}{p_0}\exp(i\lambda_0n\tau)\right|+\frac{\delta^2N}{400}+\xi_1N\\
\leq &\left| \frac{\sin(R_0N/2)}{\sin(R_0/2)}+\sum^{N-1}_{n=0}\frac{E_n}{p_0}\exp(i\lambda_0n\tau)\right|+(\xi_1+\xi\delta/2+\delta^2/400)N\,,
\end{aligned}
\]
where we use $\left|\overline{E}_{\theta^*}-\overline{E}_{\lambda_0}\right|< p_0\xi_1$ in the first inequality, $|\exp\left(iR_0(N-1)/2\right)-1|\leq \delta/2$ and $\left|\sum^{N-1}_{n=0}\frac{E_n}{p_0}\exp(i\lambda_0n\tau)\right|\leq N\xi$ in the second inequality.

Assume 
\[
\sum^{N-1}_{n=0}E_n\exp(i\lambda_0\tau)=N\left|\overline{E}_{\lambda_0}\right|\left(\cos(\theta_E)+i\sin(\theta_E)\right)\,,
\]
where $\overline{E}_{\lambda_0}=\frac{1}{N}\sum^{N-1}_{n=0}E_n\exp(i\lambda_0 n\tau)$.

Notice $\frac{\left|\overline{E}_{\lambda_0}\right|}{p_0}\leq \xi<\frac{1}{\pi}<\frac{1}{2N}\frac{\sin(R_0N/2)}{\sin(R_0/2)}$ according to the second inequality of \eqref{eqn:assumption}. Then,
\begin{equation}\label{eqn:prepare}
\begin{aligned}
&\sqrt{\left(\frac{\sin(R_0N/2)}{N\sin(R_0/2)}+\frac{\left|\overline{E}_{\lambda_0}\right|}{p_0}\cos(\theta_E)\right)^2+\left(\frac{\left|\overline{E}_{\lambda_0}\right|}{p_0}\sin(\theta_E)\right)^2}+\xi_1+\xi\delta/2+\delta^2/200\\
\geq &\sqrt{\left(1+\frac{\left|\overline{E}_{\lambda_0}\right|}{p_0}\cos(\theta_E)\right)^2+\left(\frac{\left|\overline{E}_{\lambda_0}\right|}{p_0}\sin(\theta_E)\right)^2}\,.
\end{aligned}
\end{equation}
Because $\left(\sqrt{(x+a)^2+b^2}\right)'\geq \frac{1}{\sqrt{2}}$ when $x,a,b>0$ and $x+a>b$, we obtain
\[
\begin{aligned}
&\sqrt{\left(1+\frac{\left|\overline{E}_{\lambda_0}\right|}{p_0}\cos(\theta_E)\right)^2+\left(\frac{\left|\overline{E}_{\lambda_0}\right|}{p_0}\sin(\theta_E)\right)^2}\\
&-\sqrt{\left(\frac{\sin(R_0N/2)}{N\sin(R_0/2)}+\frac{\left|\overline{E}_{\lambda_0}\right|}{p_0}\cos(\theta_E)\right)^2+\left(\frac{\left|\overline{E}_{\lambda_0}\right|}{p_0}\sin(\theta_E)\right)^2}\\
\geq & \left(1-\frac{\sin(R_0N/2)}{N\sin(R_0/2)}\right)\frac{1}{\sqrt{2}}\,,
\end{aligned}
\]
where we see $a=\left|\overline{E}_{\lambda_0}\right|\cos(\theta_E)$ and $b=\left|\overline{E}_{\lambda_0}\right|\sin(\theta_E)$. Plugging this back into \eqref{eqn:prepare}, we obtain
\begin{equation}\label{eqn:start}
\frac{\sin(R_0N/2)}{N\sin(R_0/2)}\geq 1- \sqrt{2}(\xi_1+\xi\delta/2+\delta^2/200)\,,
\end{equation}
According to \eqref{eqn:prerq_new1.5}, we first have
\[
\frac{\sin(\delta/2)}{N\sin(\delta/(2N))}< \frac{2\sin(\delta/2)}{\delta\cos(\delta/10)}< 1- \sqrt{2}(\xi_1+\xi\delta/2+\delta^2/200)\,.
\]
Thus, there exists $\delta_1<\delta$ such that 
\[
\frac{\sin(\delta_1/2)}{N\sin(\delta_1/(2N))}= 1- \sqrt{2}(\xi_1+\xi\delta/2+\delta^2/200)\,.
\]
Using similar argument as the proof of \cref{prop:qcels_simple_full}, we have $|R_0|\leq \frac{\delta_1}{N}$. Then, similar to the previous argument, we can improve \eqref{eqn:start}, meaning that $R_0$ should satisfy the following inequality
\[
\frac{\sin(R_0N/2)}{N\sin(R_0/2)}\geq 1- \sqrt{2}(\xi_1+\xi\delta_1/2+\delta^2_1/200)
\]
Because
\[
1- \sqrt{2}(\xi_1+\xi\delta_1/2+\delta^2_1/200)>1- \sqrt{2}(\xi_1+\xi\delta/2+\delta^2/200)=\frac{\sin(\delta_1/2)}{N\sin(\delta_1/(2N))}\,,
\]
there exists $\delta_2<\delta_1$ such that
\[
\frac{\sin(\delta_2/2)}{N\sin(\delta_2/(2N))}= 1- \sqrt{2}(\xi_1+\xi\delta_1/2+\delta^2_1/200)
\]
and $|R_0|\leq \frac{\delta_2}{N}$. Because $h_1(x)=\frac{\sin(x/2)}{N\sin(x/(2N))}+\frac{\sqrt{2}x^2}{200}$ is a concave function ($x\in[0,\pi]$) with $h_1(0)=1,h'_1(0)=0$ and $h_2(x)=1- \sqrt{2}(\xi_1+\xi x/2)$ is a linearly decreasing function that satisfies $0<h_2(0)<1$. Thus, if $\delta_{\mathrm{new}}$ satisfies
\[
\Theta\left(\sqrt{1-p_r(I,I')}\right)\leq \delta_{\mathrm{new}}<\delta\,,\]
and
\begin{equation}\label{eqn:final}
\frac{\sin(\delta_{\mathrm{new}}/2)}{N\sin(\delta_{\mathrm{new}}/(2N))}< \frac{2\sin(\delta_{\mathrm{new}}/2)}{\delta_{\mathrm{new}}\cos(\delta_{\mathrm{new}}/10)}<1- \sqrt{2}(\xi_1+\xi\delta_{\mathrm{new}}/2+\delta^2_{\mathrm{new}}/200)\,,
\end{equation}
we must have $|R_0|\leq \frac{\delta_{\mathrm{new}}}{N}$.
\end{proof}
Similar to Appendix \ref{sec:pf_prop}, the next step is to do the iteration using the following lemma:
\begin{lem}\label{lem:1_new}
Define $T=N\tau$ and assume $p_r(I,I')$ is close enough to $1$. Given $0<\eta<1/2$, an integer $M>1$, a decreasing sequence $\{\xi_i\}^M_{i=0}$ with small enough $\xi_0$, and a decreasing sequence $\{\delta_i\}^M_{i=0}$ with $\delta_0\leq 4$ and $\delta_M\geq \Theta\left(\sqrt{1-p_r(I,I')}\right)$. Assume $\xi_0,\delta_0,N,N_s$ satisfy the conditions in \cref{lem:small_p0}. If $q=\Theta(p_0\delta^2_M)$,
\begin{equation}\label{eqn:prerq_new225}
NN_s=\Omega\left(p^{-2}_0\delta^2_i\xi^{-2}_{i+1}\mathcal{F}^2\mathrm{polylog}\left(M\xi^{-1}_{i+1}\eta^{-1}p^{-1}_0\mathcal{F}\right)\right)\,,
\end{equation}
and
\begin{equation}\label{eqn:prerq_new24}
\frac{1}{1-\sqrt{2}(\xi_{i+1}+\xi_0\delta_{i+1}/2+\delta^2_{i+1}/200)}\leq  \frac{\delta_{i+1}\cos(\delta_{i+1}/10)}{2\sin(\delta_{i+1}/2)}
\end{equation}
for all $0\leq i\leq M-1$. Then, 
\begin{equation}\label{eqn:distance_new}
\mathbb{P}\left(\left|(\theta^*-\lambda_0)\ \mathrm{mod}\ [-\pi/\tau,\pi/\tau] \right|<\frac{\delta_M}{T}\right)\geq 1-\eta\,.
\end{equation}
\end{lem}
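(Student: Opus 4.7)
The plan is to prove Lemma~\ref{lem:1_new} by induction on $i$, chaining the base estimate from Lemma~\ref{lem:small_p0} with $M$ successive applications of Lemma~\ref{lem:2}. At each stage the current bound $|(\theta^*-\lambda_0)\bmod[-\pi/\tau,\pi/\tau]|<\delta_i/T$ feeds into the next refinement, producing the tighter bound $\delta_{i+1}/T$. The structure mirrors the argument behind Lemma~\ref{lem:3_new} in the large-$p_0$ regime, with Lemma~\ref{lem:2} replacing Lemma~\ref{lem:2_new}, and with $q=\Theta(p_0\delta_M^2)$ chosen small enough to absorb the filter-approximation error uniformly across all iterations.

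For the base case, the hypotheses of Lemma~\ref{lem:1_new} state that $\xi_0,\delta_0,N,N_s$ satisfy the prerequisites of Lemma~\ref{lem:small_p0}. Rescaling the failure probability to $\eta/(M+1)$ yields $\mathbb{P}(|(\theta^*-\lambda_0)\bmod[-\pi/\tau,\pi/\tau]|<\delta_0/T)\geq 1-\eta/(M+1)$. For the inductive step, I would invoke Lemma~\ref{lem:2} with $\delta=\delta_i$, $\xi=\xi_0$, $\xi_1=\xi_{i+1}$, and target $\delta_{\mathrm{new}}=\delta_{i+1}$. The hypothesis~\eqref{eqn:prerq_new24} at level $i+1$ gives exactly the inequality~\eqref{eqn:prerq_new1.5} required by Lemma~\ref{lem:2}, while~\eqref{eqn:prerq_new225} at level $i$ matches~\eqref{eqn:prerq_new3}. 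The finer tail bound~\eqref{eqn:Enbound_new2_q} from Lemma~\ref{lem:bound_of_En_q}, calibrated by the factor $\delta_i^2$ in~\eqref{eqn:prerq_new225}, delivers the refined error control $|\overline{E}_{\theta^*}-\overline{E}_{\lambda_0}|<p_0\xi_{i+1}$ on the shrinking interval of half-width $\delta_i/T$ to which the inductive hypothesis restricts $\theta^*$.

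Combining the $M$ inductive failures with the base case via a union bound gives cumulative failure probability at most $\eta$, producing~\eqref{eqn:distance_new}. I would also verify compatibility of $q$ across iterations: since $q=\Theta(p_0\delta_M^2)\leq\Theta(p_0\delta_{i+1}^2)$ and the filter tolerance appearing inside Lemma~\ref{lem:2} enters only as an upper bound, the single fixed choice $q=\Theta(p_0\delta_M^2)$ is admissible at every level. The prerequisite~\eqref{eqn:pre_small_p0_0} of Lemma~\ref{lem:small_p0} at level $i\geq 1$ follows from~\eqref{eqn:prerq_new24} at index $i$ together with $\xi_i\leq\xi_0$, so no separate assumption is needed beyond the base-case hypothesis.

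The main obstacle will be the probability bookkeeping: all iterations reuse the same random dataset $\{Z_{n,q}\}$, so the events $\mathcal{A}_i=\{|(\theta^*-\lambda_0)\bmod[-\pi/\tau,\pi/\tau]|<\delta_i/T\}$ are not independent. One must carefully introduce, at each level, the ``good'' event that the expectation errors $\overline{E}_{\theta}$ and their local increments are simultaneously small on the interval $[\lambda_0-\delta_i/T,\lambda_0+\delta_i/T]$, and then union-bound these $M$ good events against $\mathcal{A}_{i-1}$. The per-step failure budget $\eta/(M+1)$ contributes an extra $\log M$ factor that gets absorbed into the $\mathrm{polylog}(M\cdots)$ overhead already built into~\eqref{eqn:prerq_new225}, ensuring that the sample-complexity scaling is preserved.
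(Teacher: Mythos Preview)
Your overall strategy---induction on $i$, chaining Lemma~\ref{lem:small_p0} with repeated refinements in the spirit of Lemma~\ref{lem:2}, and union-bounding the per-step failure probabilities on the shared dataset---is exactly the paper's approach, and your identification of the probability bookkeeping as the delicate point is correct.

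There is, however, a genuine gap in the inductive step. You propose to invoke Lemma~\ref{lem:2} as a black box at level $i$ with $\delta=\delta_i$ and $\xi=\xi_0$. But Lemma~\ref{lem:2} carries the hypotheses of Lemma~\ref{lem:small_p0}; in particular~\eqref{eqn:pre_small_p0_0} with those parameters reads
\[
\frac{1}{1-\xi_0-\delta_i^2/200}\leq \frac{\delta_i\cos(\delta_i/10)}{2\sin(\delta_i/2)}.
\]
Your claim that this follows from~\eqref{eqn:prerq_new24} together with $\xi_i\leq\xi_0$ is false: for small $\delta_i$ the right-hand side is $1+\Theta(\delta_i^2)$, forcing $\xi_0=\mathcal{O}(\delta_i^2)$, whereas in the intended application (the proof of Theorem~\ref{thm:qcels_small_p0_full}) one takes $\xi_0=\Theta(\delta_0^2)\gg\delta_i^2$ for $i\geq 1$. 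A direct check with, say, $\xi_0=10^{-2}$, $\xi_i=10^{-4}$, $\delta_i=0.1$ shows the implication fails.

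The paper avoids this by \emph{not} re-invoking Lemma~\ref{lem:2} as a black box beyond the first step. The only role of the Lemma~\ref{lem:small_p0} hypotheses inside Lemma~\ref{lem:2} is to furnish the a~priori localization $R_0<\delta/N$ and the coarse bound $|\overline{E}_{\theta^*}|+|\overline{E}_{\lambda_0}|<p_0\xi_0$. The paper establishes both once (getting to $\delta_1$ and spending a fixed budget $3\eta/4$), and then at each subsequent level $i\geq 1$ it substitutes the inductive hypothesis $R_0<\delta_i/N$ for what would otherwise require re-verifying~\eqref{eqn:pre_small_p0_0}, spends $\eta/(4(M-1))$ on the increment bound~\eqref{eqn:Enbound_new2_q}, and re-runs only the analytic core of Lemma~\ref{lem:2} (the comparison of $|f(\theta^*)|$ with $|f(\lambda_0)|$ leading to~\eqref{eqn:start}). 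Your fix is to do the same: replace ``invoke Lemma~\ref{lem:2}'' with ``repeat the analytic part of its proof, using the inductive localization $R_0<\delta_i/N$ in place of the Lemma~\ref{lem:small_p0} step.''
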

\begin{proof}[Proof of \cref{lem:1_new}] First, according to Lemmas \ref{lem:small_p0} and \ref{lem:2}, we have
\[
\mathbb{P}\left(\left|(\theta^*-\lambda_0)\ \mathrm{mod}\ [-\pi/\tau,\pi/\tau] \right|<\frac{\delta_1}{T}\right)\geq 1-\eta/2,\quad \mathbb{P}\left(|\overline{E}_{\theta^*}|+|\overline{E}_{\lambda_0}|\geq p_0\xi_0\right)\leq \eta/4\,.
\]
Combining \eqref{eqn:prerq_new225} ($i=1$) with \cref{lem:bound_of_En_q} \eqref{eqn:Enbound_new2_q}, we obtain
 \[
    \mathbb{P}\left(\sup_{\theta\in[\lambda_0-\frac{\delta_1}{T},\lambda_0+\frac{\delta_1}{T}]}\left|\overline{E}_{\theta}-\overline{E}_{\lambda_0}\right|\geq p_0\xi_2\right)\leq \frac{\eta}{4(M-1)}\,.
\]
Thus, with probability $1-\left(\
\frac{3\eta}{4}+\frac{\eta}{4(M-1)}\right)$, 
\[
R_0<\frac{\delta_1}{N},\quad |\overline{E}_{\theta^*}|+|\overline{E}_{\lambda_0}|<p_0\xi_0,\quad \left|\overline{E}_{\theta^*}-\overline{E}_{\lambda_0}\right|< p_0\xi_2\,.
\]
Similar to the proof of \cref{lem:2}, these inequalities imply
\[
\mathbb{P}\left(\left|(\theta^*-\lambda_0)\ \mathrm{mod}\ [-\pi/\tau,\pi/\tau] \right|<\frac{\delta_{\mathrm{new}}}{T}\right)\geq 1-\left(\frac{3\eta}{4}+\frac{\eta}{4(M-1)}\right)\,,
\]
where $\delta_{\mathrm{new}}$ is the unique solution to the following equation:
\[
\frac{1}{1-\sqrt{2}(\xi_2+\xi_0\delta_{\mathrm{new}}/2+\delta^2_{\mathrm{new}}/200)}=\frac{N\sin(\delta_{\mathrm{new}}/(2N))}{\sin(\delta_{\mathrm{new}}/2)}\,.
\]
Because 
\[
\frac{1}{1-\sqrt{2}(\xi_2+\xi_0\delta_2/2+\delta^2_{2}/200)}\leq \frac{\delta_2\cos(\delta_2/10)}{2\sin(\delta_2/2)}<\frac{N\sin(\delta_{2}/(2N))}{\sin(\delta_{2}/2)}\,,
\]
we must have $\delta_2>\delta_{\mathrm{new}}$ and
\[
\mathbb{P}\left(\left|(\theta^*-\lambda_0)\ \mathrm{mod}\ [-\pi/\tau,\pi/\tau] \right|<\frac{\delta_2}{T}\right)\geq 1-\left(\frac{3\eta}{4}+\frac{\eta}{4(M-1)}\right)\,.
\]
Combining this with second inequality of \eqref{eqn:prerq_new225} ($i=2$), with probability $1-\left(\frac{3\eta}{4}+\frac{\eta}{2(M-1)}\right)$, we obtain
\[
R_0<\frac{\delta_2}{N},\quad |\overline{E}_{\theta^*}|+|\overline{E}_{\lambda_0}|<p_0\xi_0,\quad \left|\overline{E}_{\theta^*}-\overline{E}_{\lambda_0}\right|< p_0\xi_3\,,
\]
which implies
\[
\mathbb{P}\left(\left|(\theta^*-\lambda_0)\ \mathrm{mod}\ [-\pi/\tau,\pi/\tau] \right|<\frac{\delta_3}{T}\right)\geq 1-\left(\frac{3\eta}{4}+\frac{\eta}{2(M-1)}\right)\,.
\]
Doing this repetitively, we prove \eqref{eqn:distance_new}.
\end{proof}

Now, we are ready to prove \cref{thm:qcels_small_p0_full}.

\begin{proof}[Proof of \cref{thm:qcels_small_p0_full}] According to the conditions of \cref{lem:1_new}, the smallest $\delta_M$ that we can choose is $\Theta\left(\sqrt{1-p_r(I,I')}\right)$. For fixed  decreasing sequence $\{\delta_i\}^M_{i=0}$, according to \eqref{eqn:pre_small_p0_0} and \eqref{eqn:prerq_new24}, to make the depth smaller than $\delta_M/\epsilon$, we should choose
\[
 \xi_0=\mathcal{O}\left(\min\{\delta^2_0,\delta_M\}\right),\quad \xi_i=\mathcal{O}\left(\delta^2_i\right)
\]
for all $1\leq i\leq M$. Then, according to the first inequality of \eqref{eqn:pre_small_p0} and \eqref{eqn:prerq_new225}, we set
\begin{equation}\label{eqn:RQforN}
    NN_s=\widetilde{\Omega}\left(\max_{0\leq i\leq M-1}\left\{p^{-2}_0\xi^{-2},p^{-2}_0\delta^2_i\xi^{-2}_{i+1}\right\}\right)
\end{equation}
Minimizing \eqref{eqn:RQforN} in $\{\delta_i\}^{M-1}_{i=0},\{\xi_i\}^M_{i=0}$ with fixed $\delta_M$, a proper choice of these parameters should be 
\[
\xi_i=\Theta\left(\delta^2_i\right),\quad \delta_i=\Theta\left(\delta^{\frac{2-(1/2)^i}{2-(1/2)^M}}_M\right)
\]
for $0\leq i\leq M$. Thus, to ensure
\[
\mathbb{P}\left(\left|(\theta^*-\lambda_0)\ \mathrm{mod}\ [-\pi/\tau,\pi/\tau] \right|<\frac{\delta_M}{T}\right)\geq 1-\eta\,,
\]
we can choose $NN_s=\Omega\left(\delta^{-\frac{4}{2-(1/2)^M}}_M\mathrm{polylog}(M\delta^{-1}_M\eta^{-1}p^{-1}_0\mathcal{F})\right)$. Set $\zeta=\frac{1}{2^M}\frac{1}{2-(1/2)^M}$, we conclude the proof.
\end{proof}

\section{Additional estimates}
\subsection{Properties of \texorpdfstring{$\left|(\exp(i\theta N)-1)/(\exp(i\theta)-1)\right|$}{Lg} when \texorpdfstring{$\theta\in[0,\pi/N]$}{Lg}}\label{sec:newub}

In this section, we will show that $\left|\frac{\exp(i\theta N)-1}{\exp(i\theta)-1}\right|$ is a decreasing and concave function when $\theta\in[0,\pi/N]$.

First, we have the bound
\begin{equation}\label{eqn:gb}
\left|\frac{\exp(i\theta N)-1}{\exp(i\theta)-1}\right|=\left|\sum_{j=0}^{N-1} e^{ij\theta}\right|\le N.
\end{equation}

To study the other properties of this function, we write it as
\[
\left|\frac{\exp(i\theta N)-1}{\exp(i\theta)-1}\right|=\left|\frac{\sin(\theta N/2)}{\sin(\theta/2)}\right|\,.
\]
Then, it suffices to study the function
\[
g(\theta)=\frac{\sin(\theta N)}{\sin(\theta)}\,,\quad \theta\in [0,\pi/(2N)]\,.
\]

When $N=2$, we obtain $g(\theta)=2\cos(\theta)$ is decreasing and concave when $\theta\in [0,\pi/4]$. Notice
\[
\frac{\sin(\theta N)}{\sin(\theta)}=\cos((N-1)\theta)+\cos(\theta)\frac{\sin((N-1)\theta)}{\sin(\theta)}\,.
\]
Because $\cos((N-1)\theta)$ and $\cos(\theta)$ are decreasing and concave functions when $\theta\in [0,\pi/(2N)]$, using induction argument, $g(\theta)$ is also a decreasing and concave function when $\theta\in [0,\pi/(2N)]$. 

\subsection{Lower bound for  \texorpdfstring{$\lim_{\theta\rightarrow\lambda_0}|f(\theta)|$}{Lg} when \texorpdfstring{$N>2$}{Lg}}\label{sec:lbgb}
First, notice
\[
\lim_{\theta\rightarrow\lambda_0}\frac{\exp(i(\lambda_0-\theta)T)-1}{\exp(i(\lambda_0-\theta)\tau)-1}=N\,.
\]
Then,
\[
\lim_{\theta\rightarrow\lambda_0}\mathrm{Re}(f(\theta))\geq N+\mathrm{Re}\left(\sum^{M-1}_{k=1}\frac{p_k}{p_0}\frac{\exp(i(\lambda_0-\lambda_k)T)-1}{\exp(i(\lambda_0-\lambda_k)\tau)-1}\right)-\frac{\overline{E}}{p_0}N\,,
\]
which implies
\begin{equation}\label{eqn:lbm}
\begin{aligned}
    \lim_{\theta\rightarrow\lambda_0}|f(\theta)|\geq &\lim_{\theta\rightarrow\lambda_0}\mathrm{Re}(f(\theta))\\
\geq &\left(N+\mathrm{Re}\left(\sum^{M-1}_{k=1}\frac{p_k}{p_0}\frac{\exp(i(\lambda_0-\lambda_k)T)-1}{\exp(i(\lambda_0-\lambda_k)\tau)-1}\right)\right)-\frac{\overline{E}}{p_0}N\,.
\end{aligned}
\end{equation}

Let $\alpha=1+\max_{c\in(0,\pi/2]}\frac{\sin(c)}{\pi+c}$ be defined as in \cref{eqn:alpha_const}. Now, we prove the claim: 
\[
g(\theta)=\mathrm{Re}\left(\frac{\exp(i\theta N)-1}{\exp(i\theta)-1}\right)\geq -(\alpha-1) N\,.
\]
Assume not, meaning there exists $\widetilde{\theta}$ such that
\begin{equation}\label{eqn:lbr}
g\left(\widetilde{\theta}\right)<-(\alpha-1) N,
\end{equation}
then similar to the argument in Appendix \ref{sec:newub}, we first have $|\widetilde{\theta}|<\frac{2\pi^2}{3N}<\pi$. Notice
\[
\begin{aligned}
g\left(\widetilde{\theta}\right)&=\frac{[\cos(N\widetilde{\theta})-1][\cos(\widetilde{\theta})-1]+\sin(N\widetilde{\theta})\sin(\widetilde{\theta})}{(\cos(\widetilde{\theta})-1)^2+\sin^2(\widetilde{\theta})}\\
&=\frac{\cos((N-1)\widetilde{\theta}/2)\sin(N\widetilde{\theta}/2)}{\sin(\widetilde{\theta}/2)}\\
&=\cos^2((N-1)\widetilde{\theta}/2)+\frac{1}{2}\frac{\sin((N-1)\widetilde{\theta})\cos(\widetilde{\theta}/2)}{\sin(\widetilde{\theta}/2)}\\
&\geq \frac{1}{2}\frac{\sin((N-1)\widetilde{\theta})\cos(\widetilde{\theta}/2)}{\sin(\widetilde{\theta}/2)}\,.
\end{aligned}
\]
Since $g\left(\widetilde{\theta}\right)<0$, we must have $\frac{\sin((N-1)\widetilde{\theta})}{\sin(\widetilde{\theta}/2)}<0$, which implies $\left|\widetilde{\theta}\right|>\frac{\pi}{N-1}$. Let $c^*=(N-1)|\widetilde{\theta}|-\pi$, then $0<c<\pi/2$ and 
\[
\left|g\left(\widetilde{\theta}\right)\right|\leq \frac{1}{2}\frac{\sin(c)}{\tan((c+\pi)/(2(N-1)))}\leq (N-1)\frac{\sin(c^*)}{\pi+c^*}\leq (N-1)\left(\max_{c\in(0,\pi/2]}\frac{\sin(c)}{\pi+c}\right)\,,
\]
where we use $\tan(\theta)\geq \theta$ when $\theta\in [0,\pi/2]$ in the second inequality. This contradicts to the assumption \eqref{eqn:lbr}. Thus, we must have $g(\theta)\geq -\left(\max_{c\in(0,\pi/2]}\frac{\sin(c)}{\pi+c}\right)N=-(\alpha-1)N$, which finally implies
\begin{equation}\label{eqn:lbgb1}
\lim_{\theta\rightarrow\lambda_0}|f(\theta)|\geq \left(1-\left(\alpha-1\right)\frac{1-p_0}{p_0}-\frac{\overline{E}}{p_0}\right)N
\end{equation}
\end{document}